\newtheorem{theorem}{\textbf{Theorem}}
\newtheorem{definition}{Definition}
\newtheorem{lemma}{Lemma}
\newtheorem{corollary}{Corollary}
\newtheorem{proposition}{Proposition}
\newtheorem{remark}{Remark}
\newtheorem{condition}{Condition}
\def\BibTeX{{\rm B\kern-.05em{\sc i\kern-.025em b}\kern-.08em
    T\kern-.1667em\lower.7ex\hbox{E}\kern-.125emX}}
\begin{document}
\title{Stability Analysis of Distributed Estimators for Large-Scale Interconnected Systems: Time-Varying and Time-Invariant Cases}
\author{Zhongyao Hu, Bo Chen, \IEEEmembership{Senior~Member, IEEE}, Jianzheng Wang, Daniel W. C. Ho, \IEEEmembership{Life Fellow, IEEE}, Wen-An Zhang, \IEEEmembership{Senior~Member, IEEE}, and Li Yu, \IEEEmembership{Senior~Member, IEEE}
\thanks{Z. Hu, B. Chen, J. Wang, W. Zhang and L. Yu are with Department of Automation, Zhejiang University of Technology, Hangzhou 310023, China. (email: bchen@aliyun.com).}
\thanks{Daniel W. C. Ho is with Department of Mathematics, City University of Hong Kong, Hong Kong, SAR.}
}

\maketitle

\begin{abstract}
This paper studies a distributed estimation problem for time-varying/time-invariant large-scale interconnected systems (LISs). A fully distributed estimator is presented by recursively solving a distributed modified Riccati equation (DMRE) with decoupling variables. By partitioning the LIS based on the transition matrix's block structure, it turns out that the stability of the subsystem is independent of the global LIS if the decoupling variable is selected as the number of out-neighbors. Additionally, it is revealed that any LIS can be equivalently represented by a Markov system. Based on this insight, we show that the stability decoupling above can also be achieved if the decoupling variable equals the number of in-neighbors. Then, the distributed estimator is proved to be stable if the DMRE remains uniformly bounded. When the LIS is considered time-invariant, and by analyzing the spectral radius of a linear operator, it is proved that the DMRE is uniformly bounded if and only if a linear matrix inequality is feasible. Based on the boundedness result, we also show that the distributed estimator converges to a unique steady state for any initial condition. Finally, simulations verify the effectiveness of the proposed methods.
\end{abstract}

\begin{IEEEkeywords}
Distributed estimation, large-scale interconnected systems, stability anslysis.
\end{IEEEkeywords}

\section{Introduction}
\label{sec:introduction}
Large-scale interconnected systems (LISs), which consist of a large number of local subsystems distributed and interconnected in a spatial domain, can be frequently found in fields such as robotics \cite{Feddema1068004} and smart grids \cite{riverso2012hycon2}. To ensure that the fundamental control \cite{FARINA201838} and detection \cite{Boem8305620} tasks of LISs can be carried out successfully, so it is essential to design algorithms to accurately acquire system states. Theoretically, by joining all subsystems, the state of the whole system can be estimated by the classical Kalman filtering. However, the centralized method imposes a huge computation, storage and communication burden on the estimator due to the large number of subsystems, sensors, and parameters of LISs. In contrast, the distributed manner does not need to integrate all subsystems, but only requires each subsystem to collect information from its neighbor subsystems, and is therefore more suitable for solving the estimation problem of LISs than the centralized manner.

In this century, many distributed estimation methods have been proposed. \cite{Khan4547458} decomposed sparsely banded LISs into multiple subsystems with overlapping states and measurements and proposed a fusion method to combine overlapping information. Also, for the banded LIS, a moving horizon estimation algorithm was derived in \cite{Haber6553105} by the Chebyshev approximation method. Moreover, Kalman-type distributed estimators were proposed in \cite{CHEN2019228} and \cite{Chen9416784} for a class of sequential LISs without loops exist among subsystems. In \cite{ZHANG2023111144}, the sequential condition above was relaxed to allow for the existence of some special loops. Note that the topology of physical characteristics determine LISs and cannot be changed unilaterally. Therefore, the application scenarios of the methods in \cite{Khan4547458,Haber6553105,CHEN2019228,Chen9416784,ZHANG2023111144} are limited. In \cite{Wang9385997}, a linear minimum mean-square error distributed estimator was proposed for general LISs without topology constraints. Also for general LISs, \cite{Mu10364032} developed an optimization-based design scheme for stabilizing feedback systems using free-weighting matrix techniques. However, the methods in \cite{Wang9385997} and \cite{Mu10364032} require that the parameters of the global LIS be available to each local subsystem, which imposes a significant computational and storage burden. While a fully distributed algorithm was proposed in \cite{Zhang10480463}, its computation burden is still large since an optimization problem needs to be solved in real-time. Moreover, in \cite{Stefano1040844} and \cite{Riverso6760656}, the concept of robust positive invariance was utilized to design a distributed estimator. Nevertheless, these two works assumed that the measurement noise is zero, which is quite restrictive. \cite{FARINA2010910} and \cite{Roshany6075282} respectively designed distributed Moving horizon estimators and Kalman filters by assuming that the estimation error covariance matrices among subsystems are zero. Although this assumption facilitates the design, it may lead to inconsistent estimates and poor robustness \cite{UHLMANN2003201}. A plug-and-play distributed Kalman filter was proposed in \cite{Farina7762729}, which ensures the consistency of estimates by constructing upper bounds. Then, following the work in \cite{Farina7762729}, a more compact estimation method was proposed in \cite{FARINA2023100820} using linear matrix inequality (LMI) techniques.

It should be emphasized that although there have been a lot of works on the distributed estimation problems for LISs, they have some common limitations: i) Lack of discussion on time-varying LISs. In practice, the coupling and sampling frequency of the LIS are easily changed, which often makes the LIS time-varying \cite{Siljak1978LargeScaleDS}. However, only the time-invariant LIS is discussed in \cite{Haber6553105,ZHANG2023111144,Wang9385997,Mu10364032,FARINA2010910,Roshany6075282} and \cite{FARINA2023100820}. ii) The topology is not general. The methods in \cite{Khan4547458,Haber6553105,CHEN2019228,ZHANG2023111144,Chen9416784} are only suitable for some LISs with specific topologies, which limits their application scenarios. iii) The estimation methods are only semi-distributed. The methods in \cite{CHEN2019228,Wang9385997,Mu10364032}, and \cite{Stefano1040844} are only distributed in terms of communication, while their storage and computation are still centralized, i.e., local estimators therein depend on the global system model. iv) The communication and computation costs are relatively high. As mentioned before, the local estimators in \cite{CHEN2019228,Wang9385997,Mu10364032}, and \cite{Stefano1040844} require the global system model, and thus, their computational effort will be related to the scale of the LIS, which tends to be large. Moreover, the methods in \cite{Chen9416784,Wang9385997,Zhang10480463} and \cite{FARINA2023100820} need to solve optimization problems in real-time, which is also tedious. v) The stability analysis may need to be further refined. Although \cite{Khan4547458,Haber6553105,CHEN2019228,ZHANG2023111144,Chen9416784} provide stability analysis, they have strict topology assumptions. A necessary stability condition is derived in \cite{FARINA2023100820}, but the sufficient condition is not provided. Moreover, the convergence result in \cite{Farina7762729} is sensitive to initial conditions and requires invertible state transition matrices, which are strict in practical engineering.

In light of the preceding discussion, this paper aims to propose a fully distributed estimator for LISs without topology constraints while analyzing its stability. For the time-varying LIS, the main results are summarized as follows:
\begin{itemize}
\item A fully distributed estimator is formulated by recursively solving a distributed modified Riccati equation (DMRE) with decoupling variables.
\item By decomposing the LIS according to the block structure of the transition matrix, it is shown that the stability of each subsystem is related only to its neighbors if the decoupling variable is set to the number of its out-neighbors. Additionally, we prove that any LIS is equivalent to a Markov system in terms of the dynamic behavior, providing an explicit expression for the equivalent Markov system. Based on this, we show that the aforementioned stability decoupling can also be achieved if the decoupling variable is chosen as the number of in-neighbors.
\item Following the conclusion on stability decoupling, we prove that the distributed estimator is stable if the DMRE is uniformly bounded from above. Moreover, it is demonstrated that the uniform detectability or reachability of the local subsystem ensures the stability of the distributed estimator provided that the coupling strengths among subsystems are sufficiently small.
\end{itemize}
Additionally, for the case where the LIS is time-invariant, we further obtain the following results:
\begin{itemize}
\item The feasibility of a centralized LMI is shown to be both sufficient and necessary for bounding the DMRE through analyzing the spectral radius of a linear operator. Moreover, by seeking a feasible solution to the centralized LMI, an easy-to-verify distributed stability condition is derived.
\item By investigating the monotonicity of the DMRE, it is demonstrated that the distributed estimator converges to a unique steady state for any initial condition. Then, a steady-state distributed estimator is given based on the convergence result.
\end{itemize}

{\em Notations.} ${\mathbb{R}}^r$ and ${\mathbb{R}}^{r\times s}$ denote the $r$ dimensional and $r\times s$ dimensional Euclidean spaces, respectively. $\mathrm{Diag}\{\cdot\}$ stands for block diagonal matrix. The notations $[\cdots,\cdots,\cdots]$ and $[\cdots;\cdots;\cdots]$ indicate vertical and horizontal concatenations, respectively. The symmetric terms in a symmetric matrix are denoted by ‘‘$\star$’’. $I$ stands for identity matrix. $\mathbf{1}_{rs}$ represents the all-1 matrix in ${\mathbb{R}}^{r\times s}$. $\mathrm{Tr}(\cdot)$, $\mathrm{Det}(\cdot)$ and, $\|\cdot\|$ represent the trace, determinant and 2-norm of a matrix, respectively. The inner product of two matrices is defined by $\langle X,Y\rangle\triangleq\mathrm{Tr}(Y^\top X)$. The spectral radius of a linear operator $\mathfrak{T}:\mathbb{S}\to\mathbb{S}$ on a Hilbert space $\mathbb{S}$ is denoted as $\rho(\mathfrak{T})$. For $X,Y\in{\mathbb{R}}^{r\times r}$, $X>Y$ and $X\geq Y$ mean that $X-Y$ is symmetric positive definite and semi-definite, respectively. $\sqrt{X}$ is the square root of a matrix $X$, where $X\geq 0$. $\odot$ and $\otimes$ denote Hadamard product and Kronecker product, respectively. $\delta$ is the Kronecker delta function. The power of a set is denoted as $|\cdot|$ and the power of the empty set is $0$. $\mathrm{p}(\cdot)$ and $\mathrm{E}[\cdot]$ represent the probability density function and expectation of a random variable. $\mathrm{Pr}(\cdot)$ represents the probability of a random event. The composite of functions is defined as $\mathfrak{F}\circ \mathfrak{G}(\cdot)\triangleq \mathfrak{F}\big(\mathfrak{G}(\cdot)\big)$.
\section{Detectability and Reachability}
This section introduces some basic concepts used in this paper \cite{Optimalfiltering,rugh1996linear,ZHANG201784,Anderson0319002,moore1980coping}.

\begin{definition}\label{DeUR}
A time-varying matrix pair $\big(X(k),Y(k)\big)$ is uniformly reachable if there exist constants $t\geq0$ and $r>0$ such that
\begin{equation}\begin{aligned}
\sum^t_{i=0}&\Phi_X(k+t+1,k+i+1)Y(k+i)\\
&\times Y^\top(k+i)\Phi_X^\top(k+t+1,k+i+1)\geq rI\nonumber
\end{aligned}\end{equation}
for all $k$, where
\begin{equation}\begin{aligned}
\Phi_X(k+t,k)\triangleq
\left\{ \begin{array}{l}
X(k-1+t)\times\cdots\times X(k),\ t\geq1,\\
I,\ t=0.
\end{array} \right.\nonumber
\end{aligned}\end{equation}
\end{definition}

\begin{definition}\label{DeUD}
A time-varying matrix pair $\big(Y(k),X(k)\big)$ is uniformly detectable if there exist constants $\nu\geq\mu\geq0$, $0\leq\gamma<1$ and $\sigma>0$, such that whenever
\begin{equation}\begin{aligned}
\|\Phi_X(k+\mu,k)\xi\|\geq\gamma\|\xi\|\nonumber
\end{aligned}\end{equation}
for some $\xi$ and $k$, then
\begin{equation}\begin{aligned}
\xi^\top\sum^\nu_{i=0}\Phi_X^\top(k+i)Y^\top(k+i)Y(k+i)\Phi_X(k+i,k)\xi&\\
\geq \sigma \|\xi\|^2&.\nonumber
\end{aligned}\end{equation}
\end{definition}

Additionally, if the matrix pair is time-invariant, then the uniform reachability and detectability will reduce to reachability and detectability, as shown below.

\begin{definition}\label{DeTIR}
A time-invariant matrix pair $(X,Y)$ is reachable if there exist constants $t\geq0$ and $r>0$ such that
\begin{equation}\begin{aligned}
\sum^t_{i=0}&X^iYY^\top(X^i)^\top\geq rI.\nonumber
\end{aligned}\end{equation}
\end{definition}

\begin{definition}\label{DeTID}
A time-invariant matrix pair $(Y,X)$ is detectable if there exists $K$ such that $\rho(X+KY)<1$.
\end{definition}

\begin{remark}
The (uniform) detectability is a necessary condition for stabilizing an estimator \cite{Optimalfiltering}. The (uniform) reachability serves to ensure the invertibility of a matrix, which is required in the stability analysis.
\end{remark}

\section{Problem Formulation}
Consider a time-varying LIS whose global state-space model is given by
\begin{equation}
\left\{ \begin{array}{l}
x(k)=A(k-1)x(k-1)+w(k-1),\\
z(k)=C(k)x(k)+v(k),
\end{array} \right.
\label {eq:1}
\end{equation}
where $x(k)\in\mathbb{R}^{n}$ and $z(k)\in\mathbb{R}^{m}$ are respectively the system state and measurement, $w(k)$ and $v(k)$ are respectively the process noise and measurement noise, the initial state satisfies $\|x(0)-\hat{x}(0)\|<\infty$ for a known $\hat{x}(0)$, the noise sequences $\{\|w(k)\|\}_{k\geq0}$ and $\{\|v(k)\|\}_{k\geq1}$ are uniformly bounded from above, the system matrix sequences $\{\|A(k)\|\}_{k\geq0}$ and $\{\|C(k)\|\}_{k\geq1}$ are uniformly bounded from above, $\big(C(k),A(k)\big)$ is uniformly detectable. The system matrices $A(k)$ and $C(k)$ have the block structure
\begin{equation}\begin{aligned}
A(k)=\begin{bmatrix}
{A_{11}(k)} & {A_{12}(k)} & {\cdots} & {A_{1s}(k)}\\
{A_{21}(k)} & {A_{22}(k)} & {\cdots} & {A_{2s}(k)}\\
{\vdots} & {\vdots} & {\ddots} & {\vdots}\\
{A_{s1}(k)} & {A_{s2}(k)} & {\cdots} & {A_{ss}(k)}
\end{bmatrix},\nonumber
\end{aligned}\end{equation}
\begin{equation}\begin{aligned}
C(k)=\mathrm{Diag}\big(C_1(k),C_2(k),\cdots,C_s(k)\big),\nonumber
\end{aligned}\end{equation}
where $s$ is the total number of subsystems. Then, define
\begin{equation}
\mathbb{I}_i(k)\triangleq\{j|A_{ij}(k)\neq0,j=1,\cdots,s,j\neq i\},\nonumber
\end{equation}
\begin{equation}
\mathbb{O}_i(k)\triangleq\{j|A_{ji}(k)\neq0,j=1,\cdots,s,j\neq i\},\nonumber
\end{equation}
where $\mathbb{I}_i(k)$ and $\mathbb{O}_i(k)$ represent the index set of in-neighbors and out-neighbors of $i$th subsystem, respectively. With the setting above, the global system (1) can be reformulated as the distributed form
\begin{equation}
\left\{ \begin{array}{l}
x_{i}(k)=\sum_{j\in\mathbb{I}_i(k-1)\cup\{i\}}A_{ij}(k-1)x_{j}(k-1)\\
\ \ \ \ \ \ \ \ \ \ +w_{i}(k-1),\\
z_{i}(k)=C_i(k)x_{i}(k)+v_i(k),
\end{array} \right.
\end{equation}
where $x_i(k)\in\mathbb{R}^{n_i}$ and $z_i(k)\in\mathbb{R}^{m_i}$ are respectively the subsystem state and subsystem measurement, $w_i(k)\in\mathbb{R}^{n_i}$ and $v_i(k)\in\mathbb{R}^{m_i}$ are respectively the subsystem process noise and subsystem measurement noise, $x(k)=[x_1(k);x_2(k);\cdots;x_s(k)]$ and $z(k)=[z_1(k);z_2(k);\cdots;z_s(k)]$. Evidently, $\sum^{s}_{i=1}n_i=n$ and $\sum^{s}_{i=1}m_i=m$.

In each subsystem, a local estimator is employed, i.e.,
\begin{equation}\begin{aligned}
\hat{x}_i(k)=\bar{x}_i(k)+K_{i}(k)\big(z_i(k)-C_i(k)\bar{x}_i(k)\big),
\end{aligned}\end{equation}
where $\bar{x}_i(k)=\sum_{j\in\mathbb{I}_i(k-1)\cup\{i\}}A_{ij}(k-1)\hat{x}_j(k-1)$ and $K_i(k)$ is the local gain. Inspired by the classical Kalman filtering theory\cite{Optimalfiltering}, the local estimator gain $K_{i}(k)$ is designed as the Kalman-like form
\begin{equation}\begin{aligned}
K_i(k)=\bar{P}_i(k)C_i^\top(k)\big(C_i(k)\bar{P}_i(k)C_i^\top(k)+R_i(k)\big)^{-1},
\end{aligned}\end{equation}
where $\bar{P}_i(k)$ is obtained by solving the DMRE
\begin{equation}\begin{aligned}
P_i(k)=&\bar{P}_i(k)-\bar{P}_i(k)C_i^\top(k)\\
&\times\big(C_i(k)\bar{P}_i(k)C_i^\top(k)+R_i(k)\big)^{-1}C_i(k)\bar{P}_i(k),
\end{aligned}\end{equation}
\begin{equation}\begin{aligned}
\bar{P}_i(k+1)=&\vartheta_i^2(k)\sum_{j\in\mathbb{I}_i(k)\cup\{i\}}A_{ij}(k)P_j(k)A_{ij}^\top(k)\\
&+Q_i(k),
\end{aligned}\end{equation}
where $\vartheta_i(k)$, $Q_i(k)\geq0$ and $R_i(k)>0$ are adjustable parameters. The objective of this paper is to analyze the stability of the distributed estimator (3) with local gain (4).
\section{Stability Analysis for The Time-Varying Case}
In this section, we discuss how to guarantee the stability of the distributed estimator (3) with the local gain (4). For ease of presentation, we define
\begin{equation}
\begin{aligned}
\left\{ \begin{array}{l}
\bar{P}(k)\triangleq\mathrm{Diag}\big(\bar{P}_1(k),\bar{P}_2(k),\cdots,\bar{P}_s(k)\big),\\
P(k)\triangleq\mathrm{Diag}\big(P_1(k),P_2(k),\cdots,P_s(k)\big),\\
K(k)\triangleq\mathrm{Diag}\big(K_1(k),K_2(k),\cdots,K_s(k)\big),\\
Q(k)\triangleq\mathrm{Diag}\big(Q_1(k),Q_2(k),\cdots,Q_s(k)\big),\\
R(k)\triangleq\mathrm{Diag}\big(R_1(k),R_2(k),\cdots,R_s(k)\big),\\
\hat{x}(k)\triangleq[\hat{x}_1(k);\hat{x}_2(k);\cdots;\hat{x}_s(k)],\\
\bar{x}(k)\triangleq[\bar{x}_1(k);\bar{x}_2(k);\cdots;\bar{x}_s(k)],\\
\hat{e}(k)\triangleq[\hat{e}_1(k);\hat{e}_2(k);\cdots;\hat{e}_s(k)],\\
\bar{e}(k)\triangleq[\bar{e}_1(k);\bar{e}_2(k);\cdots;\bar{e}_s(k)],\\
\hat{e}_i(k)\triangleq x_i(k)-\hat{x}_i(k),\\
\bar{e}_i(k)\triangleq x_i(k)-\bar{x}_i(k).
\end{array} \right.
\end{aligned}
\end{equation}
Additionally, the following conditions concerning the adjustable parameters $\vartheta_i(k)$, $Q_i(k)$ and $R_i(k)$ will be frequently invoked in the stability analysis.

\begin{condition}\label{AsLUR}
For all $i=1,\cdots,s$, $\big(A_{ii}(k),\sqrt{Q_i(k)}\big)$ is uniformly reachable.
\end{condition}

\begin{condition}\label{AsQRB}
For all $i=1,\cdots,s$, there exist $q_{u,i}\geq0$ and $r_{u,i}\geq r_{l,i}>0$ such that $0\leq Q_i(k)\leq q_{u,i}I$ and $r_{l,i}I\leq R_i(k+1)\leq r_{u,i}I$ for all $k\geq0$.
\end{condition}

\begin{condition}\label{AsPB}
For all $i=1,\cdots,s$, there exists $p_{u,i}\geq0$ such that $P_i(k)\leq p_{u,i} I$ for all $k\geq0$.
\end{condition}

\begin{condition}\label{AsThetaB}
For all $i=1,\cdots,s$, there exist $\vartheta_{l,i}>0$ and $\vartheta_{u,i}>0$ such that $\vartheta_{l,i}\leq\|\vartheta_i(k)\|\leq\vartheta_{u,i}$ for all $k\geq 0$.
\end{condition}

\begin{remark}
Note that Conditions \ref{AsLUR}, \ref{AsQRB}, and \ref{AsThetaB} can be easily fulfilled by choosing parameters $\vartheta_i(k)$, $Q_i(k)$, and $R_i(k)$ in a distributed manner. Moreover, we will also analyze the circumstances under which Condition \ref{AsPB} can be fulfilled.
\end{remark}
\subsection{Stability of the estimate error system: A Lyapunov analysis method}
The following proposition proves the uniform detectability of a matrix pair, which is crucial in the stability analysis of the estimate error system.

\begin{proposition}\label{PrUD}
If the parameters $Q_i(k)$, $R_i(k)$, and $\vartheta_i(k)$ are chosen such that Conditions \ref{AsQRB}, \ref{AsPB}, and \ref{AsThetaB} are satisfied, then the time-varying matrix pair $\big(C(k)\Phi_{\hat{A}}(k,k-1),\Phi_{\hat{A}}(k,k-1)\big)$ is uniformly detectable, where
\begin{equation}\begin{aligned}
\hat{A}(k)\triangleq\big(I-K(k+1)C(k+1)\big)A(k).\nonumber
\end{aligned}\end{equation}
\end{proposition}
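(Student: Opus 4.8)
The plan is to relate the matrix pair $\bigl(C(k)\Phi_{\hat{A}}(k,k-1),\Phi_{\hat{A}}(k,k-1)\bigr)$ back to the original uniformly detectable pair $\bigl(C(k),A(k)\bigr)$, which is guaranteed by the standing assumptions on system \eqref{eq:1}. The key observation is that $\hat{A}(k)=\bigl(I-K(k+1)C(k+1)\bigr)A(k)$ is exactly the closed-loop transition matrix of the filtered error dynamics, and in classical Kalman filtering the filtered error system inherits uniform detectability (indeed uniform exponential stability under reachability/detectability) from the open-loop pair. So I would first record the standard fact (see \cite{Optimalfiltering,Anderson0319002,moore1980coping}) that if $\bigl(C(k),A(k)\bigr)$ is uniformly detectable then there is a bounded gain sequence making the corresponding closed loop uniformly detectable, and then show the specific gain $K(k)$ generated by the DMRE is such a sequence.

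The first substantive step is to use Conditions \ref{AsQRB}, \ref{AsPB}, \ref{AsThetaB} to establish uniform bounds on the quantities appearing in $K(k)$. From Condition \ref{AsPB} we get $P_i(k)\le p_{u,i}I$; feeding this through \eqref{eq:6}-type recursion (the $\bar P_i(k+1)$ update) together with the uniform bounds on $A_{ij}(k)$, on $\vartheta_i(k)$ (Condition \ref{AsThetaB}) and on $Q_i(k)$ (Condition \ref{AsQRB}) yields a uniform upper bound $\bar P_i(k)\le \bar p_{u,i}I$. Since $R_i(k)\ge r_{l,i}I>0$, the inverse $\bigl(C_i(k)\bar P_i(k)C_i^\top(k)+R_i(k)\bigr)^{-1}$ is uniformly bounded, hence $\|K_i(k)\|$, and therefore $\|K(k)\|$, is uniformly bounded from above; call the bound $\kappa$. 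Combined with the uniform bound on $\|C(k)\|$ and $\|A(k)\|$, this shows $\|\hat A(k)\|$ is uniformly bounded, which is the prerequisite for all the detectability estimates to make sense.

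The second step is the detectability argument proper. I would invoke the well-known equivalence: a pair $(Y,X)$ is uniformly detectable iff there exists a bounded $L(k)$ with $X(k)+L(k)Y(k)$ uniformly exponentially stable (equivalently, iff a bounded-solution Lyapunov inequality holds). Applying this to $\bigl(C(k),A(k)\bigr)$ gives a bounded $L(k)$ with $\Phi_{A+LC}$ uniformly exponentially stable. The goal state $\bigl(C(k)\Phi_{\hat A}(k,k-1),\Phi_{\hat A}(k,k-1)\bigr)$: note $\Phi_{\hat A}(k,k-1)=\hat A(k-1)$, so this pair is essentially $\bigl(C(k)\hat A(k-1),\hat A(k-1)\bigr)$. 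I want a bounded gain $\tilde L(k)$ with $\hat A(k-1)+\tilde L(k)C(k)\hat A(k-1)=\bigl(I+\tilde L(k)C(k)\bigr)\hat A(k-1)$ uniformly exponentially stable. Writing $\hat A(k-1)=A(k-1)-K(k)C(k)A(k-1)$ and choosing $\tilde L(k)$ so that $\bigl(I+\tilde L(k)C(k)\bigr)\bigl(I-K(k)C(k)\bigr)=I+L(k-1)C(k-1)\cdots$ — more cleanly, I would argue directly that the filtered error transition $\hat A$ of a Kalman-type filter built on a uniformly detectable pair with uniformly bounded, uniformly positive $R$ and bounded $P$ is itself such that the associated pair is uniformly detectable, by exhibiting the Lyapunov function $V(k,\xi)=\xi^\top \bar P^{-1}(k)\xi$ or by citing the monotonicity/comparison results for Riccati recursions in \cite{Optimalfiltering}. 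The cleanest route: show $\hat A(k)$ admits the identity $\hat A(k)=A(k)-\tilde K(k)C(k)$ with $\tilde K(k)=A(k)K(k)$ bounded, so the pair $(C(k),\hat A(k)+\cdots)$ — actually the required pair has output matrix $C(k)\hat A(k-1)$, and one checks uniform detectability of $(C\hat A,\hat A)$ reduces to uniform detectability of $(C,A)$ via an explicit bounded gain transformation, using that $\hat A$ differs from $A$ by a bounded feedback through $C$.

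The main obstacle I anticipate is handling the \emph{composite} output map $C(k)\Phi_{\hat A}(k,k-1)$ rather than a plain $C(k)$: verifying Definition \ref{DeUD} requires producing constants $\nu\ge\mu\ge0$, $0\le\gamma<1$, $\sigma>0$ and showing that whenever $\|\Phi_{\hat A}(k+\mu,k)\xi\|\ge\gamma\|\xi\|$ the observability-type Gramian with the shifted output $C(k+i)\Phi_{\hat A}(k+i,k+i-1)$ is bounded below by $\sigma\|\xi\|^2$. This forces me to carefully translate the uniform detectability index of the original pair through the closed-loop transition matrix, keeping all the constants uniform in $k$; the bookkeeping of window lengths and the chain of inequalities linking $\Phi_{\hat A}$ to $\Phi_A$ (using that $\hat A-A$ is a bounded feedback) is where the real work lies. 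I would isolate that bookkeeping as the technical heart of the proof and lean on the standard results of \cite{Optimalfiltering,moore1980coping,Anderson0319002} to shorten it.
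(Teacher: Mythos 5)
Your opening step --- using Conditions \ref{AsQRB}, \ref{AsPB}, and \ref{AsThetaB} to bound $\bar P(k)$ and hence $K(k)$ uniformly --- is correct and coincides with the paper's, and your guiding idea (that $\hat A$ is a bounded output-feedback perturbation of $A$, so uniform detectability should transfer) is the right one. The problem is that the proposal stops where the proof starts: it offers three partially sketched routes, commits to none, and the two steps it defers are exactly the nontrivial ones. First, to run the contrapositive argument in Definition~\ref{DeUD} you need the closed-loop observability Gramian to control the open-loop one from \emph{below} as well as from above. The paper achieves this by writing $\hat{\mathcal O}(k+\nu,k)=\hat W(k+\nu)\mathcal O(k+\nu,k)$ and proving $w_l I\le \hat W^\top\hat W\le w_u I$, where the lower bound comes from $\mathrm{Det}\big(I-C(k)K(k)\big)=\mathrm{Det}\big(R(k)\big)/\mathrm{Det}\big(C(k)\bar P(k)C^\top(k)+R(k)\big)\ge w>0$ under Conditions \ref{AsQRB} and \ref{AsPB}. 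In your output-injection route the same issue reappears: solving $\big(I+\tilde L(k)C(k)\big)\big(I-K(k)C(k)\big)=I+N(k)C(k)$ for a bounded $\tilde L(k)$ requires $\big(I-C(k)K(k)\big)^{-1}$ to be uniformly bounded. That is true here, but it is precisely where the hypotheses enter, and you never verify it.

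Second, your identity $\hat A(k)=A(k)-\tilde K(k)C(k)$ with $\tilde K(k)=A(k)K(k)$ is false: $\hat A(k)=\big(I-K(k+1)C(k+1)\big)A(k)=A(k)-K(k+1)C(k+1)A(k)$ is a feedback through the composite output $C(k+1)A(k)$, not through $C(k)$, and $(I-KC)A\ne A(I-KC)$ in general. This is not cosmetic, because the entire difficulty of the proposition is that the pair to be analyzed has output map $C(k)\Phi_{\hat A}(k,k-1)$ rather than $C(k)$; the reduction from the composite-output pair to the plain pair is the content of Lemma~\ref{LemmaA1} (resting on Corollary~3.4 of \cite{Anderson0319002}), and your target gain $N(k)$ exists only if $\big(C(k)A(k-1),A(k-1)\big)$ --- not $\big(C(k),A(k)\big)$ --- is stabilizable by bounded output injection. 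You correctly flag this as ``the main obstacle'' and ``the technical heart,'' but then propose to ``lean on the standard results'' without identifying one that covers it. As written, this is a plan containing one incorrect identity and two unexecuted essential steps, not a proof.
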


\begin{proof}
Let $\mu$, $\nu$, $\gamma$, and $\sigma$ be the parameters associated with the uniform detectability of $\big(C(k),A(k)\big)$. Define
\begin{equation}\begin{aligned}
\hat{\mathcal{O}}(k+\nu,k)\triangleq
\begin{bmatrix}
{C(k)}\Phi_{\hat{A}}(k,k)\\
{C(k+1)\Phi_{\hat{A}}(k+1,k)}\\
{\vdots}\\
{C(k+\nu)\Phi_{\hat{A}}(k+\nu,k)}\\
\end{bmatrix},\nonumber
\end{aligned}\end{equation}
\begin{equation}\begin{aligned}
\mathcal{O}(k+\nu,k)
\triangleq
\begin{bmatrix}
{C(k)\Phi_A(k,k)}\\
{C(k+1)\Phi_A(k+1,k)}\\
{\vdots}\\
{C(k+\nu)\Phi_A(k+\nu,k)}\\
\end{bmatrix}.\nonumber
\end{aligned}\end{equation}
Then, one can reformulate $\hat{\mathcal{O}}(k+\nu,k)$ as
\begin{equation}\begin{aligned}
\hat{\mathcal{O}}(k+\nu,k)=\hat{W}(k+\nu)\mathcal{O}(k+\nu,k),
\end{aligned}\end{equation}
where
\begin{equation}\begin{aligned}
\hat{W}(k+\nu)\triangleq\begin{bmatrix}
{I} & {0} & {0} & {\cdots} & {0}\\
{0} & {\hat{L}_{11}(k)} & {0} & {\cdots} & {0}\\
{0} & {\hat{L}_{21}(k)} & {\hat{L}_{22}(k)} & {\ddots} & {\vdots}\\
{\vdots} & {\vdots} & {\vdots} & {\ddots} & {0}\\
{0} & {\hat{L}_{\nu1}(k)} & {\hat{L}_{\nu2}(k)} & {\cdots} & {\hat{L}_{\nu\nu}(k)}
\end{bmatrix},\nonumber
\end{aligned}\end{equation}
\begin{equation}\begin{aligned}
&\hat{L}_{ij}(k)\triangleq
\left\{ \begin{array}{l}
I-C(k+i)K(k+i),\ i=j,\\
-C(k+i)\Phi_{\hat{A}}(k+i,k+j)K(k+j),\ i>j.
\end{array} \right.\nonumber
\end{aligned}\end{equation}
According to Conditions \ref{AsQRB}, \ref{AsPB}, and \ref{AsThetaB}, $\{\|\hat{L}_{ij}(k)\|\}_{k\geq0}$ is uniformly bounded from above. In this case, there exists a constant $w_u>0$ such that
\begin{equation}\begin{aligned}
\hat{W}^\top(k+\nu)\hat{W}(k+\nu)\leq w_u I,\ \forall k\geq 0.
\end{aligned}\end{equation}
Moreover, it is clear from (5) and (6) that the sequence $\{\bar{P}(k)\}_{k\geq 1}$ is uniformly bounded from above if Conditions \ref{AsQRB}, \ref{AsPB}, and \ref{AsThetaB} are fulfilled. With this boundedness result, one can derive that there exists a constant $w>0$ such that
\begin{equation}\begin{aligned}
&\mathrm{Det}\big(\hat{W}(k+\nu)\big)\\
=&\prod^{\nu}_{i=1}\frac{\mathrm{Det}\big(R(k+i)\big)}{\mathrm{Det}\big(C(k+i)\bar{P}(k+i)C^\top(k+i)+R(k+i)\big)}\\
\geq& w,\ \forall k\geq 0.
\end{aligned}\end{equation}
Then, it follows from (9) and (10) that there exists a constant $w_l>0$ such that
\begin{equation}\begin{aligned}
\hat{W}^\top(k+\nu)\hat{W}(k+\nu)\geq w_l I,\ \forall k\geq0.
\end{aligned}\end{equation}
Based on (8), (9), and (11), one has
\begin{equation}\begin{aligned}
o_l\mathcal{O}^\top(k+\nu,&k)\mathcal{O}(k+\nu,k)\\
\leq\hat{\mathcal{O}}^\top&(k+\nu,k)\hat{\mathcal{O}}(k+\nu,k)\\
&\leq o_u\mathcal{O}^\top(k+\nu,k)\mathcal{O}(k+\nu,k)
\end{aligned}\end{equation}
for all $k\geq0$ and some constants $0<o_l\leq o_u$.

Additionally, note that the closed-loop matrix $\Phi_{\hat{A}}(k+\mu,k)$ can be reformulated as
\begin{align}
\Phi_{\hat{A}}(k+\mu,k)=\Phi_A(k+\mu,k)+&[0,L_1(k),L_2(k),\cdots,L_\mu(k)]\nonumber\\
&\times\mathcal{O}(k+\mu,k),
\end{align}
where
\begin{equation}\begin{aligned}
L_i(k)\triangleq-\Phi_{\hat{A}}(k+\mu,k+i)K(k+i).\nonumber
\end{aligned}\end{equation}
Under Conditions \ref{AsQRB}, \ref{AsPB}, and \ref{AsThetaB}, it is trivial that
\begin{equation}\begin{aligned}
\|[0,L_1(k),L_2(k),\cdots,L_\mu(k)]\|\leq\alpha\nonumber
\end{aligned}\end{equation}
for some $\alpha\geq0$ and all $k\geq0$. Thus, for any $\xi\in\mathbb{R}^{n}$, it follows from (13) that
\begin{equation}\begin{aligned}
\|\Phi_{\hat{A}}(k+\mu,k)\xi\|\leq&\|\Phi_A(k+\mu,k)\xi\|\\
&+\alpha\|\mathcal{O}(k+\mu,k)\xi\|.
\end{aligned}\end{equation}

Let $\bar{\sigma}$ satisfy $0<\bar{\sigma}\leq o_l\sigma$. If $\xi^\top\hat{\mathcal{O}}^\top(k+\nu,k)\hat{\mathcal{O}}(k+\nu,k)\xi<\bar{\sigma}\xi^\top\xi$, one can derive from (12) that
\begin{equation}\begin{aligned}
\xi^\top\mathcal{O}^\top(k+\nu,k)\mathcal{O}(k+\nu,k)\xi<\frac{\bar{\sigma}}{o_l}\xi^\top\xi\leq \sigma\xi^\top\xi.\nonumber
\end{aligned}\end{equation}
This inequality implies (recall Definition \ref{DeUD})
\begin{equation}\begin{aligned}
\|\Phi_A(k+\mu,k)\xi\|<\gamma\|\xi\|,
\end{aligned}\end{equation}
\begin{equation}\begin{aligned}
\|\mathcal{O}(k+\mu,k)\xi\|&\leq\big(\xi^\top\mathcal{O}^\top(k+\nu,k)\mathcal{O}(k+\nu,k)\xi\big)^{1/2}\\
&<\sqrt{\frac{\bar{\sigma}}{o_l}}\|\xi\|.
\end{aligned}\end{equation}
Substituting (15) and (16) into (14) yields
\begin{equation}\begin{aligned}
\|\Phi_{\hat{A}}(k+\mu,k)\xi\|<(\gamma+\alpha\sqrt{\frac{\bar{\sigma}}{o_l}})\|\xi\|=\bar{\gamma}\|\xi\|,\nonumber
\end{aligned}\end{equation}
where $\bar{\gamma}=\gamma+\alpha\sqrt{\frac{\bar{\sigma}}{o_l}}$. Since $\gamma<1$, one can choose a small $\bar{\sigma}$ such that $0<\bar{\gamma}<1$. Thus, $\big(C(k),\Phi_{\hat{A}}(k+1,k)\big)$ is uniformly detectable with parameters $\mu$, $\nu$, $\bar{\gamma}$ and $\bar{\sigma}$. Then, it follows from Lemma \ref{LemmaA1} in Appendix that $\big(C(k)\Phi_{\hat{A}}(k,k-1),\Phi_{\hat{A}}(k,k-1)\big)$ is uniformly detectable. The proof is completed.
\end{proof}

Based on Proposition \ref{PrUD}, it is ready to give the following theorem.

\begin{theorem}\label{TheoremLya} Consider the LIS (2) and the distributed estimator (3) with gain (4). If $\vartheta_i(k)=\sqrt{|\mathbb{O}_i(k+1)|+1}$ for $i=1,2,\cdots,s$ and $k\geq0$ and Conditions \ref{AsLUR}, \ref{AsQRB}, and \ref{AsPB} are fulfilled, then the estimate error system is exponentially stable, i.e.,
\begin{equation}\begin{aligned}
\|\hat{e}(k)\|\leq ab^k+c,
\end{aligned}\end{equation}
where $a\geq0$, $0\leq b<1$ and $c\geq0$ are constants. Meanwhile, if $w(k)=0$ and $v(k+1)=0$ for all $k\geq0$, then
\begin{equation}\begin{aligned}
\|\hat{e}(k)\|\leq ab^k.
\end{aligned}\end{equation}
\end{theorem}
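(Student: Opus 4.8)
The plan is to reduce the claim to a uniform exponential-stability statement about the closed-loop transition matrix $\Phi_{\hat{A}}$, and then to prove the latter by a Lyapunov argument in which the choice $\vartheta_i(k)=\sqrt{|\mathbb{O}_i(k+1)|+1}$ is precisely what makes an out-degree-weighted quadratic form non-increasing. First, subtracting (3) from (2) and inserting (4)–(5) gives the stacked error recursion $\bar{e}(k)=A(k-1)\hat{e}(k-1)+w(k-1)$, $\hat{e}(k)=\big(I-K(k)C(k)\big)\bar{e}(k)-K(k)v(k)$, hence $\hat{e}(k)=\hat{A}(k-1)\hat{e}(k-1)+\eta(k-1)$ with $\eta(k-1)=\big(I-K(k)C(k)\big)w(k-1)-K(k)v(k)$. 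Since $\vartheta_i(k)=\sqrt{|\mathbb{O}_i(k+1)|+1}$ automatically satisfies Condition \ref{AsThetaB} (with $\vartheta_{l,i}=1$, $\vartheta_{u,i}=\sqrt{s}$), Conditions \ref{AsQRB} and \ref{AsPB} make the right-hand sides of (5)–(6) uniformly bounded, so $\bar{P}(k)$, $K(k)$, $I-K(k)C(k)$ and $\eta(k)$ are uniformly bounded in $k$. It therefore suffices to prove $\|\Phi_{\hat{A}}(k,j)\|\le\beta\lambda^{k-j}$ for constants $\beta\ge1$, $0\le\lambda<1$; then $\hat{e}(k)=\Phi_{\hat{A}}(k,0)\hat{e}(0)+\sum_{m=1}^{k}\Phi_{\hat{A}}(k,m)\eta(m-1)$ yields (17) after summing a geometric series against $\sup_k\|\eta(k)\|<\infty$, and (18) is the case $\eta\equiv0$.

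Next I would establish two-sided bounds on the DMRE: for $k$ beyond some $k_0$, $\bar{P}_i(k)$ and $P_i(k)$ lie between positive multiples of $I$. From (5), $P_i(k)=\big(I-K_i(k)C_i(k)\big)\bar{P}_i(k)$ and $P_i(k)^{-1}=\bar{P}_i(k)^{-1}+C_i(k)^\top R_i(k)^{-1}C_i(k)$, so uniform boundedness above of $\bar{P}_i$ forces $P_i(k)\ge\alpha_i\bar{P}_i(k)$ for some $\alpha_i\in(0,1)$; feeding this into $\bar{P}_i(k+1)\ge\vartheta_i^2(k)A_{ii}(k)P_i(k)A_{ii}^\top(k)+Q_i(k)\ge\alpha_iA_{ii}(k)\bar{P}_i(k)A_{ii}^\top(k)+Q_i(k)$ and unrolling, the $Q_i$-terms alone reproduce, up to a factor $\alpha_i^{t}$, the uniform-reachability Gramian of $\big(A_{ii}(k),\sqrt{Q_i(k)}\big)$ guaranteed by Condition \ref{AsLUR}; hence $\bar{P}_i(k)\ge\underline{p}_iI$ for all $k\ge k_0$. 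In particular $\bar{P}_i(k)$, $P_i(k)$, the factor $I-K_i(k)C_i(k)=P_i(k)\bar{P}_i(k)^{-1}$ and $I-C_i(k)K_i(k)=R_i(k)\big(C_i(k)\bar{P}_i(k)C_i(k)^\top+R_i(k)\big)^{-1}$, and all their inverses, are uniformly bounded for $k\ge k_0$; consequently $\|\hat{e}(k)\|$ and $\|\bar{e}(k)\|$, and also $\|C_i(k)\hat{e}_i(k)\|$ and $\|C_i(k)\bar{e}_i(k)\|$, are comparable up to uniform constants.

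The crux is the weighted quadratic form $V(k)=\sum_{i=1}^{s}\big(|\mathbb{O}_i(k)|+1\big)\bar{e}_i(k)^\top\bar{P}_i(k)^{-1}\bar{e}_i(k)$, which by the above satisfies $c_1\|\hat{e}(k)\|^2\le V(k)\le c_2\|\hat{e}(k)\|^2$ for $k\ge k_0$. For the noise-free dynamics $\bar{e}_i(k+1)=\sum_{j\in\mathbb{I}_i(k)\cup\{i\}}A_{ij}(k)\hat{e}_j(k)=A_{i,\mathcal{J}_i}(k)\hat{e}_{\mathcal{J}_i}(k)$ with $\mathcal{J}_i=\mathbb{I}_i(k)\cup\{i\}$. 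Dropping $Q_i$ in (6) gives $\bar{P}_i(k+1)\ge\vartheta_i^2(k)A_{i,\mathcal{J}_i}(k)P_{\mathcal{J}_i}(k)A_{i,\mathcal{J}_i}^\top(k)$ with $P_{\mathcal{J}_i}(k)=\mathrm{Diag}\big(P_j(k)\big)_{j\in\mathcal{J}_i}>0$, and the elementary fact ``$M\ge SNS^\top$, $M,N>0$ $\Rightarrow$ $S^\top M^{-1}S\le N^{-1}$'' then yields $\bar{e}_i(k+1)^\top\bar{P}_i(k+1)^{-1}\bar{e}_i(k+1)\le\frac{1}{|\mathbb{O}_i(k+1)|+1}\sum_{j\in\mathcal{J}_i}\hat{e}_j(k)^\top P_j(k)^{-1}\hat{e}_j(k)$. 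Substituting $\hat{e}_j(k)=\big(I-K_j(k)C_j(k)\big)\bar{e}_j(k)$ together with the Kalman identity $\big(I-K_jC_j\big)^\top P_j^{-1}\big(I-K_jC_j\big)=\bar{P}_j^{-1}-C_j^\top\big(C_j\bar{P}_jC_j^\top+R_j\big)^{-1}C_j$, then summing over $i$ and reindexing: the weight $|\mathbb{O}_i(k+1)|+1$, which equals $\vartheta_i^2(k)$, cancels the factor $\tfrac{1}{|\mathbb{O}_i(k+1)|+1}$, and each index $j$ occurs in exactly $\#\{i:j\in\mathbb{I}_i(k)\cup\{i\}\}=|\mathbb{O}_j(k)|+1$ of the inner sums, so the surviving terms reassemble into $V(k)$ and one obtains $V(k+1)\le V(k)-\sum_j\big(|\mathbb{O}_j(k)|+1\big)\bar{e}_j(k)^\top C_j(k)^\top\big(C_j(k)\bar{P}_j(k)C_j(k)^\top+R_j(k)\big)^{-1}C_j(k)\bar{e}_j(k)\le V(k)-\kappa\|C(k)\bar{e}(k)\|^2$ for some $\kappa>0$; in particular $V$ is non-increasing. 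This telescoping is exactly where $\vartheta_i(k)=\sqrt{|\mathbb{O}_i(k+1)|+1}$ is forced — no other choice balances the double sum — and I expect getting this bookkeeping right (and, below, matching the innovation-energy Gramian, which naturally carries $I-K(k)C(k)$-type factors, to the exact detectability pair of Proposition \ref{PrUD}) to be the main obstacle.

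Finally, summing the last inequality over $k\in[\ell,\ell+\nu]$ and replacing $\|C(k)\bar{e}(k)\|$ by the comparable $\|C(k)\hat{e}(k)\|$ gives $V(\ell+\nu+1)\le V(\ell)-\kappa'\,\hat{e}(\ell)^\top\Big(\sum_{i=0}^{\nu}\Phi_{\hat{A}}(\ell+i,\ell)^\top C(\ell+i)^\top C(\ell+i)\Phi_{\hat{A}}(\ell+i,\ell)\Big)\hat{e}(\ell)$, i.e.\ $V$ drops by $\kappa'$ times the $\nu$-step observability Gramian of $\big(C(k),\Phi_{\hat{A}}(k+1,k)\big)$, whose uniform detectability is exactly what Proposition \ref{PrUD} provides. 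The classical dichotomy then closes the argument: for each $\ell\ge k_0$, either $\|\Phi_{\hat{A}}(\ell+\mu,\ell)\hat{e}(\ell)\|\ge\gamma\|\hat{e}(\ell)\|$, so the Gramian term is $\ge\sigma\|\hat{e}(\ell)\|^2$ and $V(\ell+\nu+1)\le(1-\kappa'\sigma/c_2)V(\ell)$, or $\|\hat{e}(\ell+\mu)\|<\gamma\|\hat{e}(\ell)\|$, in which case the dichotomy is iterated, so that over a fixed horizon $N=N(\mu,\nu,\gamma,c_1,c_2)$ one always obtains $V(\ell+N)\le(1-\theta)V(\ell)$ with $\theta\in(0,1)$ (using monotonicity of $V$ to pad out to $N$). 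Iterating from $k_0$ gives $V(k_0+rN)\le(1-\theta)^rV(k_0)$, hence $\|\Phi_{\hat{A}}(k,k_0)\|\le\beta_0\lambda^{k-k_0}$ for $k\ge k_0$; absorbing the finitely many initial steps into the factor $(\sup_k\|\hat{A}(k)\|)^{k_0}$ yields the required uniform bound $\|\Phi_{\hat{A}}(k,j)\|\le\beta\lambda^{k-j}$, and (17)–(18) follow as in the first step.
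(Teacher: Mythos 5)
Your proof is correct and follows essentially the same route as the paper's: a lower bound on the DMRE from Condition \ref{AsLUR}, an out-degree-weighted quadratic Lyapunov function whose decrease hinges on exactly the reindexing identity $\#\{i:\, j\in\mathbb{I}_i(k)\cup\{i\}\}=|\mathbb{O}_j(k)|+1$ (the paper's inequalities (28)--(32)), and the uniform-detectability dichotomy supplied by Proposition \ref{PrUD}. The only cosmetic differences are that you weight $\bar{P}^{-1}$ on the prediction error rather than the paper's $\Lambda(k)P^{-1}(k)+\epsilon I$ on the filtered error, and you close the ``small Gramian'' branch by iterating the $\gamma$-contraction of $\|\hat{e}\|$ instead of using the $\epsilon I$ regularization term --- both standard and equivalent devices.
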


\begin{proof}
Step 1. Prove that there exist constants $p_l>0$ and $t\geq0$ such that $P(k)\geq p_lI$ for all $k\geq t$. Recall the evolution of $\bar{P}_i(k)$ in (6), one can reformulate it as
\begin{equation}\begin{aligned}
\bar{P}_i(k+1)\geq \bar{F}_{ii}(k)\bar{P}_i(k)\bar{F}_{ii}^\top(k)+B_{ii}(k)B_{ii}^\top(k),
\end{aligned}\end{equation}
where
\begin{equation}\begin{aligned}
&\bar{F}_{ii}(k)\triangleq \vartheta_i(k)A_{ii}(k)-B_{ii}(k)G_i(k),\nonumber
\end{aligned}\end{equation}
\begin{equation}\begin{aligned}
B_{ii}(k)\triangleq[\vartheta_i(k)A_{ii}(k)K_i(k)\sqrt{R_i(k)},\sqrt{Q_i(k)}],\nonumber
\end{aligned}\end{equation}
\begin{equation}\begin{aligned}
G_i(k)\triangleq[\sqrt{R_i^{-1}(k)}C_i(k);0].\nonumber
\end{aligned}\end{equation}
Under Condition \ref{AsLUR}, it is trivial that the time-varying matrix pair $\big(\vartheta_i(k)A_{ii}(k),B_{ii}(k)\big)$ is uniformly reachable. Moreover, the iteration of (19) yields
\begin{equation}\begin{aligned}
\bar{P}_i(k+t_i)\geq&\sum^{t_i-1}_{j=0}\Phi_{\bar{A}_{ii}}(k+t_i,k+j+1)B_{ii}(k+j)\\
&\ \ \ \ \ \ \times B_{ii}^\top(k+j)\Phi_{\bar{A}_{ii}}^\top(k+t_i,k+j+1).\nonumber
\end{aligned}\end{equation}
According to Condition \ref{AsQRB}, one knows that the sequence $\{\|G_i(k)\|\}_{k\geq1}$ is uniformly bounded from above. Thus, it follows from the invariance of reachability under bounded feedback \cite{Anderson0319002} that
\begin{equation}\begin{aligned}
\bar{P}_i(k)\geq \bar{p}_{l,i}I,\ \forall k\geq t_i,\nonumber
\end{aligned}\end{equation}
where $t_i\geq1$ and $\bar{p}_{l,i}>0$ are constants. Then, utilizing the matrix inverse lemma for (5) yields
\begin{equation}\begin{aligned}
P_i(k)&=\big(\bar{P}_i^{-1}(k)+C_i^\top(k)R_i^{-1}(k)C_i(k)\big)^{-1}\\
&\geq p_{l,i}I,\ \forall k\geq t_i,\nonumber
\end{aligned}\end{equation}
where $p_{l,i}>0$ is a constant. Consequently, $P(k)\geq p_{l} I$ for all $k\geq t$, where $p_l=\min\{p_{l,1},p_{l,2},\cdots,p_{l,s}\}$ and $t=\max\{t_1,t_2,\cdots,t_s\}$.

Step 2. Construct the Lyapunov function. According to (2) and (3), the local estimate error system is given by
\begin{equation}\begin{aligned}
\hat{e}_i(k)=&\big(I-K_i(k)C_i(k)\big)\\
&\times\sum_{j\in\mathbb{I}_i(k-1)\cup\{i\}}A_{ij}(k-1)\hat{e}_{j}(k-1)\\
&+\big(I-K_i(k)C_i(k)\big)w_i(k-1)\\
&-K_i(k)v_i(k).
\end{aligned}\end{equation}
With (20), the global estimate error system can be given by
\begin{equation}\begin{aligned}
\hat{e}(k)=&\big(I-K(k)C(k)\big)A(k-1)\hat{e}(k-1)\\
&+\big(I-K(k)C(k)\big)w(k-1)-K(k)v(k).
\end{aligned}\end{equation}
Consider the noise-free version of the system (21), that is,
\begin{equation}\begin{aligned}
\eta(k)=\big(I-K(k)C(k)\big)A(k-1)\eta(k-1).
\end{aligned}\end{equation}
Under Conditions \ref{AsQRB} and \ref{AsPB}, one knows that the noise sequence $\{\|\big(I-K(k)C(k)\big)w(k-1)-K(k)v(k)\|\}_{k\geq1}$ in (21) is uniformly bounded from above. Thus, if the system (22) is exponentially stable, then the inequalities (17) and (18) hold.

Choose the Lyapunov candidate function as
\begin{equation}\begin{aligned}
V(k)=\eta^\top(k)\big(\Lambda(k)P^{-1}(k)+\epsilon I\big)\eta(k),\ k\geq t,
\end{aligned}\end{equation}
where $\epsilon>0$ is a constant to be determined and
\begin{equation}\begin{aligned}
\Lambda(k)\triangleq\mathrm{Diag}\big((|\mathbb{O}_1(k)|+1)I,\cdots,(|\mathbb{O}_s(k)|+1)I\big).\nonumber
\end{aligned}\end{equation}
Without loss of generality, we assume $k>t$ in the subsequent proof.

Step 3. Prove that the Lyapunov function (23) decays exponentially fast. According to the Kalman filtering theory \cite{Optimalfiltering}, one can reformulate the Kalman gain $K_i(k)$ as $K_i(k)=P_i(k)C_i^\top(k)R_i^{-1}(k)$. Substituting the reformulation into (5) yields
\begin{equation}\begin{aligned}
&P_i(k)-K_i(k)R_i(k)K_i^\top(k)\\
\leq&P_i(k)-P_i(k)C_i^\top(k)\big(R_i(k)+C_i(k)P_i(k)C_i^\top(k)\big)^{-1}\\
&\times C_i(k)P_i(k)\\
=&\big(P_i^{-1}(k)+C_i^\top(k)R_i^{-1}(k)C_i(k)\big)^{-1}.
\end{aligned}\end{equation}
Note that $P_i(k)$ can be reformulated as
\begin{equation}\begin{aligned}
P_i(k)=&\big(I-K_{i}(k)C_{i}(k)\big)\bar{P}_i(k)\big(I-K_{i}(k)C_{i}(k)\big)^\top\\
&+K_i(k)R_i(k)K_i^\top(k).
\end{aligned}\end{equation}
Then, it follows from (6) and (25) that
\begin{align}
&P_i(k)-K_i(k)R_i(k)K^\top_i(k)\nonumber\\
\geq&\vartheta_i^2(k-1)\big(I-K_{i}(k)C_{i}(k)\big)\sum_{j\in\mathbb{I}_i(k-1)\cup\{i\}}A_{ij}(k-1)\nonumber\\
&\times P_j(k-1)A_{ij}^\top(k-1)\big(I-K_{i}(k)C_{i}(k)\big)^\top.
\end{align}
Based on (24) and (26), one can derive that
\begin{equation}\begin{aligned}
&\big(P_i^{-1}(k)+C_i^\top(k)R_i^{-1}(k)C_i(k)\big)^{-1}\\
\geq&\vartheta_i^2(k-1)\big(I-K_{i}(k)C_{i}(k)\big)A_{r,i}(k-1)P(k-1)\\
&\times A_{r,i}^\top(k-1)\big(I-K_{i}(k)C_{i}(k)\big)^\top,
\end{aligned}\end{equation}
where
\begin{equation}\begin{aligned}
A_{r,i}(k)\triangleq[A_{i1}(k),A_{i2}(k),\cdots,A_{is}(k)].\nonumber
\end{aligned}\end{equation}
Then, utilizing Schur complement lemma \cite{boyd1994linear} for the inequality (27) yields
\begin{align}
&P(k-1)-\vartheta_i^2(k-1)P(k-1)A_{r,i}^\top(k-1)\nonumber\\
&\times\big(I-K_{i}(k)C_{i}(k)\big)^\top\big(P_i^{-1}(k)+C_i^\top(k)R_i^{-1}(k)C_i(k)\big)\nonumber\\
&\times\big(I-K_{i}(k)C_{i}(k)\big)A_{r,i}(k-1)P(k-1)\geq0.
\end{align}

Define
\begin{equation}\begin{aligned}
\lambda_{ij}(k)\triangleq
\left\{ \begin{array}{l}
0,\ A_{ij}(k)=0,\\
1,\ A_{ij}(k)\neq0,
\end{array} \right.\nonumber
\end{aligned}\end{equation}
\begin{equation}\begin{aligned}
\Lambda_i(k)\triangleq\mathrm{Diag}\big(\lambda_{i1}(k)I,\lambda_{i2}(k)I,\cdots,\lambda_{is}(k)I\big).\nonumber
\end{aligned}\end{equation}
It is trivial that
\begin{equation}\begin{aligned}
\Lambda_i(k)P^{-1}(k)&=\Lambda_i(k)P^{-1}(k)\Lambda_i(k),\\
A_{r,i}(k)\Lambda_i(k)&=A_{r,i}(k).
\end{aligned}\end{equation}
Then, it follows from (28) and (29) that
\begin{equation}\begin{aligned}
&\Lambda_i(k-1)P^{-1}(k-1)\\
&-\vartheta_i^2(k-1)A_{r,i}^\top(k-1)\big(I-K_{i}(k)C_{i}(k)\big)^\top\\
&\times\big(P_i^{-1}(k)+C_i^\top(k)R_i^{-1}(k)C_i(k)\big)\\
&\times\big(I-K_{i}(k)C_{i}(k-1)\big)A_{r,i}(k-1)\geq0.
\end{aligned}\end{equation}
Summing (30) over the subscript $i$ yields
\begin{align}
&\Lambda(k-1)P^{-1}(k-1)-A^\top(k-1)\big(I-K(k)C(k)\big)^\top\nonumber\\
&\times\Theta(k-1)\big(P^{-1}(k)+C^\top(k)R^{-1}(k)C(k)\big)\nonumber\\
&\times\big(I-K(k)C(k)\big)A(k-1)\geq0,
\end{align}
where
\begin{equation}\begin{aligned}
\Theta(k)\triangleq\mathrm{Diag}\big(\vartheta_1^2(k)I,\vartheta_1^2(k)I,\cdots,\vartheta_s^2(k)I\big).\nonumber
\end{aligned}\end{equation}
Since $\vartheta_i(k)=\sqrt{|\mathbb{O}_i(k+1)|+1}$, one can obtain
\begin{equation}\begin{aligned}
&\Lambda(k-1)P^{-1}(k-1)\\
&-A^\top(k-1)\big(I-K(k)C(k)\big)^\top\\
&\times\Lambda(k)\big(P^{-1}(k)+C^\top(k)R^{-1}(k)C(k)\big)\\
&\times\big(I-K(k)C(k)\big)A(k-1)\geq0.
\end{aligned}\end{equation}

It follows from (32) that
\begin{equation}\begin{aligned}
&V(k+i-1)-V(k+i)\\
\geq &\eta^\top(k)\Phi_{\hat{A}}^\top(k+i,k)C^\top(k+i)R^{-1}(k+i)\\
&\times C(k+i)\Phi_{\hat{A}}(k+i,k)\eta(k)\\
&+\epsilon\eta^\top(k+i-1)\eta(k+i-1)\\
&-\epsilon\eta^\top(k+i)\eta(k+i).\nonumber
\end{aligned}\end{equation}
Under Condition \ref{AsQRB}, one can recursively derive that
\begin{equation}\begin{aligned}
&V(k)-V(k+i)\\
\geq&\frac{1}{r_u}\eta^\top(k)\breve{\mathcal{O}}^\top(k+i,k+1)\breve{\mathcal{O}}(k+i,k+1)\eta(k)\\
&-\epsilon\eta^\top(k)\Phi_{\hat{A}}^\top(k+i,k)\Phi_{\hat{A}}(k+i,k)\eta(k)\\
&+\epsilon\eta^\top(k)\eta(k),
\end{aligned}\end{equation}
where $r_u\triangleq\max\{r_{u,1},r_{u,2},\cdots,r_{u,s}\}$ and
\begin{equation}\begin{aligned}
\breve{\mathcal{O}}(k+i,k+1)\triangleq
\begin{bmatrix}
{C(k+1)}\Phi_{\hat{A}}(k+1,k)\\
{C(k+2)\Phi_{\hat{A}}(k+2,k)}\\
{\vdots}\\
{C(k+i)\Phi_{\hat{A}}(k+i,k)}\\
\end{bmatrix}.\nonumber
\end{aligned}\end{equation}
According to Proposition \ref{PrUD}, one knows that $\big(C(k)\Phi_{\hat{A}}(k,k-1),\Phi_{\hat{A}}(k,k-1)\big)$ is uniformly detectable. Let $\mu$, $\nu$, $\gamma$, and $\sigma$ be the parameters associated with the uniform detectability of $\big(C(k)\Phi_{\hat{A}}(k,k-1),\Phi_{\hat{A}}(k,k-1)\big)$. On one hand, if $\|\Phi_{\hat{A}}(k+\mu,k)\eta(k)\|\geq \gamma\|\eta(k)\|$, it follows from (33) that
\begin{equation}\begin{aligned}
V(k)-V(k+\nu)\geq(\frac{\sigma}{r_u}+\epsilon-\epsilon \beta)\eta^\top(k)\eta(k),\nonumber
\end{aligned}\end{equation}
where $\beta\geq 0$ is a constant satisfying $\Phi_{\hat{A}}^\top(k+\nu,k)\Phi_{\hat{A}}(k+\nu,k)\leq \beta I$ for all $k\geq0$. This constant must exist under Conditions \ref{AsQRB} and \ref{AsPB}. On the other hand, if $\|\Phi_{\hat{A}}(k+\mu,k)\zeta(k)\|<\gamma\|\zeta(k)\|$, one can derive that
\begin{equation}\begin{aligned}
V(k)-V(k+\mu)\geq\epsilon\eta^\top(k)\eta(k)-\epsilon \gamma^2\eta^\top(k)\eta(k).\nonumber
\end{aligned}\end{equation}
Clearly, one can choose a small $\epsilon$ such that $\frac{\sigma}{r_u}+\epsilon-\epsilon \beta>0$ and $\epsilon-\epsilon \gamma^2>0$. Meanwhile, note that $P^{-1}(k)\leq \frac{1}{p_l}I$ and $\Lambda(k)\leq sI$. This implies that $\eta^\top(k)\eta(k)\geq \alpha_1 V(k)$ for some constant $0<\alpha_1\leq1$. Consequently, for any $k\geq t+1$, there exists a constant $0\leq\alpha_2<1$ such that
\begin{equation}\begin{aligned}
V(k+i_k)\leq \alpha_2V(k),\nonumber
\end{aligned}\end{equation}
where $i_k\in\{\mu,\nu\}$. Thus, there exists a subsequence $\{V(k_i)\}_{i\geq1}$ decays exponentially fast with $\mu\leq k_i-k_{i-1}\leq \nu$ and $k_1\geq \max\{\mu,t+1\}$.

Step 4. Prove the exponential stability of the noise-free system (22). Under Condition \ref{AsPB}, the subsequence $\{V(k_i)\}_{i\geq1}$ decaying exponentially fast implies that the subsequence $\{\|\eta(k_i)\|\}_{i\geq1}$ also decays exponentially fast. Then, for any $k\geq \max\{\nu,t+1\}$, one can find $i\geq1$ such that $k_i\leq k\leq k_{i+1}$. Thus,
\begin{equation}\begin{aligned}
\|\eta(k)\|=\|\Phi_{\hat{A}}(k,k_i)\eta(k_i)\|\leq \alpha_3\alpha_4^i=\alpha_3(\alpha_4^{i/k})^k,\nonumber
\end{aligned}\end{equation}
for some constants $\alpha_3\geq0$ and $0\leq\alpha_4<1$. Evidently, $k_1+(i-1)\mu\leq k\leq k_1+(i-1)\nu$. This implies that
\begin{equation}\begin{aligned}
\|\eta(k)\|\leq\alpha_2(\alpha_4^{i/(k_1+(i-1)\nu)})^k\leq\alpha_3(\alpha_4^{1/k_1})^k,\nonumber
\end{aligned}\end{equation}
where $0\leq\alpha_4^{1/k_1}<1$. The proof is completed.
\end{proof}

\begin{remark}
In the proof of Theorem \ref{TheoremLya}, the matrix $\Lambda(k)$ ensures that the stability of each subsystem remains unaffected by the global LIS, thereby achieving decoupling, as evidenced in (31) and (32). Without this matrix, the stability of each subsystem would be related to the size of the global LIS (i.e., the decoupling variable $\vartheta_i(k)=s$), not only to its neighbors. 
\end{remark}
\subsection{Stability of the estimate error system: A Markov analysis method}
In the last subsection, we have proven that the distributed estimator (3) is exponentially stable using the Lyapunov method. One of the key conditions is the decoupling variable $\vartheta_i(k)=\sqrt{|\mathbb{O}(k+1)|+1}$. In this subsection, we will further show that the decoupling variable $\vartheta_i(k)$ can also be chosen as $\sqrt{|\mathbb{I}(k)|+1}$. The following proposition, which is the core idea of this subsection, shows that there exists an equivalent Markov system for any LIS.

\begin{proposition}\label{PrLISeqMJP}
Consider a time-varying LIS
\begin{equation}\begin{aligned}
\zeta_{i}(k+1)=\Gamma_{ii}(k)\zeta_{i}(k)+\sum_{j\in\mathbb{I}^\Gamma_i(k)}\Gamma_{ij}(k)\zeta_{j}(k),
\end{aligned}\end{equation}
where $\Gamma_{ij}(k)\in\mathbb{R}^{n_i\times n_j}$ and
\begin{equation}
\mathbb{I}^\Gamma_i(k)\triangleq\{j|\Gamma_{ij}(k)\neq0,j=1,\cdots,s,j\neq i\}.\nonumber
\end{equation}
Meanwhile, consider a time-varying Markov system
\begin{equation}
\xi(k+1)=\big(|\mathbb{I}^\Gamma_i(k)|+1\big)\Gamma_{\varpi(k+1)\varpi(k)}(k)\xi(k),
\end{equation}
where $\{\varpi(k)\}$ is a Markov process takes the value in the set $\{1,2,\cdots,s\}$ with the transition probability
\begin{equation}\begin{aligned}
p_{ij}(k)&\triangleq\mathrm{Pr}\big(\varpi(k+1)=i|\varpi(k)=j\big)\\
&=
\left\{ \begin{array}{l}
1/\big(|\mathbb{I}^\Gamma_i(k)|+1\big),\ \mathrm{if}\ j\in\mathbb{I}^\Gamma_i(k)\cup\{i\},\\
0,\ \mathrm{if}\ j\notin\mathbb{I}^\Gamma_i(k)\cup\{i\}.
\end{array} \right.\nonumber
\end{aligned}\end{equation}
Define
\begin{equation}\begin{aligned}
\xi_i(k)\triangleq\mathrm{Pr}\big(\varpi(k)=i\big)\mathrm{E}[\zeta(k)|\varpi(k)=i],
\end{aligned}\end{equation}
and let $\xi_i(0)=\zeta_i(0)$. Then, the dynamic behavior of $\zeta_i(k)$ is equivalent to that of $\xi_i(k)$, i.e., $\zeta_i(k)=\xi_i(k)$ for all $i=1,2,\cdots,s$ and $k\geq0$.
\end{proposition}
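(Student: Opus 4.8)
The plan is to reinterpret $\xi_i(k)$ in (36) as the \emph{unnormalized conditional mean} $\xi_i(k)=\mathrm{E}[\xi(k)\mathbf{1}_{\{\varpi(k)=i\}}]$ of the state of the Markov jump system (35), and then to show that this quantity obeys exactly the recursion (34) satisfied by $\zeta_i(k)$; a one‑line induction on $k$ finishes the proof. (I read the expression $\mathrm{Pr}(\varpi(k)=i)\,\mathrm{E}[\,\cdot\,|\,\varpi(k)=i]$ in (36) as shorthand for $\mathrm{E}[\,\cdot\,\mathbf{1}_{\{\varpi(k)=i\}}]$, so that it stays meaningful even when a mode carries zero probability, and I read the index $i$ in the prefactor of (35) as $\varpi(k+1)$, consistently with the subscript of $\Gamma_{\varpi(k+1)\varpi(k)}$.)

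First I would substitute the jump dynamics (35) into the definition: on the event $\{\varpi(k+1)=i\}$ one has $\xi(k+1)=(|\mathbb{I}^\Gamma_i(k)|+1)\,\Gamma_{i\varpi(k)}(k)\,\xi(k)$, where the in‑degree $|\mathbb{I}^\Gamma_i(k)|+1$ is a deterministic scalar. Applying the law of total expectation over $\varpi(k)\in\{1,\dots,s\}$ gives
\[
\xi_i(k+1)=\big(|\mathbb{I}^\Gamma_i(k)|+1\big)\sum_{j=1}^{s}\Gamma_{ij}(k)\,\mathrm{E}[\xi(k)\mathbf{1}_{\{\varpi(k)=j\}}\mathbf{1}_{\{\varpi(k+1)=i\}}].
\]
The crucial step is to factor the expectation inside the sum as $p_{ij}(k)\,\xi_j(k)$. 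This uses the Markov structure: the mode chain $\{\varpi(k)\}$ is autonomous, and $\xi(k)$ is a deterministic function of $\xi(0)$ and the mode history $\varpi(0),\dots,\varpi(k)$; hence, conditioning on that history and invoking the Markov property of $\{\varpi(k)\}$, $\mathrm{E}[\xi(k)\mathbf{1}_{\{\varpi(k)=j\}}\mathbf{1}_{\{\varpi(k+1)=i\}}]=p_{ij}(k)\,\mathrm{E}[\xi(k)\mathbf{1}_{\{\varpi(k)=j\}}]=p_{ij}(k)\,\xi_j(k)$. I expect this measure‑theoretic bookkeeping to be the only genuine obstacle; everything around it is algebraic.

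Finally I would plug in the transition probabilities: for $j\in\mathbb{I}^\Gamma_i(k)\cup\{i\}$, $p_{ij}(k)=1/(|\mathbb{I}^\Gamma_i(k)|+1)$ cancels the prefactor exactly, while for $j\notin\mathbb{I}^\Gamma_i(k)\cup\{i\}$ the term vanishes because $p_{ij}(k)=0$ (equivalently $\Gamma_{ij}(k)=0$). This leaves
\[
\xi_i(k+1)=\sum_{j\in\mathbb{I}^\Gamma_i(k)\cup\{i\}}\Gamma_{ij}(k)\,\xi_j(k)=\Gamma_{ii}(k)\,\xi_i(k)+\sum_{j\in\mathbb{I}^\Gamma_i(k)}\Gamma_{ij}(k)\,\xi_j(k),
\]
which is precisely (34) with $\zeta$ replaced by $\xi$. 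Since $\xi_i(0)=\zeta_i(0)$ by hypothesis, induction on $k$ gives $\xi_i(k)=\zeta_i(k)$ for every $i=1,\dots,s$ and every $k\geq0$, which completes the argument.
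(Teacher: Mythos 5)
Your proposal is correct and follows essentially the same route as the paper's proof: both derive the one-step recursion for $\xi_i(k+1)$ by conditioning on the previous mode $\varpi(k)$, use the Markov property to factor the joint expectation into $p_{ij}(k)$ times $\xi_j(k)$ so that the prefactor $|\mathbb{I}^\Gamma_i(k)|+1$ cancels, and conclude by induction from the matching initial condition. Your indicator-function formulation $\xi_i(k)=\mathrm{E}[\xi(k)\mathbf{1}_{\{\varpi(k)=i\}}]$ is just a cleaner bookkeeping of the paper's Bayes-theorem computation, and it makes explicit the step $\mathrm{E}[\xi(k)\,|\,\varpi(k+1)=i,\varpi(k)=j]=\mathrm{E}[\xi(k)\,|\,\varpi(k)=j]$ that the paper leaves implicit.
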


\begin{proof}
It follows from the Bayes' theorem that
\begin{equation}\begin{aligned}
&\mathrm{Pr}\big(\varpi(k+1)=i\big)\mathrm{p}\big(\xi(k+1)|\varpi(k+1)=i\big)\\
=&\sum^s_{j=1}\mathrm{p}\big(\xi(k+1),\varpi(k+1)=i,\varpi(k)=j\big)\\
=&\sum_{j\in\mathbb{I}^\Gamma_i(k)\cup\{i\}}\mathrm{p}\big(\xi(k+1)|\varpi(k+1)=i,\varpi(k)=j\big)\\
&\ \ \ \ \times p_{ij}(k)\mathrm{Pr}\big(\varpi(k)=j\big).
\end{aligned}\end{equation}
Based on (36) and (37), one has
\begin{align}
\xi_i(k+1)
=&\sum^s_{j=1}p_{ij}(k)\mathrm{Pr}\big(\varpi(k)=j\big)\big(|\mathbb{I}^\Gamma_i(k)|+1\big)\Gamma_{ij}(k)\nonumber\\
&\ \ \ \ \times\mathrm{E}[\xi(k)|\varpi(k+1)=i,\varpi(k)=j]\nonumber\\
=&\Gamma_{ii}(k)\xi_i(k)+\sum_{j\in\mathbb{I}_i^\Gamma(k)}\Gamma_{ij}(k)\xi_j(k).
\end{align}
A comparison of (34) and (38) shows that $\xi_i(k)=\zeta_i(k)$ for all $i=1,2,\cdots,s$ and $k\geq0$ if $\xi_i(0)=\zeta_i(0)$ for $i=1,2,\cdots,s$. The proof is completed.
\end{proof}

Let us define
\begin{equation}\begin{aligned}
\bar{\mathfrak{F}}_k(X)\triangleq\sum^{s}_{i=1}\mathcal{A}_{i}(k)X\mathcal{A}_{i}^\top(k)+Q(k),
\end{aligned}\end{equation}
\begin{equation}\begin{aligned}
\mathfrak{F}_k(X)\triangleq X-&XC^\top(k)\big(C(k)XC^\top(k)+R(k)\big)^{-1}\\
&\times C(k)X,
\end{aligned}\end{equation}
where
\begin{equation}\begin{aligned}
\mathcal{A}_{i}(k)\triangleq\vartheta_i(k)E_iA(k),
\end{aligned}\end{equation}
\begin{equation}\begin{aligned}
E_i\triangleq\mathrm{Diag}(\underbrace{0,\cdots,0}_{i-1\ \mathrm{times}},I,\underbrace{0,\cdots,0}_{s-i\ \mathrm{times}}).\nonumber
\end{aligned}\end{equation}

\begin{proposition}\label{PrMonotonicity}
For any $k_2\geq k_1$, one has
\begin{equation}\begin{aligned}
&(\bar{\mathfrak{F}}_{k_2}\circ\mathfrak{F}_{k_2})\circ\cdots\circ(\bar{\mathfrak{F}}_{k_1}\circ\mathfrak{F}_{k_1})(X)\\
&\ \ \ \ \ \ \ \ \geq(\bar{\mathfrak{F}}_{k_2}\circ\mathfrak{F}_{k_2})\circ\cdots\circ(\bar{\mathfrak{F}}_{k_1}\circ\mathfrak{F}_{k_1})(Y)\nonumber
\end{aligned}\end{equation}
if $X\geq Y\geq0$.
\end{proposition}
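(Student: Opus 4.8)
The plan is to reduce the composite-map monotonicity to two elementary facts, proved by induction on the number of composed blocks: first, that each single block $\bar{\mathfrak{F}}_k\circ\mathfrak{F}_k$ is monotone, i.e., $X\geq Y\geq0$ implies $\bar{\mathfrak{F}}_k\circ\mathfrak{F}_k(X)\geq\bar{\mathfrak{F}}_k\circ\mathfrak{F}_k(Y)\geq0$; and second, that monotone maps compose into monotone maps, so the full chain inherits the property. The induction on $k_2-k_1$ is then immediate: the base case $k_2=k_1$ is exactly the single-block statement, and the inductive step applies $\bar{\mathfrak{F}}_{k_2}\circ\mathfrak{F}_{k_2}$ to the (by hypothesis, ordered and positive-semidefinite) outputs of the shorter chain. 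So the real content is the one-step claim.

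For $\mathfrak{F}_k$, I would use the matrix-inversion-lemma form already invoked elsewhere in the paper: for $X>0$,
\begin{equation}
\mathfrak{F}_k(X)=\big(X^{-1}+C^\top(k)R^{-1}(k)C(k)\big)^{-1},\nonumber
\end{equation}
which is manifestly monotone increasing in $X$ because $X\mapsto X^{-1}$ is operator-antimonotone on positive-definite matrices and the map $X\mapsto X^{-1}+C^\top(k)R^{-1}(k)C(k)$ preserves the (reversed) order, then another inversion restores it. Positivity $\mathfrak{F}_k(X)\geq0$ follows since $R(k)>0$ makes $C(k)XC^\top(k)+R(k)$ invertible and $\mathfrak{F}_k(X)=(I-K_k C(k))X(I-K_k C(k))^\top+K_k R(k)K_k^\top\geq0$ with $K_k=XC^\top(k)(C(k)XC^\top(k)+R(k))^{-1}$. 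The boundary case $X\geq0$ without strict positivity is handled by a continuity/limiting argument: replace $X,Y$ by $X+\varepsilon I\geq Y+\varepsilon I>0$, apply the strict case, and let $\varepsilon\downarrow0$, using that $\mathfrak{F}_k$ is continuous. For $\bar{\mathfrak{F}}_k$, monotonicity is trivial and linear: if $U\geq V\geq0$ then $\mathcal{A}_i(k)U\mathcal{A}_i^\top(k)\geq\mathcal{A}_i(k)V\mathcal{A}_i^\top(k)$ for each $i$, and summing over $i$ and adding $Q(k)\geq0$ preserves the order; positivity of the output is clear since $Q(k)\geq0$ and each summand is positive semidefinite. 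Composing, $\bar{\mathfrak{F}}_k\circ\mathfrak{F}_k$ is monotone and maps the positive-semidefinite cone into itself.

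The main obstacle — really the only subtlety — is the lack of strict positivity in the hypothesis $X\geq Y\geq0$, since the clean operator-antimonotonicity argument for $\mathfrak{F}_k$ wants invertible arguments; this is dispatched by the $\varepsilon$-regularization above, but it should be stated carefully so that the limit is taken after both the inversion identities and the order comparison, not before. An alternative that avoids inverses entirely is to write $\mathfrak{F}_k(X)=(I-K_k^X C(k))X(I-K_k^X C(k))^\top+K_k^X R(k)(K_k^X)^\top$ and use the standard Kalman-filter optimality inequality $\mathfrak{F}_k(X)\le(I-L C(k))X(I-LC(k))^\top+LR(k)L^\top$ for \emph{any} gain $L$; choosing $L=K_k^Y$ gives $\mathfrak{F}_k(X)\le(I-K_k^Y C(k))X(I-K_k^Y C(k))^\top+K_k^Y R(k)(K_k^Y)^\top$, and then $X\ge Y$ makes the right-hand side dominate $\mathfrak{F}_k(Y)$ termwise, so $\mathfrak{F}_k(X)\ge\mathfrak{F}_k(Y)$ — but this requires being attentive to the direction of the optimality inequality. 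Either route closes the one-step claim, and the induction then finishes the proposition.
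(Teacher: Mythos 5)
Your proof is correct and takes essentially the same route as the paper: the paper simply cites the monotonicity of the discrete Riccati equation for the one-step map $\mathfrak{F}_k$, observes that $\bar{\mathfrak{F}}_k$ is trivially order-preserving (it is a sum of congruences plus $Q(k)\geq 0$), and iterates, whereas you additionally supply the proof of the cited fact via the information-form identity $\mathfrak{F}_k(X)=\big(X^{-1}+C^\top(k)R^{-1}(k)C(k)\big)^{-1}$ together with an $\varepsilon$-regularization for singular arguments, which is sound. One caveat on your sketched alternative: the Kalman-optimality route as written picks the gain on the wrong side --- you should bound $\mathfrak{F}_k(Y)$ from above using the gain $K_k^X$ that is optimal for $X$, i.e. $\mathfrak{F}_k(Y)\leq(I-K_k^XC(k))Y(I-K_k^XC(k))^\top+K_k^XR(k)(K_k^X)^\top\leq(I-K_k^XC(k))X(I-K_k^XC(k))^\top+K_k^XR(k)(K_k^X)^\top=\mathfrak{F}_k(X)$, whereas your chain only establishes that $\mathfrak{F}_k(X)$ and $\mathfrak{F}_k(Y)$ are both dominated by the same quantity, which yields no comparison between them; since your primary argument does not rely on this alternative, the proposition is still proved.
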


\begin{proof}
According to the monotonicity of the discrete Riccati equation \cite{Sinopoli1333199}, one has $\mathfrak{F}_{k_1}(X)\geq\mathfrak{F}_{k_1}(Y)$. Then, it is trivial that $\bar{\mathfrak{F}}_{k_1}\circ\mathfrak{F}_{k_1}(X)\geq\bar{\mathfrak{F}}_{k_1}\circ\mathfrak{F}_{k_1}(Y)$. Repeating the process completes the proof of the proposition.
\end{proof}

\begin{proposition}\label{PrInitial}
If Conditions \ref{AsLUR}, \ref{AsQRB}, and \ref{AsThetaB} are fulfilled, then the following assertions are equivalent:
\begin{itemize}
\item[1)] The sequence $\{P(k)\}_{k\geq0}$ is uniformly bounded from above for any initial condition $P(0)\geq0$.
\item[2)] The sequence $\{P(k)\}_{k\geq0}$ is uniformly bounded from above for some initial condition $P(0)\geq0$.
\end{itemize}
\end{proposition}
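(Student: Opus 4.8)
The plan is to prove the nontrivial direction, namely that 2) implies 1), since 1) implies 2) is immediate. So I assume there is some $P(0)=Y_0\geq0$ for which the generated sequence $\{P(k)\}_{k\geq0}$ stays uniformly bounded above, say $P(k)\leq \bar p I$ for all $k$, and I must show boundedness for an arbitrary $P(0)=X_0\geq0$. The idea is to run the recursion from both initial conditions in parallel: writing $\Psi_k \triangleq \bar{\mathfrak{F}}_{k}\circ\mathfrak{F}_{k}$ for the one-step map from (40)--(39) (recalling that, by (41)--(42) and $\vartheta_i(k)$ entering through $\mathcal A_i(k)$, $\bar{\mathfrak F}_k$ reproduces the coupling in (6)), the sequences from $X_0$ and $Y_0$ are $P^X(k)=\Psi_{k-1}\circ\cdots\circ\Psi_0(X_0)$ and $P^Y(k)=\Psi_{k-1}\circ\cdots\circ\Psi_0(Y_0)$. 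I would split the bound on $P^X(k)$ into a ``transient'' part that forgets $X_0$ and a ``comparison'' part controlled by $P^Y(k)$.

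For the transient part, I would invoke Step 1 of the proof of Theorem \ref{TheoremLya}: under Conditions \ref{AsLUR}, \ref{AsQRB}, \ref{AsThetaB}, the sequence $\{\bar P_i(k)\}$ and hence $\{P_i(k)\}$ becomes uniformly bounded \emph{below} by a positive multiple of $I$ for $k\geq t$, \emph{independently of the initial condition}, because the argument there only uses uniform reachability of $(\vartheta_i(k)A_{ii}(k),B_{ii}(k))$ and the invariance of reachability under bounded feedback, never $P(0)$. In particular, for $k\geq t$ we get $P^Y(k)\geq p_l I$. The key inequality I want is a contraction-type estimate: for a Riccati-type map with the observation update $\mathfrak F_k$ followed by the coupling propagation $\bar{\mathfrak F}_k$, if $X\geq Y\geq p_l I$ then $\Psi_k(X)-\Psi_k(Y)\leq \kappa(X-Y)$ for some $\kappa<1$ depending only on the uniform bounds — this is the standard fact that the Riccati recursion is a contraction (in the appropriate sense) once one is bounded away from zero, exploiting $\mathfrak F_k(X)=(X^{-1}+C^\top R^{-1}C)^{-1}$ and monotonicity of matrix inversion. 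Combined with Proposition \ref{PrMonotonicity} (monotonicity of the composed maps), this will let me compare $P^X(k)$ with $P^Y(k)$.

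Concretely, I would first dominate $X_0$: since $Y_0\geq 0$ need not exceed $X_0$, I would instead start the comparison at time $t$ (or at the first time the lower bound $p_l I$ kicks in). Pick a constant $M$ large enough that $P^X(t)\leq M I$ and $P^Y(t)\leq \bar p I\leq MI$; both exist since the maps $\Psi_0,\dots,\Psi_{t-1}$ are continuous and the finitely many steps cannot blow up from a fixed initial matrix. Now consider the auxiliary sequence started at $MI$ at time $t$. By Proposition \ref{PrMonotonicity}, $P^X(k)\leq \Psi_{k-1}\circ\cdots\circ\Psi_t(MI)$ for all $k\geq t$, and likewise $P^Y(k)\leq$ the same quantity; but I also want a bound \emph{in terms of} $P^Y$. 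The cleaner route: apply the contraction estimate to $X_k\triangleq\max(P^X(k),P^Y(k))$ versus $P^Y(k)$ — more carefully, to $\Psi_{k-1}\circ\cdots\circ\Psi_t(MI)$ versus $\Psi_{k-1}\circ\cdots\circ\Psi_t(p_lI)$, both of which stay $\geq p_l I$ for $k$ past a further transient by Step 1 applied to these auxiliary runs — to conclude that their difference decays geometrically, hence $\Psi_{k-1}\circ\cdots\circ\Psi_t(MI)\leq \Psi_{k-1}\circ\cdots\circ\Psi_t(p_lI)+\kappa^{k-t'}(M-p_l)I$. Since $\Psi_{k-1}\circ\cdots\circ\Psi_t(p_lI)\leq \Psi_{k-1}\circ\cdots\circ\Psi_t(\bar p I)$ is itself uniformly bounded (comparing with the bounded run $P^Y$, again via Proposition \ref{PrMonotonicity} after noting $\bar p I\geq P^Y(t)$ so this auxiliary run dominates $P^Y$ but is generated by the same finitely-parametrized maps whose forward images of a fixed bounded set stay bounded), we obtain a uniform bound on $P^X(k)$.

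The main obstacle I anticipate is making the contraction estimate for $\Psi_k$ precise and uniform in $k$: one must show that the geometric rate $\kappa$ and the threshold past which everything is $\geq p_l I$ can be chosen uniformly over $k$, using only Conditions \ref{AsLUR}, \ref{AsQRB}, \ref{AsThetaB} and the uniform bounds on $A(k),C(k)$. This requires combining (i) the uniform lower bound $P(k)\geq p_l I$ from Step 1 of Theorem \ref{TheoremLya}, (ii) a uniform \emph{upper} bound along the auxiliary runs (which is exactly what monotone domination by the bounded sequence $P^Y$ provides, so there is no circularity), and (iii) the elementary estimate that, on the order interval $[p_l I, MI]$, the map $X\mapsto (X^{-1}+C^\top R^{-1}C)^{-1}$ contracts the ordering gap by a factor bounded away from $1$ whenever $C^\top R^{-1}C$ is bounded below on the relevant directions — which in turn follows from the uniform detectability already exploited in Proposition \ref{PrUD}. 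Assembling these pieces in the right order, with the transient handled separately, yields the equivalence.
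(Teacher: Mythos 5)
Your overall architecture is right up to a point: the trivial direction, the use of the initial-condition-independent lower bound $\bar P(t)\geq p_l I$ from Step~1 of Theorem~\ref{TheoremLya}, and the use of Proposition~\ref{PrMonotonicity} to dominate the run from an arbitrary $P(0)$ all match what is needed. The genuine gap is the step you yourself flag as ``the main obstacle'': the one-step contraction estimate $\Psi_k(X)-\Psi_k(Y)\leq\kappa(X-Y)$ with a uniform $\kappa<1$ for $X\geq Y\geq p_lI$ is false as stated. On directions in the kernel of $C(k)$ the update $\mathfrak F_k$ does not shrink the ordering gap at all (your appeal to $C^\top R^{-1}C$ being ``bounded below on the relevant directions'' fails precisely there, and $C^\top R^{-1}C$ is generically singular), and the propagation $\bar{\mathfrak F}_k(X)=\sum_i\mathcal A_i(k)X\mathcal A_i^\top(k)+Q(k)$ can then expand the gap. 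A correct multi-step version would amount to proving uniform exponential stability of the closed-loop positive operator $X\mapsto\sum_i(\mathcal A_i-\mathcal K_iC)X(\mathcal A_i-\mathcal K_iC)^\top$, i.e.\ essentially $\rho(\mathfrak L_{\mathcal K})<1$; in the paper that is a downstream consequence (Theorem~\ref{TheoremTIPiif}, time-invariant case only) and is not available here, so your argument either rests on an unproved substantial claim or risks circularity. Also, your parenthetical comparison of the auxiliary run from $\bar pI$ with $P^Y$ points the inequality the wrong way (domination of $P^Y$ gives no upper bound); the usable comparison is $p_lI\leq P^Y(t)$, which bounds the run from $p_lI$ by the bounded run.

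The paper avoids contraction entirely via a subhomogeneity (concavity) trick: for $\alpha\geq1$ one has $\mathfrak F_k(\alpha X)=\alpha\big(X-XC^\top(CXC^\top+\tfrac{1}{\alpha}R)^{-1}CX\big)\leq\alpha\mathfrak F_k(X)$ and, since $Q\leq\alpha Q$, also $\bar{\mathfrak F}_k(\alpha X)\leq\alpha\bar{\mathfrak F}_k(X)$. Choosing $\alpha$ so that $\alpha\bar P(t)\geq X(t)$ (possible because $\bar P(t)\geq p_lI>0$ and $X(t)$ is finite after finitely many steps), monotonicity gives $X(k+1)\leq(\bar{\mathfrak F}_k\circ\mathfrak F_k)\circ\cdots\circ(\bar{\mathfrak F}_t\circ\mathfrak F_t)\big(\alpha\bar P(t)\big)\leq\alpha\bar P(k+1)$, and the right-hand side is uniformly bounded by hypothesis~2). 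You should replace your contraction step with this scaling argument; the rest of your scaffolding then goes through.
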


\begin{proof}
Evidently, the first assertion is sufficient for the second one. Suppose that the second assertion is true. Then, one can know from (5) and (6) that $\{\bar{P}(k)\}_{k\geq1}$ also is uniformly bounded from above. Consider the difference equation
\begin{equation}\begin{aligned}
X(k)=(\bar{\mathfrak{F}}_{k-1}\circ\mathfrak{F}_{k-1})\circ\cdots\circ(\bar{\mathfrak{F}}_{1}\circ\mathfrak{F}_{1})\big(\bar{\mathfrak{F}}_{0}(X)\big).
\end{aligned}\end{equation}
As in the proof of Theorem \ref{TheoremLya}, there exists constants $t>0$ and $p_l>0$ such that $\bar{P}(t)\geq p_{l}I$. Thus, for any $X\geq0$, one can find a constant $\alpha\geq0$ such that
\begin{equation}\begin{aligned}
\alpha\bar{P}(t)\geq X(t).\nonumber
\end{aligned}\end{equation}
Then, it follows from Proposition \ref{PrMonotonicity} that
\begin{equation}\begin{aligned}
(\bar{\mathfrak{F}}_{k}\circ\mathfrak{F}_{k})\circ\cdots\circ(\bar{\mathfrak{F}}_{t}\circ\mathfrak{F}_{t})\big(\alpha\bar{P}(t)\big)\geq X(k+1).
\end{aligned}\end{equation}
Recall the definition of $\mathfrak{F}_k(\cdot)$, one can derive that
\begin{equation}\begin{aligned}
\mathfrak{F}_{t}\big(\alpha\bar{P}(t)\big)=&\alpha\bar{P}(t)-\alpha\bar{P}(t)C^\top(k)\big(C(k)\bar{P}(t)C^\top(k)\\
&+\frac{1}{\alpha}R(k)\big)^{-1}C(k)\bar{P}(t)\\
\leq& \alpha\mathfrak{F}_{t}\big(\bar{P}(t)\big),\nonumber
\end{aligned}\end{equation}
which immediately implies
\begin{equation}\begin{aligned}
(\bar{\mathfrak{F}}_{t}\circ\mathfrak{F}_{t})\big(\alpha\bar{P}(t)\big)\leq \alpha(\bar{\mathfrak{F}}_{t}\circ\mathfrak{F}_{t})\big(\bar{P}(t)\big).\nonumber
\end{aligned}\end{equation}
Repeating the procedure yields
\begin{equation}\begin{aligned}
&(\bar{\mathfrak{F}}_{k}\circ\mathfrak{F}_{k})\circ\cdots\circ(\bar{\mathfrak{F}}_{t}\circ\mathfrak{F}_{t})\big(\alpha\bar{P}(t)\big)\\
&\leq \alpha(\bar{\mathfrak{F}}_{k}\circ\mathfrak{F}_{k})\circ\cdots\circ (\bar{\mathfrak{F}}_{t}\circ\mathfrak{F}_{t})\big(\bar{P}(t)\big)=\alpha\bar{P}(k+1).
\end{aligned}\end{equation}
It follows from (43) and (44) that the sequence $\{X(k)\}_{k\geq1}$ is uniformly bounded from above for any $X\geq0$. Finally, based on (5) and (6), it can be verified that the DMRE (5) and (6) can be reformulated as the compact form
\begin{equation}\begin{aligned}
\bar{P}(k)=\bar{\mathfrak{F}}_k\big(P(k)\big),\ P(k+1)=\mathfrak{F}_k\big(\bar{P}(k)\big).
\end{aligned}\end{equation}
A comparison of (42) and (45) completes the proof.
\end{proof}

Based on Propositions \ref{PrLISeqMJP}, \ref{PrMonotonicity}, and \ref{PrInitial}, we can give the following theorem.

\begin{theorem}\label{TheoremMarkov}
Consider the LIS (2) and the distributed estimator (3) with gain (4). Let $w(k)=0$ and $v(k+1)=0$ for $k\geq0$. If $\vartheta_i(k)=\sqrt{|\mathbb{I}_i(k)|+1}$ for $i=1,2,\cdots,s$ and $k\geq0$ and Conditions \ref{AsLUR}, \ref{AsQRB}, and \ref{AsPB} are fulfilled, then the estimate error system is marginally stable, that is,
\begin{equation}\begin{aligned}
\lim_{k\to\infty}\|\hat{e}(k)\|<\infty.
\end{aligned}\end{equation}
\end{theorem}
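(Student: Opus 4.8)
The plan is to reduce the analysis of the estimate error system with the in-neighbor decoupling variable $\vartheta_i(k)=\sqrt{|\mathbb{I}_i(k)|+1}$ to the already-established out-neighbor case (Theorem \ref{TheoremLya}) via the Markov equivalence of Proposition \ref{PrLISeqMJP}. Since $w(k)=0$ and $v(k+1)=0$, the error system is the noise-free recursion $\hat e(k)=\bigl(I-K(k)C(k)\bigr)A(k-1)\hat e(k-1)$, written in distributed form as in (34) with $\Gamma_{ij}(k-1)=\bigl(I-K_i(k)C_i(k)\bigr)A_{ij}(k-1)$, so $\mathbb{I}_i^\Gamma(k-1)=\mathbb{I}_i(k-1)$. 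Applying Proposition \ref{PrLISeqMJP}, the dynamic behavior of $\hat e_i(k)$ coincides with that of the Markov-system state $\xi_i(k)$ governed by a transition matrix scaled by $|\mathbb{I}_i(k-1)|+1$. The point is that boundedness of $\|\hat e(k)\|$ follows from boundedness of $\mathrm{E}\bigl[\|\xi(k)\|^2\bigr]$ (or of the associated second-moment matrix), because $\|\xi_i(k)\|\le \mathrm{E}[\|\zeta(k)\|\,|\,\varpi(k)=i]\,\mathrm{Pr}(\varpi(k)=i)$ and hence $\sum_i\|\xi_i(k)\|^2$ is controlled by the second moment of the Markov chain's state.

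The key technical step is to show that the second-moment matrix of the Markov system remains uniformly bounded. I would introduce the second-moment recursion: writing $\Xi_i(k)\triangleq \mathrm{Pr}(\varpi(k)=i)\,\mathrm{E}[\zeta(k)\zeta^\top(k)\,|\,\varpi(k)=i]$, one gets from (35) a linear recursion
\begin{equation}\begin{aligned}
\Xi_i(k+1)=\sum_{j\in\mathbb{I}_i(k)\cup\{i\}}\bigl(|\mathbb{I}_i(k)|+1\bigr)^2 p_{ij}(k)\,\Gamma_{ij}(k)\Xi_j(k)\Gamma_{ij}^\top(k),
\end{aligned}\end{equation}
and since $p_{ij}(k)=1/(|\mathbb{I}_i(k)|+1)$ on the support, this simplifies to $\Xi_i(k+1)=\bigl(|\mathbb{I}_i(k)|+1\bigr)\sum_{j\in\mathbb{I}_i(k)\cup\{i\}}\Gamma_{ij}(k)\Xi_j(k)\Gamma_{ij}^\top(k)$. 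Comparing this to the DMRE propagation (6), one sees that $\Xi(k)\triangleq\mathrm{Diag}(\Xi_1(k),\dots,\Xi_s(k))$ obeys a recursion of exactly the same structure as $\bar P(k)$ but now with the in-neighbor scaling absorbed into the same Markov-averaging identity that appeared (in the $\vartheta_i=\sqrt{|\mathbb{O}_i|+1}$ form) in the Lyapunov proof. I would then invoke Condition \ref{AsPB} together with Proposition \ref{PrInitial} to conclude that $\{P(k)\}$, hence $\{\bar P(k)\}$, is uniformly bounded from above, and use the matrix-inverse-lemma bound $P_i(k)=(\bar P_i^{-1}(k)+C_i^\top R_i^{-1}C_i)^{-1}\le \bar P_i(k)$ to dominate $\Xi_i(k)$ by a bounded multiple of $\bar P_i(k)$; tracking the initial-condition comparison $\Xi(0)\le\alpha\bar P(t)$ as in Proposition \ref{PrInitial} and propagating through Proposition \ref{PrMonotonicity} gives $\Xi(k)\le\alpha\bar P(k)\le \alpha p_u I$ for all $k$, where $p_u=\max_i p_{u,i}\vartheta_{u,i}^2\|\cdot\|$-type constant from Conditions \ref{AsQRB}, \ref{AsPB}, \ref{AsThetaB}.

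Finally I would translate the second-moment bound back: $\|\hat e(k)\|^2=\sum_i\|\xi_i(k)\|^2\le \sum_i\mathrm{Tr}(\Xi_i(k))=\mathrm{Tr}(\Xi(k))\le n\alpha p_u<\infty$, uniformly in $k$, which yields (46). The main obstacle I anticipate is making the comparison between the deterministic second-moment recursion for $\Xi(k)$ and the DMRE propagation for $\bar P(k)$ fully rigorous: one must verify that the block-diagonal structure is preserved (it is, because $\zeta_i$ conditioned on $\varpi(k)=i$ lives in the $i$th block and the Markov transition only mixes blocks that are genuine in-neighbors), that the scaling factors match the DMRE exactly (this is where $\vartheta_i(k)=\sqrt{|\mathbb{I}_i(k)|+1}$ is forced), and that the initial-condition domination argument of Proposition \ref{PrInitial}—which was stated for the $P(k)$-recursion—applies verbatim to $\Xi(k)$ since both are generated by the same operator family $\bar{\mathfrak F}_k\circ\mathfrak F_k$ up to the harmless Riccati-correction term that only decreases $\Xi$.
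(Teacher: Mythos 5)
Your proposal is correct and follows essentially the same route as the paper: invoke the Markov equivalence of Proposition \ref{PrLISeqMJP} on the noise-free error dynamics, derive the second-moment recursion for $\Xi_i(k)$, dominate it by a scalar multiple of the DMRE iterates via the monotonicity/initial-condition argument of Propositions \ref{PrMonotonicity} and \ref{PrInitial}, and close with Jensen's inequality. The only (immaterial) difference is that you work with $\hat e(k)$ and $\Gamma_{ij}=(I-K_iC_i)A_{ij}$, comparing $\Xi$ against $P(k)$, whereas the paper works with the prediction error $\bar e(k)$ and $\Gamma_{ij}=A_{ij}(I-K_jC_j)$ so that the comparison is against $\bar P(k)$ in its Lyapunov form (51).
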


\begin{proof}
The inequality (46) is equivalent to $\lim_{k\to\infty}\|\bar{e}(k)\|<\infty$. Thus, we will analyze $\bar{e}(k)$ later.
Based on (2) and (3), the local one-step prediction error can be given by
\begin{equation}\begin{aligned}
&\bar{e}_i(k+1)\\
=&\sum_{j\in\mathbb{I}_i(k)\cup\{i\}}\big(A_{ij}(k)-A_{ij}(k)K_j(k)C_j(k)\big)\bar{e}_j(k)\\
&-A_{ij}(k)K_{j}(k)v_j(k)+w_i(k).\nonumber
\end{aligned}\end{equation}
Recall the Markov system (35) and define
\begin{equation}\begin{aligned}
\Xi_i(k)\triangleq\mathrm{Pr}(\varpi(k)=i)\mathrm{E}[\xi(k)\xi^\top(k)|\varpi(k)=i].\nonumber
\end{aligned}\end{equation}
Similar to the proof of Proposition \ref{PrLISeqMJP}, one has
\begin{equation}\begin{aligned}
&\Xi_i(k+1)\\
=&\sum_{j\in\mathbb{I}^\Gamma_i(k)\cup\{i\}}\big(|\mathbb{I}^\Gamma_i(k)|+1\big)\mathrm{Pr}\big(\varpi(k)=j\big)\Gamma_{ij}(k)\\
&\times\mathrm{E}[\xi(k)\xi^\top(k)|\varpi(k+1)=i,\varpi(k)=j]\\
=&\sum_{j\in\mathbb{I}^\Gamma_i(k)\cup\{i\}}\big(|\mathbb{I}^\Gamma_i(k)|+1\big)\Gamma_{ij}(k)\Xi_j(k)\Gamma_{ij}^\top(k).
\end{aligned}\end{equation}
Let the transition matrix of the LIS (34) be
\begin{equation}\begin{aligned}
\Gamma_{ij}(k)=\bar{A}_{ij}(k)\triangleq A_{ij}(k)-A_{ij}(k)K_j(k)C_j(k).\nonumber
\end{aligned}\end{equation}
Then, $\mathbb{I}^\Gamma_i(k)=\mathbb{I}_i(k)$. Since $w(k)=0$ and $v(k+1)=0$ for $k\geq0$, one can know from Proposition \ref{PrLISeqMJP} that $\bar{e}_i(k)=\xi_i(k)$ for $k\geq1$ if $\bar{e}_i(1)=\xi_i(1)$. Therefore, by choosing a special initial condition for the Markov system (35), one can obtain
\begin{equation}\begin{aligned}
&\|\Phi_{\bar{A}}(k,1)\bar{e}(1)\|^2=\|\bar{e}(k)\|^2=\sum^s_{i=1}\|\xi_i(k)\|^2\\
=&\sum^s_{i=1}\|\mathrm{Pr}\big(\varpi(k)=i\big)\mathrm{E}[\xi(k)|\varpi(k)=i]\|^2\\
\leq&\sum^s_{i=1}\mathrm{Pr}\big(\varpi(k)=i\big)^2\mathrm{E}[\|\xi(k)\|^2|\varpi(k)=i]\\
=&\sum^s_{i=1}\mathrm{Tr}\big(\Xi_i(k)\big),
\end{aligned}\end{equation}
where the inequality follows from the Jensen's inequality and
\begin{equation}\begin{aligned}
\bar{A}(k)\triangleq
\begin{bmatrix}
{\bar{A}_{11}(k)} & {\bar{A}_{12}(k)} & {\cdots} & {\bar{A}_{1s}(k)}\\
{\bar{A}_{21}(k)} & {\bar{A}_{22}(k)} & {\cdots} & {\bar{A}_{2s}(k)}\\
{\vdots} & {\vdots} & {\ddots} & {\vdots}\\
{\bar{A}_{s1}(k)} & {\bar{A}_{s2}(k)} & {\cdots} & {\bar{A}_{ss}(k)}
\end{bmatrix}.\nonumber
\end{aligned}\end{equation}
Based on (47), one has
\begin{align}
&\Xi(k+1)\triangleq\mathrm{Diag}\big(\Xi_1(k+1),\Xi_2(k+1),\cdots,\Xi_s(k+1)\big)\nonumber\\
=&\sum^{s}_{i=1}\big(\mathcal{A}_{i}(k)-\mathcal{K}_{i}(k)C(k)\big)\Xi(k)\big(\mathcal{A}_{i}(k)-\mathcal{K}_{i}(k)C(k)\big)^\top,
\end{align}
where
\begin{equation}\begin{aligned}
\mathcal{K}_i(k)\triangleq&\mathcal{A}_{i}(k)\bar{P}(k)C^\top(k)\\
&\times\big(C(k)\bar{P}(k)C^\top(k)+R(k)\big)^{-1}.
\end{aligned}\end{equation}
Additionally, note that the evolution of $\bar{P}(k)$ can be reformulated as a Lyapunov equation
\begin{align}
&\bar{P}(k+1)\nonumber\\
=&\sum^{s}_{i=1}\big(\mathcal{A}_{i}(k)-\mathcal{K}_{i}(k)C(k)\big)\bar{P}(k)\big(\mathcal{A}_{i}(k)-\mathcal{K}_{i}(k)C(k)\big)^\top\nonumber\\
&\ \ \ \ \ +\mathcal{K}_{i}(k)R(k)\mathcal{K}^\top_{i}(k)+Q(k).
\end{align}
With Condition \ref{AsPB} and Proposition \ref{PrInitial}, a comparison of (49) and (51) shows that $\{\Xi(k)\}_{k\geq1}$ is uniformly bounded from above for any initial condition $\Xi_i(0)\geq0$. This also implies that $\{\|\bar{e}(k)\|\}_{k\geq1}$ is uniformly bounded from above for any initial condition $\bar{e}(k)\in\mathbb{R}^n$. The proof is completed.
\end{proof}

\begin{remark}
Proposition \ref{PrLISeqMJP} reveals that the state trajectory of a LIS will align perfectly with the state trajectory of a Markov system. Meanwhile, we also provide an explicit construction method for this equivalent Markov system, as illustrated in (35). This conclusion holds significant implications, as it allows us to apply the stochastic theory to analyze the performance of the LIS, as demonstrated in Theorem \ref{TheoremMarkov}.
\end{remark}

\begin{remark}
Since the transition matrix $\Gamma_{ij}(k)$ of the LIS (34) is not necessarily a square matrix, the state dimension of the system (35) is time-varying. However, the column of $\Gamma_{\varpi(k+1)\varpi(k)}$ coincides with the row of $\Gamma_{\varpi(k)\varpi(k-1)}$, thus the system (57) is well-defined. Moreover, this unusual phenomenon does not affect the analysis in this subsection since we do not use the property that $\Gamma_{ij}(k)$ is a square matrix.
\end{remark}

\subsection{Boundedness of the DMRE}
This subsection establishes the relationship between the boundedness of the DMRE and the coupling strength of the LIS.

\begin{theorem}\label{TheoremCoupling}
Let Conditions \ref{AsQRB} and \ref{AsThetaB} are fulfilled. There exists a constant $c\geq0$ such that if $\|A_{ij}(k)\|\leq a$ for all $i\neq j$ and $k\geq0$, then the uniform detectability of $\big(C_i(k),\sqrt{2}\vartheta_i(k)A_{ii}(k)\big)$ implies that $\{P(k)\}_{k\geq 0}$ is uniformly bounded from above.
\end{theorem}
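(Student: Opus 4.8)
The plan is to compare the DMRE with a decoupled Riccati-type recursion in which the off-diagonal couplings have been absorbed into an auxiliary noise term, and then exploit the standard boundedness result for Riccati equations under (uniform) detectability. First I would rewrite the prediction update (6) by splitting the sum over $j\in\mathbb{I}_i(k)\cup\{i\}$ into the diagonal term $j=i$ and the off-diagonal terms $j\neq i$. Using the elementary inequality $XY^\top+YX^\top\le \varepsilon XX^\top+\varepsilon^{-1}YY^\top$ together with Condition~\ref{AsThetaB} and the bound $\|A_{ij}(k)\|\le a$, the off-diagonal contribution $\vartheta_i^2(k)\sum_{j\neq i}A_{ij}(k)P_j(k)A_{ij}^\top(k)$ can be dominated by $a^2\vartheta_{u,i}^2\big(\sum_{j\neq i}\|P_j(k)\|\big)I$, i.e. by a term that is linear in the other blocks' norms. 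The key point is that after this step, $\bar P_i(k+1)$ is bounded above by $2\vartheta_i^2(k)A_{ii}(k)P_i(k)A_{ii}^\top(k)$ plus $Q_i(k)$ plus a perturbation term proportional to $a^2$ times $\max_j \|P_j(k)\|$.

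Next I would introduce the scalar sequence $\pi(k)\triangleq\max_{1\le i\le s}\|P_i(k)\|$ and derive a scalar comparison recursion for it. For the ``clean'' part, the pair $\big(C_i(k),\sqrt{2}\vartheta_i(k)A_{ii}(k)\big)$ being uniformly detectable — and since Condition~\ref{AsThetaB} makes $\sqrt{2}\vartheta_i(k)A_{ii}(k)$ uniformly bounded above — guarantees, by the classical Riccati boundedness theorem (the same result underlying the stability theory cited via \cite{Optimalfiltering,Anderson0319002,moore1980coping}), that the decoupled Riccati recursion $\Pi_i(k+1)=\mathfrak{G}_i\big(\Pi_i(k)\big)$ with $\mathfrak{G}_i$ built from $2\vartheta_i^2A_{ii}\cdot A_{ii}^\top$, $Q_i$, $R_i$, $C_i$ is uniformly bounded, say by $\bar p_u$, for any bounded initial condition, and moreover is a contraction-like map in the sense that it maps the ray $[0,\beta I]$ into $[0, \bar p_u + \theta(\beta)I]$ with $\theta$ sublinear. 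The precise statement I would like to invoke is that there exist constants $\kappa\in[0,1)$ and $d\ge0$, \emph{independent of $a$}, such that $\mathfrak{G}_i(\beta I)\le (\kappa\beta + d)I$ for all $\beta\ge0$; this follows from uniform detectability because the filtered map strictly contracts large matrices. Folding the $O(a^2)$ perturbation back in, I obtain an inequality of the form $\pi(k+1)\le \big(\kappa + C_0 a^2\big)\pi(k) + d'$ for a constant $C_0$ depending only on the bounds in Conditions~\ref{AsQRB} and \ref{AsThetaB}. Choosing $c$ so that $C_0 c^2 < 1-\kappa$, i.e. $a\le c$, makes the coefficient $\kappa+C_0a^2<1$ strictly, and hence $\{\pi(k)\}$ — equivalently $\{P(k)\}$ — is uniformly bounded from above, which is Condition~\ref{AsPB}.

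The main obstacle, and the step deserving the most care, is making rigorous the ``uniform contraction on large matrices'' property of the decoupled Riccati map $\mathfrak{G}_i$ with a rate $\kappa$ that does not degrade as $a\to 0$ is not the issue — rather, the subtlety is that the comparison must be done \emph{jointly} across all $s$ blocks, because each $\bar P_i$ is perturbed by the \emph{other} $P_j$'s, so a naive per-block argument would produce an $s$-dependent feedback loop. Working with the single scalar envelope $\pi(k)$ circumvents this, but one must verify that the Riccati contraction estimate is uniform over $i$ (using the uniform bounds on $A_{ii}$, $Q_i$, $R_i$ and $\vartheta_i$ from Conditions~\ref{AsQRB}, \ref{AsThetaB}, and the uniform detectability hypothesis) and uniform over $k$ (time-varying case). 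A second delicate point is that uniform detectability of $\big(C_i,\sqrt2\,\vartheta_iA_{ii}\big)$ must be upgraded to a boundedness statement for the Riccati iteration that holds from \emph{any} nonnegative initial condition; here I would either invoke Proposition~\ref{PrInitial} in the decoupled setting, or directly use the monotonicity from Proposition~\ref{PrMonotonicity} together with a large-$\alpha$ domination argument exactly as in the proof of that proposition. Once these two uniformities are in hand, the scalar recursion closes the proof immediately.
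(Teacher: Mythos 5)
There is a genuine gap, and it sits exactly where you flagged ``the step deserving the most care.'' The one-step contraction you want to invoke --- that there exist $\kappa\in[0,1)$ and $d\ge0$ with $\mathfrak{G}_i(\beta I)\le(\kappa\beta+d)I$ for all $\beta\ge0$ --- is false under mere uniform detectability. Detectability only controls growth over the detectability horizon, not per step, and the Riccati/Lyapunov map can expand isotropic matrices in a single step whenever the closed-loop dynamics are stable but non-normal. Concretely, take $C_i=0$ and $\sqrt{2}\vartheta_iA_{ii}$ nilpotent with a large off-diagonal entry $M$: the pair is detectable, the matrix iteration $\mathfrak{G}_i^N(\beta I)$ is bounded for $N\ge2$, yet $\|\mathfrak{G}_i(\beta I)\|\approx \tfrac{M^2}{2}\beta$, so your scalar envelope recursion $\pi(k+1)\le(\kappa+C_0a^2)\pi(k)+d'$ has $\kappa>1$ and diverges. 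The deeper problem is that re-isotropizing to $\pi(k)I$ at every step destroys the directional structure that makes the $N$-step Riccati iteration bounded; an $N$-step contraction of the matrix map does not yield a usable one-step contraction of the scalar envelope, so the perturbation cannot be ``folded in'' the way you describe without redoing the whole $N$-step propagation of the $O(a^2)$ coupling through the nonlinear map.

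The paper circumvents both difficulties by first replacing the optimal gain with a fixed, bounded, block-diagonal output-injection gain $\mathcal{F}(k)$ supplied by the uniform detectability of $\big(C_i(k),\sqrt{2}\vartheta_i(k)A_{ii}(k)\big)$; by optimality of the Kalman gain (and Lemma 2.2 of \cite{WANG1992139}) the resulting \emph{linear} Lyapunov-type recursion dominates $\bar{P}(k)$. It then vectorizes this linear recursion, so the self part $2\hat{\mathcal{A}}^S\otimes\hat{\mathcal{A}}^S$ and the coupling part $2\sum_i\mathcal{A}^C_i\otimes\mathcal{A}^C_i$ (of norm at most $2s(s-1)^2a^2$) enter additively in the state-transition matrix, and an $N$-step norm estimate gives $\|\Phi_\Psi(k+N,k)\|\le\sqrt{n}\alpha_1^2(\alpha_2^2)^N+\mathfrak{G}(a)$ with $\mathfrak{G}(0)=0$; choosing $N$ large and then $a$ small closes the argument. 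Your decomposition into self and coupling parts and the factor-of-$2$ bookkeeping match the paper, but to repair your proof you would need to (i) linearize the comparison via a fixed suboptimal gain rather than working with the Kalman-gain Riccati map directly, and (ii) run the smallness-of-$a$ argument over an $N$-step window on the full matrix (or vectorized) recursion rather than on a per-step scalar envelope.
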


\begin{proof}
The equation (34) implies that $\{P(k)\}_{k\geq0}$ is uniformly bounded from above if $\{\bar{P}(k)\}_{k\geq1}$ is uniformly bounded from above. Thus, it remains to show the boundedness of $\{\bar{P}(k)\}_{k\geq1}$.

Decompose $\mathcal{A}_i(k)$ into self and coupling parts, that is,
\begin{equation}\begin{aligned}
\mathcal{A}_i(k)=\mathcal{A}^S_i(k)+\mathcal{A}^C_i(k),\nonumber
\end{aligned}\end{equation}
where
\begin{equation}\begin{aligned}
\mathcal{A}^S_i(k)\triangleq \vartheta_i(k)E_i\mathrm{Diag}\big(A_{11}(k),A_{22}(k),\cdots,A_{ss}(k)\big),\nonumber
\end{aligned}\end{equation}
\begin{equation}\begin{aligned}
\mathcal{A}^C_i(k)\triangleq \mathcal{A}_i(k)-\mathcal{A}^S_i(k).\nonumber
\end{aligned}\end{equation}
Then, construct the difference equation
\begin{equation}\begin{aligned}
X(k+1)=&2\hat{\mathcal{A}}^S(k)X(k)\big(\hat{\mathcal{A}}^S(k)\big)^\top\\
&+\mathcal{F}(k)R(k)\mathcal{F}^\top(k)+Q(k)\\
&+2\sum^{s}_{i=1}\mathcal{A}^C_i(k)X(k)\big(\mathcal{A}^C_i(k)\big)^\top\nonumber
\end{aligned}\end{equation}
with the initial condition $X(1)=\bar{P}(1)$, where
\begin{equation}\begin{aligned}
&\hat{\mathcal{A}}^S(k)\triangleq \mathcal{A}^S(k)-\mathcal{F}(k)C(k),\\
&\mathcal{A}^S(k)\triangleq\sum^s_{i=1}\mathcal{A}^S_i(k),\\
&\mathcal{F}(k)=\mathrm{Diag}\big(\mathcal{F}_1(k),\mathcal{F}_2(k),\cdots,\mathcal{F}_s(k)\big).\nonumber
\end{aligned}\end{equation}
We will prove that $X(k)\geq\bar{P}(k)$ for any block diagonal matrix $\mathfrak{G}(k)$. At the initial moment, it is obvious that $X(1)\geq\bar{P}(1)$. Suppose $X(k)\geq\bar{P}(k)$. Then, according to the optimality of the Kalman gain $\mathcal{K}_i(k)$ \cite{Optimalfiltering}, one has
\begin{equation}\begin{aligned}
&\bar{P}(k+1)\\
\leq&\sum^{s}_{i=1}\big(\mathcal{A}_{i}(k)-\mathcal{H}_{i}(k)C(k)\big)\bar{P}(k)\big(\mathcal{A}_{i}(k)-\mathcal{H}_{i}(k)C(k)\big)^\top\\
&\ \ \ \ \ +\mathcal{H}_{i}(k)R(k)\mathcal{H}_{i}^\top(k)+Q(k)\\
\leq&X(k+1),\nonumber
\end{aligned}\end{equation}
where $\mathcal{H}_i(k)=E_i\mathcal{F}(k)$ and the second inequality follows from Lemma 2.2 in \cite{WANG1992139}. This completes the induction. It remains to prove that, for some bounded block diagonal matrix sequence $\{\mathcal{F}(k)\}_{k\geq0}$, the sequence $\{X(k)\}_{k\geq1}$ is uniformly bounded from above.

Vectorizing the matrix $\bar{X}(k)$ yields
\begin{equation}\begin{aligned}
\mathrm{vec}[X(k+1)]=\Psi(k)\mathrm{vec}[X(k)]+\psi(k),
\end{aligned}\end{equation}
where
\begin{equation}\begin{aligned}
\Psi(k)\triangleq 2\big(\hat{\mathcal{A}}^S(k)\otimes\hat{\mathcal{A}}^S(k)+\sum^s_{i=1}\mathcal{A}^C_i(k)\otimes \mathcal{A}^C_i(k)\big),\nonumber
\end{aligned}\end{equation}
\begin{equation}\begin{aligned}
\psi(k)\triangleq \mathrm{vec}[\mathcal{F}(k)R(k)\mathcal{F}^\top(k)+Q(k)].\nonumber
\end{aligned}\end{equation}
Recursively applying (52) yields
\begin{equation}\begin{aligned}
\mathrm{vec}[X(k+N)]&=\Phi_{\Psi}(k+N,k)\mathrm{vec}[X(k)]\\
&+\sum^{N-1}_{i=0}\Phi_{\Psi}(k+N,k+i+1)\psi(k+i).\nonumber
\end{aligned}\end{equation}
Since $\big(C_i(k),\sqrt{2}\vartheta_i(k)A_{ii}(k)\big)$ is uniformly detectable for all $i=1,\cdots,s$, we know from the Kalman filtering theory \cite{Optimalfiltering} that there exists a bounded block diagonal matrix sequence $\{\mathcal{F}(k)\}_{k\geq0}$ such that the closed-loop system induced by $\sqrt{2}\hat{\mathcal{A}}^S(k)$ is exponentially stable (recall that $\hat{\mathcal{A}}^S(k)$ is block diagonal), i.e.,
\begin{equation}\begin{aligned}
\|\sqrt{2}\hat{\mathcal{A}}^S(k+N)\times\cdots\times\sqrt{2}\hat{\mathcal{A}}^S(k)\|\leq\alpha_1\alpha_2^N,\nonumber
\end{aligned}\end{equation}
where  $\alpha_1\geq0$ and $0\leq\alpha_2<1$ are constants, $k$ can be any positive integer. Then, utilizing Lemma \ref{LemmaA2} in Appendix yields
\begin{equation}\begin{aligned}
\|\big(2\hat{\mathcal{A}}^S(k+N)\otimes\hat{\mathcal{A}}^S(k+N)\big)\times\cdots&\times\big(2\hat{\mathcal{A}}^S(k)\otimes\hat{\mathcal{A}}^S(k)\big)\|\\
&\leq \sqrt{n}\alpha_1^2(\alpha_2^2)^N.
\end{aligned}\end{equation}
If $\|A_{ij}(k)\|\leq a$, according to the triangle inequality and submultiplicativity of the matrix norm, one can derive that
\begin{equation}\begin{aligned}
\|\sum^s_{i=1}\mathcal{A}^C_i(k)\otimes \mathcal{A}^C_i(k)\|\leq&\sum^s_{i=1}\|\mathcal{A}^C_i(k)\|^2\\
\leq& s(s-1)^2a^2.
\end{aligned}\end{equation}
With (53) and (54), one has
\begin{equation}\begin{aligned}
\|\Phi_{\Psi}(k+N,k)\|\leq \sqrt{n}\alpha_1^2(\alpha_2^2)^N+\mathfrak{G}(a),\nonumber
\end{aligned}\end{equation}
where $\mathfrak{G}(a)$ is a monotonically increasing function on $\mathbb{R}^1$ and $\mathfrak{G}(0)=0$. This implies that we can choose a large $N$ and a small $a$ such that $\|\Phi_{\Psi}(k+N,k)\|<\alpha_3$ for all $k$, where $0\leq\alpha_3<1$ is a constant. Meanwhile, based on Conditions \ref{AsQRB} and \ref{AsPB}, one knows that $\{\|\psi(k)\|\}_{k\geq1}$ is uniformly bounded from above. Thus, the system (39) is exponentially bounded. Consequently, the sequences $\{X(k)\}_{k\geq1}$ is uniformly bounded from above. The proof is completed. \end{proof}

\begin{corollary}\label{CoCoupling}
Let Conditions \ref{AsQRB} and \ref{AsThetaB} are fulfilled. There exists a constant $c\geq0$ such that if $\|A_{ij}(k)\|\leq a$ for all $i\neq j$ and $k\geq0$, then the uniform reachability of $\big(A_{ii}^\top(k),C_i^\top(k)\big)$ implies that $\{P(k)\}_{k\geq 0}$ is uniformly bounded from above.
\end{corollary}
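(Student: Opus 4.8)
The plan is to obtain Corollary~\ref{CoCoupling} as an immediate consequence of Theorem~\ref{TheoremCoupling}: the only thing that has to be shown is that, under Conditions~\ref{AsQRB} and~\ref{AsThetaB}, the uniform reachability of $\big(A_{ii}^\top(k),C_i^\top(k)\big)$ implies the uniform detectability of $\big(C_i(k),\sqrt{2}\vartheta_i(k)A_{ii}(k)\big)$ for every $i=1,\dots,s$; once this is established, the constant $c$ and the boundedness conclusion of Theorem~\ref{TheoremCoupling} carry over verbatim.

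The chain I would use has three links. First, I would invoke the classical duality between uniform complete reachability and observability of time-varying pairs \cite{Anderson0319002,rugh1996linear}: the Gramian in Definition~\ref{DeUR} for $\big(A_{ii}^\top(k),C_i^\top(k)\big)$ is, up to the customary reindexing, the controllability Gramian of the adjoint of $\big(C_i(k),A_{ii}(k)\big)$, so its uniform positive definiteness is equivalent to the uniform complete observability of $\big(C_i(k),A_{ii}(k)\big)$, i.e.\ to $\sum_{j=0}^{t}\Phi_{A_{ii}}^\top(k+j,k)C_i^\top(k+j)C_i(k+j)\Phi_{A_{ii}}(k+j,k)\geq rI$ for all $k$ and some fixed $t\geq 0$, $r>0$. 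Second, I would observe that uniform complete observability is invariant under multiplying the transition matrix by a scalar sequence that is bounded above and bounded away from zero: writing $\Phi_{\sqrt{2}\vartheta_i A_{ii}}(k+j,k)=d_j(k)\Phi_{A_{ii}}(k+j,k)$ with $d_j(k)=\prod_{l=0}^{j-1}\sqrt{2}\vartheta_i(k+l)$, Condition~\ref{AsThetaB} gives $d_j(k)^2\geq\underline d>0$ uniformly in $k$ over the finite range $0\leq j\leq t$, so the observability Gramian of $\big(C_i(k),\sqrt{2}\vartheta_i(k)A_{ii}(k)\big)$ is bounded below by $\underline d\,rI$. Third, uniform complete observability trivially implies uniform detectability in the sense of Definition~\ref{DeUD} (take $\mu=0$, $\nu=t$, any $\gamma\in(0,1)$, $\sigma=\underline d\,r$). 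Hence every pair $\big(C_i(k),\sqrt{2}\vartheta_i(k)A_{ii}(k)\big)$ is uniformly detectable and Theorem~\ref{TheoremCoupling} applies.

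The point that requires care — and the reason the argument must keep the full observability bound until \emph{after} the $\sqrt{2}\vartheta_i$-scaling step — is that detectability itself is not preserved by this scaling: enlarging $A_{ii}(k)$ by a factor $\sqrt{2}\vartheta_i(k)$ that may exceed $1$ can turn a stable but unobservable mode into an unstable one, so it would be wrong to first weaken the reachability hypothesis to detectability of $\big(C_i(k),A_{ii}(k)\big)$ and only then rescale. I expect the one genuinely nonroutine ingredient to be the time-varying reachability/observability duality, because the Gramian of Definition~\ref{DeUR} refers each observation forward to the end of the data window whereas that of Definition~\ref{DeUD} refers them to its start; I would settle this either by citing the standard result \cite{Anderson0319002,rugh1996linear,Optimalfiltering} or, equivalently, by carrying out the index bookkeeping explicitly. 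The remaining steps — the scaling estimate and the observability-implies-detectability weakening — are elementary, and the coupling-strength constant is exactly the $c$ supplied by Theorem~\ref{TheoremCoupling}.
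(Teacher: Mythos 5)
Your proposal is correct and follows essentially the same route as the paper: the paper's proof is a one-line assertion that uniform reachability of $\big(A_{ii}^\top(k),C_i^\top(k)\big)$ implies uniform detectability of $\big(C_i(k),\alpha(k)A_{ii}(k)\big)$ for any scalar sequence bounded above and away from zero, after which Theorem~\ref{TheoremCoupling} applies. You simply make that assertion explicit (duality to uniform observability, invariance of the observability Gramian under bounded scalar scaling, then weakening to detectability), and your remark that one must rescale \emph{before} weakening observability to detectability is a correct refinement of the paper's terser statement.
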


\begin{proof}
This result immediately follows from the fact that the uniform reachability of $\big(A_{ii}^\top(k),C_i^\top(k)\big)$ implies the uniform detectability of $\big(C_i(k),\alpha(k)A_{ii}(k)\big)$, where $\{\alpha(k)\}_{k\geq0}$ is a real number sequence satisfying $\alpha_l\leq\|\alpha(k)\|\leq\alpha_u$ for some $0<\alpha_l\leq\alpha_u$. The proof is completed.
\end{proof}

\begin{remark}
Theorem \ref{TheoremCoupling} and Corollary \ref{CoCoupling} expose an intuitive phenomenon in the LIS, i.e., if the coupling among subsystems is weak, the detectability or reachability of the subsystem itself is sufficient to guarantee the boundedness of the DMRE as well as the stability of the distributed estimator.
\end{remark}

\section{Stability Analysis for The Time-Invariant Case}
This section will consider the time-invariant case. In this scenario, all system parameters will be time-invariant, that is,
\begin{equation}\begin{aligned}
\left\{ \begin{array}{l}
A(k)=A,\ C(k+1)=C,\\
Q(k)=Q,\ R(k+1)=R,\\
\mathbb{I}_i(k)=\mathbb{I}_i,\ \mathbb{O}_i(k)=\mathbb{O}_i,\ \vartheta_i(k)=\vartheta,
\end{array} \right.
\end{aligned}\end{equation}
for all $k\geq0$ and $i=1,2,\cdots,s$.
In the time-invariant case, by replacing the uniform detectability and reachability with detectability and reachability, Theorems \ref{TheoremLya}, \ref{TheoremMarkov}, and \ref{TheoremCoupling} still hold. We will not duplicate the proofs of these results. However, it should be emphasized that the marginal stability in Theorem \ref{TheoremMarkov} can be further improved to the exponential stability in the time-invariant case. Thus, to show this, it is necessary to discuss the boundedness and convergence of the DMRE (5) and (6).


\begin{remark}
The system parameters must be bounded in the time-invariant case. Therefore, Conditions \ref{AsQRB} and \ref{AsThetaB} are no longer necessary in this section.
\end{remark}

\subsection{Boundedness and convergence of the DMRE}
The following theorem will derive some sufficient and necessary conditions for Condition \ref{AsPB}.

\begin{theorem}\label{TheoremTIPiif}
Consider the DMRE (5) and (6). Let (55) hold and $\vartheta\neq0$. If Condition \ref{AsLUR} is fulfilled, then the following assertions are equivalent:
\begin{itemize}
\item[1)] The sequence $\{P(k)\}_{k\geq0}$ is uniformly bounded from above for any initial condition $P(0)\geq0$.
\item[2)] There exists $\mathcal{G}\triangleq\{\mathcal{G}_1$, $\mathcal{G}_2$, $\cdots$, $\mathcal{G}_s\}$ such that $\rho(\mathfrak{L}_{\mathcal{G}})<1$, where
    \begin{equation}\begin{aligned}
    \mathfrak{L}_{\mathcal{G}}(X)\triangleq\sum^s_{i=1}(\mathcal{A}_{i}-\mathcal{G}_{i}C)X(\mathcal{A}_{i}-\mathcal{G}_{i}C)^\top
    \end{aligned}\end{equation}
    and $\mathcal{A}_i$ is defined in (36).
\item[3)] There exist $X>0$ and $Y_1$, $\cdots$, $Y_s$ such that the LMI
\begin{equation}\begin{aligned}
\begin{bmatrix}
{X} & {X\mathcal{A}_1-Y_1C} & {\cdots} & {X\mathcal{A}_s-Y_sC}\\
{\star} & {X} & {0} & {0}\\
{\star} & {\star} & {\ddots} & {0}\\
{\star} & {\star} & {\star} & {X}\\
\end{bmatrix}>0
\end{aligned}\end{equation}
is feasible.
\end{itemize}
\end{theorem}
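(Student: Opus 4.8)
The plan is to prove $2)\Leftrightarrow3)$, $2)\Rightarrow1)$, and $1)\Rightarrow2)$. I write $\mathfrak F,\bar{\mathfrak F}$ for the time-invariant instances of $\mathfrak F_k,\bar{\mathfrak F}_k$, so the DMRE reads $\bar P(k+1)=\bar{\mathfrak F}\!\circ\!\mathfrak F\big(\bar P(k)\big)=\sum_i\mathcal A_i\mathfrak F\big(\bar P(k)\big)\mathcal A_i^\top+Q$, and I use repeatedly that each $\mathfrak L_{\mathcal G}$ is a positive (hence monotone) linear operator on the symmetric matrices. For $2)\Leftrightarrow3)$: a Schur complement applied to the block $\mathrm{Diag}(X,\dots,X)$ in (57) shows that (57) is feasible iff there are $X>0$, $Y_i$ with $X-\sum_i(X\mathcal A_i-Y_iC)X^{-1}(X\mathcal A_i-Y_iC)^\top>0$; the substitution $\mathcal G_i=X^{-1}Y_i$, $\mathcal V=X^{-1}$ turns this into $\mathcal V-\mathfrak L_{\mathcal G}(\mathcal V)>0$ with $\mathcal V>0$. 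For a positive operator $\mathfrak T$ one has $\rho(\mathfrak T)<1$ iff such a $\mathcal V$ exists — the forward direction via $\mathcal V=\sum_{k\ge0}\mathfrak T^k(I)$, the reverse via $\mathfrak T^k(\mathcal V)\le(1-\epsilon)^k\mathcal V\to0$ together with $I\le c\mathcal V$. Every step reverses, so $2)\Leftrightarrow3)$ follows.

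For $2)\Rightarrow1)$ fix $\mathcal G$ with $\rho(\mathfrak L_{\mathcal G})<1$. Completing the square in the innovation gain gives the identity $\sum_i\mathcal A_i\mathfrak F(\bar P)\mathcal A_i^\top+Q=\min_{\mathcal G'}\big\{\sum_i(\mathcal A_i-\mathcal G_i'C)\bar P(\mathcal A_i-\mathcal G_i'C)^\top+\sum_i\mathcal G_i'R(\mathcal G_i')^\top+Q\big\}$, the minimiser being the Kalman-type gain, so in particular $\bar P(k+1)\le\mathfrak L_{\mathcal G}\big(\bar P(k)\big)+W_{\mathcal G}$ with $W_{\mathcal G}:=\sum_i\mathcal G_iR\mathcal G_i^\top+Q\ge0$. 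Comparing with the affine recursion $Z(k{+}1)=\mathfrak L_{\mathcal G}(Z(k))+W_{\mathcal G}$ started above $\bar P$, monotonicity of $\mathfrak L_{\mathcal G}$ gives $\bar P(k)\le Z(k)$ by induction, while $\rho(\mathfrak L_{\mathcal G})<1$ gives $Z(k)\to\sum_{j\ge0}\mathfrak L_{\mathcal G}^j(W_{\mathcal G})<\infty$; hence $\{\bar P(k)\}$ is uniformly bounded, and so is $P(k)=\mathfrak F(\bar P(k))\le\bar P(k)$, for every initial condition $P(0)\ge0$.

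For $1)\Rightarrow2)$ run the DMRE from $P(0)=0$, so $\bar P(1)=Q$. By Proposition \ref{PrMonotonicity} (time-invariant form) the map $\bar{\mathfrak F}\!\circ\!\mathfrak F$ is monotone, and since $\bar P(2)\ge\bar P(1)$ the sequence is nondecreasing; being bounded by hypothesis it converges to a fixed point $\bar P^{*}=\sum_i\mathcal A_i\mathfrak F(\bar P^{*})\mathcal A_i^\top+Q$. As in Step 1 of the proof of Theorem \ref{TheoremLya}, Condition \ref{AsLUR} forces $\bar P^{*}>0$, hence $P^{*}:=\mathfrak F(\bar P^{*})=\big((\bar P^{*})^{-1}+C^\top R^{-1}C\big)^{-1}>0$. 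With the steady gains $\mathcal K_i^{*}:=\mathcal A_i\bar P^{*}C^\top(C\bar P^{*}C^\top+R)^{-1}$, the Joseph form at the fixed point yields $\bar P^{*}=\mathfrak L_{\mathcal K^{*}}(\bar P^{*})+W^{*}$ with $W^{*}:=\sum_i\mathcal K_i^{*}R(\mathcal K_i^{*})^\top+Q\ge0$, so $\bar P^{*}>0$ is a (possibly non-strict) Lyapunov certificate for $\mathfrak L_{\mathcal K^{*}}$. If $W^{*}>0$ this already gives $\rho(\mathfrak L_{\mathcal K^{*}})<1$ and $2)$ holds with $\mathcal G=\mathcal K^{*}$. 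Otherwise suppose $\rho:=\rho(\mathfrak L_{\mathcal K^{*}})\ge1$; the Perron--Frobenius property of the positive operator $\mathfrak L_{\mathcal K^{*}}$ supplies $V\ge0$, $V\ne0$, with $\langle\mathfrak L_{\mathcal K^{*}}(X),V\rangle=\rho\langle X,V\rangle$ for all symmetric $X$. Pairing $V$ with $\bar P^{*}=\mathfrak L_{\mathcal K^{*}}(\bar P^{*})+W^{*}$ gives $(1-\rho)\langle\bar P^{*},V\rangle=\langle W^{*},V\rangle$, and since $\bar P^{*}>0$ makes $\langle\bar P^{*},V\rangle>0$ while $\langle W^{*},V\rangle\ge0$, we are forced to $\rho=1$ and $\langle W^{*},V\rangle=0$; as $W^{*}$ is a sum of the PSD terms $\mathcal K_i^{*}R(\mathcal K_i^{*})^\top$ and $Q$, this yields $VQ=0$ and $V\mathcal K_i^{*}=0$ for all $i$. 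Feeding $V\mathcal K_i^{*}=0$ back into the dual-eigenvector identity (with $\rho=1$) collapses it to $\sum_i\mathcal A_i^\top V\mathcal A_i=V$, i.e. $\vartheta^2A^\top\mathrm{bd}(V)A=V$ with $\mathrm{bd}(\cdot)$ the block-diagonal part, while pairing $V$ with $\bar P^{*}=\sum_i\mathcal A_iP^{*}\mathcal A_i^\top+Q$ and using $\bar P^{*}-P^{*}\ge0$ gives $C\bar P^{*}V=0$. It then remains to show that these relations contradict detectability, which rules out $\rho\ge1$ and completes the proof.

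The parts $2)\Leftrightarrow3)$ and $2)\Rightarrow1)$ are essentially bookkeeping once the gain-optimality identity and the positive-operator Lyapunov lemma are in hand. The main obstacle is the last step of $1)\Rightarrow2)$: excluding the borderline case $\rho(\mathfrak L_{\mathcal K^{*}})=1$ when $Q$ is singular, i.e. turning $\vartheta^2A^\top\mathrm{bd}(V)A=V$, $C\bar P^{*}V=0$, $V\mathcal K_i^{*}=0$ into a nonzero vector that violates the PBH detectability test for $(C,A)$. This is the distributed analogue of the classical fact that detectability forces the Kalman filter to be strictly — not merely marginally — stable even under semidefinite process noise; the subtle feature here is the block-diagonal compression $\mathrm{bd}(\cdot)$ induced by $\mathcal A_i=\vartheta E_iA$, and it is exactly the point where Condition \ref{AsLUR} (through $\bar P^{*}>0$) and the global detectability of $(C,A)$ have to be used together. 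Alternatively one may short-circuit this step by invoking known spectral-radius characterizations of boundedness for modified Riccati / Markov-jump operators.
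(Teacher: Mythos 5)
Your outline of $2)\Leftrightarrow3)$ and $2)\Rightarrow1)$ is correct and matches the paper (Schur complement for the LMI, comparison of the DMRE with the affine recursion $Z(k+1)=\mathfrak L_{\mathcal G}(Z(k))+W_{\mathcal G}$ using the optimality of the Kalman-type gain). The problem is the direction $1)\Rightarrow2)$, where you stop exactly at the hard point and say "it remains to show that these relations contradict detectability." That remaining step is the substance of the theorem, and as stated your plan for it is aimed at the wrong hypothesis: the theorem assumes Condition~1 (uniform reachability of $(A_{ii},\sqrt{Q_i})$), and the paper's proof never invokes a PBH/detectability test on $(C,A)$. It is not clear that your relations $\vartheta^2A^\top\mathrm{bd}(V)A=V$, $VQ=0$, $V\mathcal K_i^*=0$ can be turned into a violation of detectability at all, because the block-diagonal compression $\mathrm{bd}(\cdot)$ destroys the usual invariant-subspace structure that a PBH argument needs.

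The idea you are missing is an $n$-fold iteration of the fixed-point identity. The paper writes $\bar P=\mathfrak L_{\mathcal K}^n(\bar P)+\Upsilon$ with $\Upsilon=\sum_{i=1}^n\mathfrak L_{\mathcal K}^{i-1}\big(\sum_j U_j\big)$, and shows that $\Upsilon$ dominates the block-diagonal matrix $\mathrm{Diag}(\mathcal R_1,\dots,\mathcal R_s)$, where $\mathcal R_i$ is the reachability Gramian of $\big(A_{ii}-\mathcal K_{ii}C_i,[\sqrt{Q_i},\ \mathcal K_{ii}\sqrt{R_i}]\big)$; by Condition~1 and the invariance of reachability under output injection, each $\mathcal R_i>0$, hence $\Upsilon>0$ even though the one-step term $W^*$ may be singular. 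With $\Upsilon>0$ the Lyapunov function $\varphi(X)=\langle X,\bar P\rangle$ decreases strictly along $X(k+1)=(\mathfrak L^*_{\mathcal K})^n(X(k))$, giving $\rho(\mathfrak L_{\mathcal K})^n=\rho\big((\mathfrak L_{\mathcal K}^*)^n\big)<1$ directly. Your Perron--Frobenius detour can in fact be salvaged by the same device --- pair your dual eigenvector $V$ (with $\rho=1$) against the iterated identity to get $\langle\Upsilon,V\rangle=0$, contradicting $\Upsilon>0$, $V\geq0$, $V\neq0$ --- but without that iteration the argument does not close, so as written the proof is incomplete.
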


\begin{proof}
Based on (45), one knows that the assertion 1) holds if and only if the sequence $\{\bar{P}(k)\}_{k\geq1}$ is uniformly bounded from above for any initial condition $\bar{P}(0)\geq1$. Therefore, we will subsequently analyze $\bar{P}(k)$ instead of $P(k)$.

1) $\Longrightarrow$ 2). Consider a special case of $\bar{P}(1)=0$. It follows from (45) that $\bar{P}(2)=Q\geq \bar{P}(1)$. Then, utilizing the monotonicity of the discrete Riccati equation yields $\bar{P}(3)\geq \bar{P}(2)$. By repeating the process, one has $\bar{P}(k+1)\geq \bar{P}(k)$ for $k\geq1$. As a result, the sequence $\{\bar{P}(k)\}_{k\geq1}$ is monotonically increasing and uniformly bounded from above. This implies that $\lim_{k\to\infty}\bar{P}(k)=\bar{P}$ for some $\bar{P}\geq0$ if $\bar{P}(1)=0$. Meanwhile, note that $\bar{P}(k)$ can also be represented as (51). This implies that
\begin{equation}\begin{aligned}
\bar{P}=&\sum^{s}_{i=1}(\mathcal{A}_{i}-\mathcal{K}_{i}C)\bar{P}(\mathcal{A}_{i}-\mathcal{K}_{i}C)^\top\\
&+\mathcal{K}_iR\mathcal{K}_i^\top+Q,
\end{aligned}\end{equation}
where
\begin{equation}\begin{aligned}
\mathcal{K}_i\triangleq\mathcal{A}_{i}\bar{P}C^\top\big(C\bar{P}C^\top+R\big)^{-1}.\nonumber
\end{aligned}\end{equation}
Based on (58), some direct recursions yield
\begin{equation}\begin{aligned}
\bar{P}=&\mathfrak{L}_{\mathcal{K}}^n(\bar{P})+\Upsilon,\nonumber
\end{aligned}\end{equation}
where
\begin{equation}\begin{aligned}
\Upsilon\triangleq\sum^n_{i=1}\mathfrak{L}_{\mathcal{K}}^{i-1}(\sum^s_{j=1}U_j),\nonumber
\end{aligned}\end{equation}
\begin{equation}\begin{aligned}
\mathcal{K}\triangleq\{\mathcal{K}_1,\mathcal{K}_2,\cdots,\mathcal{K}_s\},\nonumber
\end{aligned}\end{equation}
\begin{equation}\begin{aligned}
U_i\triangleq \mathcal{K}_i R\mathcal{K}_i^\top+\mathrm{Diag}(\delta_{i1}Q_{1},\delta_{i2}Q_{2},\cdots,\delta_{is}Q_{s}).\nonumber
\end{aligned}\end{equation}
Recall the definition of $\mathcal{A}_i$ in (41), one knows that $\mathcal{K}_i$ can be formulated as a block matrix with nonzero sub-block only on the $i$th row, i.e.,
\begin{equation}\begin{aligned}
\mathcal{K}_i=
\begin{bmatrix}
{\delta_{i1}\mathcal{K}_{11}} & {\delta_{i1}\mathcal{K}_{12}} & {\cdots} & {\delta_{i1}\mathcal{K}_{1s}}\\
{\delta_{i2}\mathcal{K}_{21}} & {\delta_{i2}\mathcal{K}_{22}} & {\cdots} & {\delta_{i2}\mathcal{K}_{2s}}\\
{\vdots} & {\vdots} & {\ddots} & {\vdots}\\
{\delta_{is}\mathcal{K}_{s1}} & {\delta_{is}\mathcal{K}_{s2}} & {\cdots} & {\delta_{is}\mathcal{K}_{ss}}
\end{bmatrix},\nonumber
\end{aligned}\end{equation}
where $\mathcal{K}_{ij}\in\mathbb{R}^{n_i\times m_j}$. Then, let us define
\begin{equation}\begin{aligned}
V_i\triangleq & \mathrm{Diag}(\delta_{i1}\mathcal{K}_{11}R_{1}\mathcal{K}^\top_{11},\delta_{i2}\mathcal{K}_{22}R_{2}\mathcal{K}^\top_{22},\cdots,\delta_{is}\mathcal{K}_{ss}R_{s}\mathcal{K}^\top_{ss})\\
&+\mathrm{Diag}(\delta_{i1}Q_{1},\delta_{i2}Q_{2},\cdots,\delta_{is}Q_{s}),\nonumber
\end{aligned}\end{equation}
\begin{equation}\begin{aligned}
\mathfrak{L}_i(X)\triangleq(\mathcal{A}_{i}-\mathcal{K}_{i}C)X(\mathcal{A}_{i}-\mathcal{K}_{i}C)^\top.\nonumber
\end{aligned}\end{equation}
Evidently, $U_i\geq V_i$ for all $i=1,2,\cdots,s$. In this case, it can be derived that
\begin{equation}\begin{aligned}
\sum^n_{i=1}\mathfrak{L}_{\mathcal{K}}^{i-1}(\sum^s_{j=1}U_j)&\geq\sum^n_{i=1}\sum^s_{j=1}\mathfrak{L}^{i-1}_j(V_j)\\
&\geq\mathrm{Diag}(\mathcal{R}_1,\mathcal{R}_2,\cdots,\mathcal{R}_s),\nonumber
\end{aligned}\end{equation}
where
\begin{equation}\begin{aligned}
\mathcal{R}_i=\sum^{n_i}_{j=1}&(A_{ii}-\mathcal{K}_{ii}C_i)^{j-1}(Q_{i}+\mathcal{K}_{ii}R_{i}\mathcal{K}^\top_{ii})\\
&\times\big((A_{ii}-\mathcal{K}_{ii}C_i)^{j-1}\big)^\top.\nonumber
\end{aligned}\end{equation}
Note that $\mathcal{R}_i$ is the reachability Gramian of the matrix pair $(A_{ii}-\mathcal{K}_{ii}C_i,[\sqrt{Q_{i}},\ \mathcal{K}_{ii}\sqrt{R_i}])$.
At the same time, under Condition \ref{AsLUR}, one knows from the invariance of reachability under feedback that $\mathcal{R}_i>0$ for all $i=1,\cdots,s$. This implies that $\bar{P}>0$ and $\Upsilon>0$.

Consider the Lyapunov function
\begin{equation}\begin{aligned}
\varphi(X)=\langle X,\bar{P}\rangle.
\end{aligned}\end{equation} Since $\bar{P}>0$, it is trivial that the Lyapunov function (59) satisfies the following three conditions in $\mathbb{S}^n_{\geq0}$:
\begin{itemize}
\item[i)] $\varphi(X)$ is continuous,
\item[ii)] $\varphi(X)\to\infty$ if $\|X\|\to\infty$,
\item[iii)] $\varphi(X)=0$ if $X=0$ and $\varphi(X)>0$ if $X\neq 0$,
\end{itemize}
where $\mathbb{S}^n_{\geq0}$ is the symmetric positive semi-definite matrix set in $\mathbb{R}^{n\times n}$. Further consider the system
\begin{equation}\begin{aligned}
X(k+1)=(\mathfrak{L}^*_{\mathcal{K}})^n\big(X(k)\big),
\end{aligned}\end{equation}
where $\mathfrak{L}^*_{\mathcal{K}}(X)\triangleq\sum^s_{i=1}(\mathcal{A}_{i}-\mathcal{K}_{i}C)^\top X(\mathcal{A}_{i}-\mathcal{K}_{i}C)$ is the adjoint operator of $\mathfrak{L}_{\mathcal{K}}$ (see Lemma \ref{LemmaA3}). Then, one can derive that
\begin{equation}\begin{aligned}
&\varphi(X(k+1))-\varphi(X(k))\\
=&\langle\bar{P},(\mathfrak{L}^*_{\mathcal{K}})^n\big(X(k)\big)\rangle-\langle\bar{P},X(k)\rangle\\
=&\langle \mathfrak{L}_{\mathcal{K}}^n(\bar{P})-\bar{P},X(k)\rangle\\
=&-\langle\Upsilon,X(k)\rangle<0,\nonumber
\end{aligned}\end{equation}
if $X(k)\neq0$. Thus, one can know from the Lyapunov theory \cite{vidyasagar2002nonlinear} that the system (60) is asymptotically stable. This implies that
\begin{equation}\begin{aligned}
\rho(\mathfrak{L}_{\mathcal{K}})^n=\rho(\mathfrak{L}^*_{\mathcal{K}})^n=\rho\big((\mathfrak{L}^*_{\mathcal{K}})^n\big)<1.\nonumber
\end{aligned}\end{equation}

2) $\Longrightarrow$ 3). Since $\rho(\mathfrak{L}_{\mathcal{G}})<1$, there exists $X>0$ such that
\begin{equation}\begin{aligned}
X>\mathfrak{L}_{\mathcal{G}}(X)=\sum^{s}_{i=1}(\mathcal{A}_{i}-\mathcal{G}_{i}C)X(\mathcal{A}_{i}-\mathcal{G}_{i}C)^\top.
\end{aligned}\end{equation}
According to Schur complement lemma \cite{boyd1994linear}, one knows that (61) holds if and only if
\begin{equation}\begin{aligned}
\begin{bmatrix}
{X} & {\mathcal{A}_{1}-\mathcal{G}_{1}C} & {\cdots} & {{\mathcal{A}_{s}-\mathcal{G}_{s}C}}\\
{\star} & {X^{-1}} & {0} & {0}\\
{\star} & {\star} & {\ddots} & {0}\\
{\star} & {\star} & {\star} & {X^{-1}}\\
\end{bmatrix}>0.
\end{aligned}\end{equation}
Then, multiplying $\mathrm{Diag}(X^{-1},I,\cdots,I)$ on both sides of (62) yields (57).

3) $\Longrightarrow$ 1). Similar to the proof above, it follows from Schur complement lemma that the LMI (57) is feasible if and only if there exists $X>0$ and $\mathcal{G}_1$, $\mathcal{G}_2$, $\cdots$, $\mathcal{G}_s$ such that
\begin{equation}\begin{aligned}
X>\sum^{s}_{i=1}(\mathcal{A}_{i}-\mathcal{G}_{i}C)X(\mathcal{A}_{i}-\mathcal{G}_{i}C)^\top.
\end{aligned}\end{equation}
Then, it follows from Proposition A.3 in \cite{Fan992124} that the solution to following difference equation is uniformly bounded from above:
\begin{equation}\begin{aligned}
X(k+1)=&\sum^{s}_{i=1}(\mathcal{A}_{i}-\mathcal{G}_{i}C)X(k)(\mathcal{A}_{i}-\mathcal{G}_{i}C)^\top\\
&\ \ \ \ +\mathcal{G}_{i}R\mathcal{G}^\top_{i}+Q,\nonumber
\end{aligned}\end{equation}
for any initial condition $X(1)\geq0$. For any $\bar{P}(1)\geq0$, let $X(1)=\bar{P}(1)$. It is obvious that $\bar{P}(1)\leq X(1)$. Suppose that $\bar{P}(k)\leq X(k)$. Then, one can derive from the optimality of the Kalman gain $\mathcal{K}_i(k)$ that
\begin{equation}\begin{aligned}
X(k+1)\geq&\sum^{s}_{i=1}(\mathcal{A}_{i}-\mathcal{G}_{i}C)\bar{P}(k)(\mathcal{A}_{i}-\mathcal{G}_{i}C)^\top\\
&\ \ \ \ +\mathcal{G}_{i}R\mathcal{G}^\top_{i}+Q\geq\bar{P}(k+1).\nonumber
\end{aligned}\end{equation}
This induction means that, for any $\mathcal{G}_1$, $\cdots$, $\mathcal{G}_s$, we have $X(k)\geq \bar{P}(k),\ \forall\ k\geq 1$. Thus, $\{\bar{P}(k)\}_{k\geq1}$ is uniformly bounded from above for any initial condition $\bar{P}(1)\geq0$. The proof is completed.
\end{proof}

Notice that when Condition \ref{AsLUR} is not fulfilled, the assertions 2) and 3) will be sufficient but not necessary for the assertion 1) in Theorem \ref{TheoremTIPiif}. Moreover, the LMI (57) provided by Theorem \ref{TheoremTIPiif} is centralized since it depends on the global parameters of the LIS. Next, a distributed stability condition will be given by searching for a feasible solution to (57).

\begin{proposition}
Define $\{\tau^i_{1},\tau^i_{2},\cdots,\tau^i_{\varsigma_i}\}\triangleq\mathbb{I}_i\cup \{i\}$, where $\varsigma_i\triangleq|\mathbb{I}_i|+1$. There exist $X>0$ and $Y_1$, $Y_2$, $\cdots$, $Y_s$ such that the LMI (57) is feasible if the LMI $\mathcal{P}_i$ is feasible with respect to $X_{i1}$, $X_{i2}$, $\cdots$, $X_{i\varsigma_i}$ for all $i=1,2,\cdots,s$, where
\begin{equation}\begin{aligned}
\begin{bmatrix}
{\frac{1}{\vartheta_i}I} & {A_{i\tau^i_{1}}-X_{i1}C_{\tau^i_{1}}} & {\cdots} & {A_{i\tau^i_{\varsigma_i}}-X_{i\varsigma_i}C_{\tau^i_{\varsigma_i}}}\\
{\star} & {\frac{1}{\vartheta_i}I} & {0} & {0}\\
{\star} & {\star} & {\ddots} & {0}\\
{\star} & {\star} & {\star} & {\frac{1}{\vartheta_i}I}
\end{bmatrix}>0.
\end{aligned}\end{equation}
\end{proposition}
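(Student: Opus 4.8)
The plan is to establish this one-directional implication by an explicit construction: from feasible solutions $\{X_{i1},\dots,X_{i\varsigma_i}\}$ of the local LMIs $\mathcal{P}_i$ in (64) I will write down a single global $X>0$ together with global matrices $Y_1,\dots,Y_s$ and verify (57) by hand. The first step is to restate (57) in a handier form. Since the $s$ trailing diagonal blocks of the matrix in (57) all equal $X>0$, a Schur complement with respect to the block $\mathrm{Diag}(X,\dots,X)$ (the manipulation already used in the proof of Theorem~\ref{TheoremTIPiif}) shows that (57) is feasible if and only if there exist $X>0$ and $Y_1,\dots,Y_s$ with $X>\sum_{i=1}^{s}M_iX^{-1}M_i^{\top}$, where $M_i\triangleq X\mathcal{A}_i-Y_iC$ and $\mathcal{A}_i=\vartheta_iE_iA$. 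So it suffices to exhibit such $X$ and $Y_i$.

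The construction pins the global Lyapunov matrix to the identity: take $X=I_n$. Since $\mathcal{A}_i=\vartheta_iE_iA$ is supported on its $i$th block row, whose $j$th block is $\vartheta_iA_{ij}$, and $A_{ij}=0$ for $j\notin\mathbb{I}_i\cup\{i\}$, $j\neq i$, I would let $Y_i$ be the matrix whose only nonzero block row is the $i$th one, with $(i,j)$ block equal to $\vartheta_iX_{i\ell}$ when $j=\tau^i_{\ell}\in\mathbb{I}_i\cup\{i\}$ and $0$ otherwise. Because $C=\mathrm{Diag}(C_1,\dots,C_s)$, the product $Y_iC$ retains this sparsity, its $(i,j)$ block being $Y_{ij}C_j$; hence $M_i=\mathcal{A}_i-Y_iC$ is supported on block row $i$, with $j$th block $\vartheta_i\big(A_{i\tau^i_{\ell}}-X_{i\ell}C_{\tau^i_{\ell}}\big)$ at $j=\tau^i_{\ell}$ and zero elsewhere. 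Writing $R^{(i)}_{\ell}\triangleq A_{i\tau^i_{\ell}}-X_{i\ell}C_{\tau^i_{\ell}}$ and using $X^{-1}=I_n$, the product $M_iX^{-1}M_i^{\top}=M_iM_i^{\top}$ has a single nonzero block, the $(i,i)$ one, equal to $\vartheta_i^{2}\sum_{\ell=1}^{\varsigma_i}R^{(i)}_{\ell}(R^{(i)}_{\ell})^{\top}$, all cross-row and off-row products vanishing because $M_i$ has exactly one nonzero block row. Therefore $\sum_{i=1}^{s}M_iX^{-1}M_i^{\top}=\mathrm{Diag}(\Sigma_1,\dots,\Sigma_s)$ with $\Sigma_i\triangleq\vartheta_i^{2}\sum_{\ell=1}^{\varsigma_i}R^{(i)}_{\ell}(R^{(i)}_{\ell})^{\top}$, and $X>\sum_iM_iX^{-1}M_i^{\top}$ decouples into the $s$ separate conditions $I_{n_i}>\Sigma_i$, i.e.\ $\frac{1}{\vartheta_i^{2}}I>\sum_{\ell=1}^{\varsigma_i}R^{(i)}_{\ell}(R^{(i)}_{\ell})^{\top}$. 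A final Schur complement (pivoting on the trailing block $\mathrm{Diag}(\frac{1}{\vartheta_i}I,\dots,\frac{1}{\vartheta_i}I)>0$, together with a scaling by $\vartheta_i>0$) identifies each of these with feasibility of $\mathcal{P}_i$; since all the $\mathcal{P}_i$ are feasible by hypothesis, the constructed $X=I_n$ and $Y_i$ satisfy (57).

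The bulk of the work is bookkeeping once the ansatz is fixed, so I expect the only genuine obstacle to be recognizing that choice: setting $X$ (hence $X^{-1}$) equal to the identity is precisely what makes the centralized LMI (57) split cleanly along the block partition of the LIS and what forces the $\frac{1}{\vartheta_i}I$ blocks in $\mathcal{P}_i$. The secondary point needing care is verifying that $M_iM_i^{\top}$ really has the claimed single-block structure and that $Y_iC$ inherits the sparsity of $Y_i$; both are immediate from $\mathcal{A}_i=\vartheta_iE_iA$ and the block-diagonality of $C$. I would close by remarking that the implication is genuinely one-directional — pinning the global Lyapunov matrix to a scalar multiple of the identity is conservative, so feasibility of (57) need not entail feasibility of every $\mathcal{P}_i$.
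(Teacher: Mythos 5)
Your proposal is correct and follows essentially the same route as the paper: both set the global Lyapunov matrix to the identity, restrict each gain matrix to its $i$th block row so that the centralized condition $X>\sum_i(\mathcal{A}_i-\mathcal{G}_iC)X(\mathcal{A}_i-\mathcal{G}_iC)^\top$ decouples into $s$ block-diagonal inequalities, and then recover each $\mathcal{P}_i$ by a Schur complement (the paper routes through its intermediate inequality (63)/(65) rather than your $M_i=X\mathcal{A}_i-Y_iC$ form, but this is only a cosmetic difference). Your closing remark on the one-directionality of the implication matches the paper's own observation that pinning $X$ to the identity is conservative.
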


\begin{proof}
Recall the proof of Theorem \ref{TheoremTIPiif}, we know that (57) is feasible if and only if there exist $X>0$ and $\mathcal{G}_1$, $\cdots$, $\mathcal{G}_s$ such that (63) is feasible. Then, let the variables $X$ and $\mathcal{G}_i$ in (63) be $X=I$ and
\begin{equation}\begin{aligned}
\mathcal{G}_i=
\begin{bmatrix}
{\mathcal{G}_{i,11}} & {\mathcal{G}_{i,12}} & {\cdots} & {\mathcal{G}_{i,1s}}\\
{\mathcal{G}_{i,21}} & {\mathcal{G}_{i,22}} & {\cdots} & {\mathcal{G}_{i,2s}}\\
{\vdots} & {\vdots} & {\ddots} & {\vdots}\\
{\mathcal{G}_{i,s1}} & {\mathcal{G}_{i,s2}} & {\cdots} & {\mathcal{G}_{i,ss}}
\end{bmatrix},\nonumber
\end{aligned}\end{equation}
where $\mathcal{G}_{i,\kappa\iota}\in\mathbb{R}^{n_\kappa\times n_\iota}$ and $\mathcal{G}_{i,\kappa\iota}=0$ if $\kappa\neq i$. With these settings, one can derive that
\begin{equation}\begin{aligned}
&(\mathcal{A}_{i}-\mathcal{G}_{i}C)X(\mathcal{A}_{i}-\mathcal{G}_{i}C)^\top\\
=&\mathrm{Diag}\big(0,\sum_{j\in\mathbb{I}_i\cup\{i\}}(\vartheta_iA_{ij}-\mathcal{G}_{i,ij}C_j)(\vartheta_iA_{ij}-\mathcal{G}_{i,ij}C_j)^\top,0\big).\nonumber
\end{aligned}\end{equation}
In this case, we can know that (57) holds if there exists $X_{ij}$ such that
\begin{equation}\begin{aligned}
\sum_{j\in\mathbb{I}_i\cup\{i\}}(\vartheta_iA_{ij}-X_{ij}C_j)(\vartheta_iA_{ij}-X_{ij}C_j)^\top<I
\end{aligned}\end{equation}
for all $i=1,\cdots,s$. Finally, utilizing Schur complement lemma \cite{boyd1994linear} for (65) yields (64). The proof is completed.
\end{proof}


For an estimator acting on the time-invariant LIS, besides its stability, we are also interested in the convergence of its parameters. The following theorem will prove that the distributed estimator (3) converges to a unique steady state for any initial condition.

\begin{theorem}\label{TheoremTIConvergence}
Consider the LIS (2) and the distributed estimator (3) with the local gain (4). Let (55) hold and $\vartheta\neq0$. If Condition \ref{AsLUR} is fulfilled and the LMI (57) is feasible, then the distributed estimator (3) converges to a unique steady state for any initial condition, i.e.,
\begin{equation}\begin{aligned}
\lim_{k\rightarrow\infty}\bar{P}(k)=\bar{P},
\end{aligned}\end{equation}
\begin{equation}\begin{aligned}
\lim_{k\rightarrow\infty}K(k)=K\triangleq\bar{P}C^\top(C\bar{P}C^\top+R)^{-1},
\end{aligned}\end{equation}
where $\bar{P}$ is the unique symmetric positive definite solution to the DMRE
\begin{equation}\begin{aligned}
X=\Omega\odot\big(&\mathcal{A}X\mathcal{A}^\top-\mathcal{A}XC^\top(CXC^\top+R)^{-1}\\
&\times CX\mathcal{A}^\top+Q\big),
\end{aligned}\end{equation}
where
\begin{equation}\begin{aligned}
\Omega\triangleq\mathrm{Diag}(\mathbf{1}_{n_1n_1},\mathbf{1}_{n_2n_2},\cdots,\mathbf{1}_{n_sn_s}),\nonumber
\end{aligned}\end{equation}
\begin{equation}\begin{aligned}
\mathcal{A}\triangleq
\begin{bmatrix}
{\vartheta_1A_{11}} & {\vartheta_1A_{12}} & {\cdots} & {\vartheta_1A_{1s}}\\
{\vartheta_2A_{21}} & {\vartheta_2A_{22}} & {\cdots} & {\vartheta_2A_{2s}}\\
{\vdots} & {\vdots} & {\ddots} & {\vdots}\\
{\vartheta_sA_{s1}} & {\vartheta_sA_{s2}} & {\cdots} & {\vartheta_sA_{ss}}
\end{bmatrix}.\nonumber
\end{aligned}\end{equation}
Meanwhile, if $\vartheta_i=\sqrt{|\mathbb{O}_i|+1}$, then the steady-state closed-loop matrix $A-KCA$ is stable, i.e.,
\begin{equation}\begin{aligned}
\rho(A-KCA)<1.
\end{aligned}\end{equation}
\end{theorem}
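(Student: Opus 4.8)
\emph{Reduction.} By (45) and the time‑invariant structure (55), the DMRE collapses to the one‑variable recursion $\bar{P}(k+1)=(\bar{\mathfrak{F}}\circ\mathfrak{F})(\bar{P}(k))$. Expanding $\bar{\mathfrak{F}}\circ\mathfrak{F}$ and using that $\sum_i\mathcal{A}_iX\mathcal{A}_i^\top=\Omega\odot(\mathcal{A}X\mathcal{A}^\top)$ (each $\mathcal{A}_iX\mathcal{A}_i^\top$ populates only the $(i,i)$ block), the fixed points of $\bar{\mathfrak{F}}\circ\mathfrak{F}$ are exactly the solutions of (68); since $K(k)$ depends continuously on $\bar{P}(k)$ via (4), it suffices to show $\bar{P}(k)\to\bar{P}$ with $\bar{P}$ the unique positive definite solution of (68). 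First I would run the recursion from $\bar{P}(1)=0$: by Proposition \ref{PrMonotonicity} and $\bar{P}(2)=Q\ge0=\bar{P}(1)$ the sequence is nondecreasing, and since (57) is feasible, Theorem \ref{TheoremTIPiif} gives that $\{P(k)\}$, hence $\{\bar{P}(k)\}$, is uniformly bounded above; thus it converges to a fixed point $\bar{P}\ge0$, i.e. a solution of (68). Exactly as in the proof of Theorem \ref{TheoremTIPiif}, any fixed point $\bar{P}'$ can be written $\bar{P}'=\mathfrak{L}_{\mathcal{K}'}(\bar{P}')+\sum_i\mathcal{K}_i'R(\mathcal{K}_i')^\top+Q$ with $\mathcal{K}'$ the induced gain; iterating and invoking Condition \ref{AsLUR} together with the invariance of reachability under feedback forces $\bar{P}'$ to dominate a positive definite block‑diagonal reachability Gramian, so $\bar{P}'>0$, and the Lyapunov functional $\varphi(X)=\langle X,\bar{P}'\rangle$ then gives $\rho(\mathfrak{L}_{\mathcal{K}'})<1$.

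\emph{Uniqueness and convergence.} Let $\bar{P}_1,\bar{P}_2$ be two fixed points with gains $\mathcal{K}^{(1)},\mathcal{K}^{(2)}$. By the optimality of the Kalman gain, $\bar{P}_1=(\bar{\mathfrak{F}}\circ\mathfrak{F})(\bar{P}_1)\le\mathfrak{L}_{\mathcal{K}^{(2)}}(\bar{P}_1)+\sum_i\mathcal{K}_i^{(2)}R(\mathcal{K}_i^{(2)})^\top+Q$, whereas $\bar{P}_2$ equals the same expression with $\bar{P}_1$ replaced by $\bar{P}_2$; subtracting yields $\bar{P}_1-\bar{P}_2\le\mathfrak{L}_{\mathcal{K}^{(2)}}(\bar{P}_1-\bar{P}_2)$, and since $\mathfrak{L}_{\mathcal{K}^{(2)}}$ is order preserving with spectral radius below $1$, iterating gives $\bar{P}_1-\bar{P}_2\le\mathfrak{L}_{\mathcal{K}^{(2)}}^n(\bar{P}_1-\bar{P}_2)\to0$, i.e. $\bar{P}_1\le\bar{P}_2$; by symmetry $\bar{P}_1=\bar{P}_2$, so (68) has a unique positive definite solution $\bar{P}$. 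For an arbitrary $\bar{P}(1)\ge0$, monotonicity then sandwiches the trajectory: from below by the zero‑initialized sequence, which increases to $\bar{P}$; from above by the trajectory started at $\alpha\bar{P}$ with $\alpha\ge1$ chosen so $\alpha\bar{P}\ge\bar{P}(1)$, which---by the scaling estimate $(\bar{\mathfrak{F}}\circ\mathfrak{F})(\alpha\bar{P})\le\alpha\bar{P}$ used in Proposition \ref{PrInitial}---decreases monotonically to a fixed point, necessarily $\bar{P}$. Hence $\bar{P}(k)\to\bar{P}$, and $K(k)\to K$ by continuity.

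\emph{Stability of the steady‑state closed‑loop matrix.} Now set $\vartheta_i=\sqrt{|\mathbb{O}_i|+1}$ and pass (32) to the steady state. With $M\triangleq A-KCA=(I-KC)A$ and $S\triangleq\Lambda P^{-1}$---symmetric positive definite because $\Lambda$ and $P$ are commuting positive definite block‑diagonal matrices---one obtains $S\ge M^\top\Lambda(P^{-1}+C^\top R^{-1}C)M$; using $\Lambda\ge I$ and block‑diagonal commutativity, $\Lambda(P^{-1}+C^\top R^{-1}C)\ge S+C^\top R^{-1}C$, so $S-M^\top SM\ge(CM)^\top R^{-1}(CM)\ge0$. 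If $Mv=\lambda v$ with $v\ne0$ and $|\lambda|\ge1$, then $(1-|\lambda|^2)v^{*}Sv=v^{*}(S-M^\top SM)v\ge\|R^{-1/2}CMv\|^2\ge0$ forces $|\lambda|=1$ and $CMv=0$, hence $Cv=0$ as $\lambda\ne0$. Applying $C$ to $(I-KC)Av=\lambda v$ gives $(I-CK)CAv=0$, and since $I-CK=R(C\bar{P}C^\top+R)^{-1}$ is invertible, $CAv=0$; then $Av=(I-KC)Av+KCAv=\lambda v$, so $v$ is an eigenvector of $A$ with $|\lambda|=1$ and $Cv=0$, contradicting the detectability of $(C,A)$. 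Therefore $\rho(A-KCA)<1$. The main obstacle is the uniqueness step, and in particular verifying that \emph{every} fixed point---not merely the minimal one from the zero initialization---is positive definite and induces a strictly contractive $\mathfrak{L}_{\mathcal{K}'}$, which is precisely what makes the inequality $\bar{P}_1-\bar{P}_2\le\mathfrak{L}_{\mathcal{K}^{(2)}}(\bar{P}_1-\bar{P}_2)$ iterable to zero; the monotonicity bookkeeping for convergence from arbitrary initial data is then routine.
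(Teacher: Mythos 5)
Your proposal is correct, and it shares the paper's monotonicity skeleton (zero-initialized increasing sequence, boundedness from Theorem \ref{TheoremTIPiif}, squeeze for arbitrary $\bar{P}(1)$) but departs from the paper in two substantive places. First, for uniqueness and convergence from above: the paper bounds $\bar{P}(k)-\bar{P}\le\mathfrak{L}_{\mathcal{K}}^{k-1}\big(\bar{P}(1)-\bar{P}\big)$ for $\bar{P}(1)\ge\bar{P}$ and then deduces uniqueness as a byproduct of initial-condition independence of the limit; you instead prove uniqueness head-on by the two-fixed-point comparison $\bar{P}_1-\bar{P}_2\le\mathfrak{L}_{\mathcal{K}^{(2)}}(\bar{P}_1-\bar{P}_2)$ iterated under $\rho(\mathfrak{L}_{\mathcal{K}^{(2)}})<1$, and handle the upper trajectory by monotone decrease from $\alpha\bar{P}$. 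Your route requires the extra (correctly supplied) observation that \emph{every} fixed point, not just the minimal one, dominates a positive definite reachability Gramian under Condition \ref{AsLUR} and hence induces a contractive $\mathfrak{L}_{\mathcal{K}'}$; what it buys is a uniqueness statement that does not presuppose global convergence. Second, for $\rho(A-KCA)<1$: the paper freezes the recursion at the stationary initial condition and invokes the exponential stability of Theorem \ref{TheoremLya}; you give a self-contained PBH-type argument from the steady-state version of (32), showing $S-M^\top SM\ge (CM)^\top R^{-1}CM$ with $S=\Lambda P^{-1}>0$ and ruling out unit-circle eigenvalues via $I-CK=R(C\bar{P}C^\top+R)^{-1}$ being invertible and the standing detectability of $(C,A)$. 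This is cleaner in that it avoids re-running the full time-varying Lyapunov machinery (Proposition \ref{PrUD} and Steps 1--4 of Theorem \ref{TheoremLya}), at the cost of using the global detectability assumption explicitly; both arguments are valid.
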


\begin{proof}
We first prove the convergence of $\bar{P}(k)$. Substituting (5) into (6) yields
\begin{align}
\bar{P}_i(k+1)=&\sum_{j\in\mathbb{I}_i\cup \{i\}}A_{ij}\bar{P}_j(k)A_{ij}^\top-A_{ij}\bar{P}_j(k)C_j^\top\nonumber\\
&\ \ \ \ \ \ \ \ \ \ \times\big(C_j\bar{P}_j(k)C_j^\top+R_{j}\big)^{-1}C_j\bar{P}_j(k)A_{ij}^\top\nonumber\\
&+Q_{i}.
\end{align}
Based on (70), the DMRE can be reformulated as
\begin{equation}\begin{aligned}
\bar{P}(k+1)=&\Omega\odot\big(\mathcal{A}\bar{P}(k)\mathcal{A}^\top+Q-\mathcal{A}\bar{P}(k)C^\top\\
&\times\big(C\bar{P}(k)C^\top+R\big)^{-1}C\bar{P}(k)\mathcal{A}^\top\big).
\end{aligned}\end{equation}

If $\bar{P}(1)=0$, as in the proof of Theorem \ref{TheoremTIPiif} that
\begin{equation}\begin{aligned}
\lim_{k\rightarrow\infty}\bar{P}(k)=\bar{P},\ \mathrm{if}\ \bar{P}(1)=0.
\end{aligned}\end{equation}

If $\bar{P}(1)\geq \bar{P}$, it follows from (71) and the monotonicity of the discrete Riccati equation that
\begin{equation}\begin{aligned}
\bar{P}(2)\geq&\Omega\odot\big(\mathcal{A}\bar{P}\mathcal{A}^\top+Q-\mathcal{A}\bar{P}C^\top\\
&\times(C\bar{P}C^\top+R)^{-1}C\bar{P}\mathcal{A}^\top\big)=\bar{P}.\nonumber
\end{aligned}\end{equation}
Repeating the procedure above gives
\begin{equation}\begin{aligned}
\bar{P}(k)-\bar{P}\geq 0,\ \forall k\geq1.
\end{aligned}\end{equation}
It follows from (45) and the optimality of the Kalman gain $\mathcal{K}_i(k)$ that
\begin{equation}\begin{aligned}
\bar{P}(k+1)\leq&\sum^{s}_{i=1}(\mathcal{A}_{i}-\mathcal{K}_{i}C)\bar{P}(k)(\mathcal{A}_{i}-\mathcal{K}_{i}C)^\top\\
&+\mathcal{K}_{i}R\mathcal{K}^\top_{i}+Q,
\end{aligned}\end{equation}
where
\begin{equation}\begin{aligned}
\mathcal{K}_{i}\triangleq \mathcal{A}_i\bar{P}C^\top(C\bar{P}C^\top+R)^{-1}.\nonumber
\end{aligned}\end{equation}
Based on (58) and (74), one has
\begin{equation}\begin{aligned}
&\bar{P}(k)-\bar{P}\leq \mathfrak{L}_{\mathcal{K}}^{k-1}\big(\bar{P}(1)-\bar{P}\big),
\end{aligned}\end{equation}
In the proof of Theorem \ref{TheoremTIPiif}, we have prove that $\rho(\mathfrak{L}_{\mathcal{K}})<1$. As a result, one can obtain from (73) and (75) that
\begin{equation}\begin{aligned}
\lim_{t\rightarrow\infty}\bar{P}(k)=\bar{P},\ \forall\ \bar{P}(1)\geq \bar{P}.
\end{aligned}\end{equation}

For arbitrary $\bar{P}(1)\geq0$, one has $0\leq \bar{P}(1)\leq \bar{P}+\bar{P}(1)$. In this case, based on (73) and (76), one can derive from Proposition \ref{PrMonotonicity} that
\begin{equation}\begin{aligned}
\lim_{t\rightarrow\infty}\bar{P}(k)=\bar{P},\ \forall\ \bar{P}(1)\geq 0.
\end{aligned}\end{equation}
Consequently, (66) and (67) can be obtained. Moreover, since the limit (77) is independent of the initial condition, the symmetric positive definite solution to (68) is unique.

Next, the inequality (69) will be proved. Consider a special case where $P(0)=\bar{P}-\bar{P}C^\top(C\bar{P}C^\top+R)^{-1}C\bar{P}$. This implies that $\bar{P}(k)=\bar{P}$ and $K(k)=K$ for all $k\geq1$. In this case, it follows from (21) that the global estimate error can be expressed as
\begin{equation}\begin{aligned}
\hat{e}(k)=&(A-KCA)\hat{e}(k-1)+(I-KC)w(k-1)\\
&-Kv(k).
\end{aligned}\end{equation}
Utilizing the exponential convergence of the noise-free version of the system (78) (c.f. Theorem \ref{TheoremLya}) yields
\begin{equation}\begin{aligned}
\lim_{k\to\infty}\|(A-KCA)^k\|=0.\nonumber
\end{aligned}\end{equation}
Finally, according to the spectral property of bounded linear operators \cite{costa2006discrete}, one knows that $\rho(A-KCA)<1$. The proof is completed.
\end{proof}

\begin{remark}
Utilizing the feasibility of the LMI as a stability criterion is common in the control community \cite{boyd1994linear}. Such optimization problems are convex, and there are numerous well-established toolboxes, including the “feasp” function in MATLAB, for verifying the feasibility of a given LMI.
\end{remark}

\begin{remark}
Since the closed-loop matrix $A-KCA$ is stable, we can implement the following steady-state distributed estimator
\begin{equation}\begin{aligned}
\hat{x}_{st,i}(k)=&\bar{x}_{st,i}(k)+K_{i}\big(z_i(k)-C_{i}\bar{x}_{st,i}(k)\big),
\end{aligned}\end{equation}
where $\bar{x}_{st,i}(k)=A_{ii}\hat{x}_{st,i}(k-1)+\sum_{j\in\mathbb{I}_i}A_{ij}\hat{x}_{st,j}(k-1)$, $\mathrm{Diag}(K_{1},\cdots,K_{s})=K$ and $K_{i}\in\mathbb{R}^{n_i\times m_i}$. By recursing the DMRE (5) and (6) within each subsystem, we can obtain the steady-state gain $K$ off-line. Thus, compared with the original distributed estimator (3), the steady-state version (79) effectively reduces the communication burden among the subsystems during the real-time estimation.
\end{remark}

\subsection{Stability of the estimate error system}
\begin{theorem}\label{TheoremTIMarkov}
Consider the LIS (2) and the distributed estimator (3) with gain (4). Let (55) hold and $\vartheta_i=\sqrt{|\mathbb{I}_i|+1}$ for $i=1,2,\cdots,s$. If Conditions \ref{AsLUR} and \ref{AsPB} are fulfilled, then the estimate error system is exponentially stable, that is,
\begin{equation}\begin{aligned}
\|\hat{e}(k)\|\leq ab^k+c,
\end{aligned}\end{equation}
where $a\geq0$, $0\leq b<1$ and $c\geq0$ are constants. Meanwhile, if $w(k)=0$ and $v(k+1)=0$ for all $k\geq0$, then
\begin{equation}\begin{aligned}
\|\hat{e}(k)\|\leq ab^k.
\end{aligned}\end{equation}
\end{theorem}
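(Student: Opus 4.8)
The plan is to upgrade the marginal-stability conclusion of Theorem~\ref{TheoremMarkov} to genuine exponential decay by bringing in two facts that are only available in the time-invariant setting: convergence of the DMRE (Theorem~\ref{TheoremTIConvergence}) and the spectral-radius characterization of its boundedness (Theorem~\ref{TheoremTIPiif}). First I would record the routine reductions. In the time-invariant case the parameters are bounded and $\vartheta_i=\sqrt{|\mathbb I_i|+1}$ is a fixed positive constant, so Conditions~\ref{AsQRB} and \ref{AsThetaB} hold automatically; hence Proposition~\ref{PrInitial} applies and Condition~\ref{AsPB} (boundedness of $P(k)$ for the actual initial data) is equivalent to boundedness of $\{P(k)\}$ for \emph{every} initial condition. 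By Theorem~\ref{TheoremTIPiif} this makes the LMI feasible and, more precisely, yields $\rho(\mathfrak L_{\mathcal K})<1$ for the steady-state gain family $\mathcal K=\{\mathcal K_1,\dots,\mathcal K_s\}$, where $\mathcal A_i-\mathcal K_iC=\vartheta_iE_i\big(A-AKC\big)$ with $K\triangleq\bar P C^\top(C\bar P C^\top+R)^{-1}$ (this is exactly the operator whose asymptotic stability is established in the implication $1)\Rightarrow2)$ of that proof). By Theorem~\ref{TheoremTIConvergence}, $\bar P(k)\to\bar P$ and $K(k)\to K$ for any initial condition, so it remains to prove that the error recursion driven by the converging gain $K(k)$ is exponentially stable; the bounded-noise forcing then contributes the additive constant $c$ and vanishes in the noise-free case, just as at the end of the proof of Theorem~\ref{TheoremLya}.

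The decisive step is to show $\rho(A-KCA)<1$. Following the proof of the inequality $\rho(A-KCA)<1$ in Theorem~\ref{TheoremTIConvergence}, I would pick the special initial condition $P(0)=\bar P-\bar P C^\top(C\bar P C^\top+R)^{-1}C\bar P$, so that $\bar P(k)\equiv\bar P$, $K(k)\equiv K$, and the noise-free one-step prediction error obeys the \emph{time-invariant} recursion $\bar e(k+1)=\bar A\,\bar e(k)$ with $\bar A_{ij}=A_{ij}(I-K_jC_j)$, i.e.\ $\bar A=A-AKC$. Then I would invoke the Markov equivalence of Proposition~\ref{PrLISeqMJP} with $\Gamma_{ij}=\bar A_{ij}$ (so $\mathbb I^\Gamma_i=\mathbb I_i$ and $|\mathbb I^\Gamma_i|+1=\vartheta_i^2$), reproducing the chain of estimates from the proof of Theorem~\ref{TheoremMarkov}: one gets $\|\bar e(k)\|^2\le\sum_{i=1}^s\mathrm{Tr}\big(\Xi_i(k)\big)$ with $\Xi(k+1)=\mathfrak L_{\mathcal K}\big(\Xi(k)\big)$ and $\|\Xi(1)\|$ controlled quadratically by $\|\bar e(1)\|$. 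Since $\rho(\mathfrak L_{\mathcal K})<1$ from the first step, $\Xi(k)=\mathfrak L_{\mathcal K}^{k-1}\big(\Xi(1)\big)\to0$ geometrically at a rate independent of $\Xi(1)$, hence $\|\bar e(k)\|\le c_1b_0^{\,k}\|\bar e(1)\|$ for some $0\le b_0<1$; as $\bar e(1)$ is arbitrary this forces $\|\bar A^{\,k}\|$ to decay geometrically, so $\rho(\bar A)<1$, and therefore $\rho(A-KCA)=\rho\big((I-KC)A\big)=\rho\big(A(I-KC)\big)=\rho(\bar A)<1$.

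With that in hand I would return to the actual estimator. For an arbitrary initial condition the global estimate error satisfies $\hat e(k)=(I-K(k)C)A\,\hat e(k-1)+(I-K(k)C)w(k-1)-K(k)v(k)$, a linear time-varying recursion whose transition matrix $(I-K(k)C)A$ converges to the Schur matrix $(I-KC)A$. Choosing a submultiplicative norm in which $(I-KC)A$ is a strict contraction, continuity makes $(I-K(k)C)A$ a contraction in that norm once $k\ge N$, so after the (bounded) transient on $[1,N]$ the state-transition operator decays like $a\,b^{\,k}$ with $0\le b<1$; since in the time-invariant case the forcing sequence $\{(I-K(k)C)w(k-1)-K(k)v(k)\}$ is uniformly bounded (argued as for Theorem~\ref{TheoremLya}), the standard input-to-state estimate gives $\|\hat e(k)\|\le ab^{\,k}+c$, and when $w(k)=0$, $v(k+1)=0$ the forcing vanishes and $\|\hat e(k)\|\le ab^{\,k}$.

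The main obstacle is the second step. In the out-neighbor case the stability of the steady-state closed loop comes for free; with the in-neighbor choice $\vartheta_i=\sqrt{|\mathbb I_i|+1}$ there is no such shortcut, and one is forced through the Markov-system reformulation of Proposition~\ref{PrLISeqMJP} to identify the closed-loop dynamics with the operator $\mathfrak L_{\mathcal K}$ and then appeal to Theorem~\ref{TheoremTIPiif} for $\rho(\mathfrak L_{\mathcal K})<1$. The delicate points are making the $\mathrm{Tr}(\Xi_i)$ bound quantitative enough (a uniform geometric rate, independent of the initial data) to actually extract $\rho(A-KCA)<1$ rather than mere boundedness, and handling the transient regime in which $K(k)$ has not yet settled to $K$.
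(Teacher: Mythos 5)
Your proposal is correct and follows essentially the same route as the paper: use the convergence of the DMRE to the steady state, invoke the Markov-system equivalence to bound $\|(A-AKC)^k\zeta(0)\|^2$ by $\mathrm{Tr}\big(\Xi(k)\big)$ with $\Xi$ evolving under $\mathfrak{L}_{\mathcal{K}}$, deduce $\rho(A-KCA)<1$ from $\rho(\mathfrak{L}_{\mathcal{K}})<1$ (Theorem~\ref{TheoremTIPiif}), and then absorb the transient of the converging gain $K(k)\to K$ via a weighted norm in which the steady-state closed loop is a strict contraction. The paper realizes your ``submultiplicative contraction norm'' concretely as $\|X^{1/2}\cdot X^{-1/2}\|$ with $X^{-1}$ solving the Lyapunov equation $(A-KCA)X^{-1}(A-KCA)^\top=X^{-1}-I$, but this is the same argument.
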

Additionally, (80) and (81) also hold when the condition $\vartheta_i=\sqrt{|\mathbb{I}_i|+1}$ is replaced by $\vartheta_i=\sqrt{|\mathbb{O}_i|+1}$.

\begin{proof}
According to Theorems \ref{TheoremTIPiif} and \ref{TheoremTIConvergence}, one knows that $\{\bar{P}(k)\}_{k\geq1}$ is a convergent sequence with limit $\bar{P}$. Thus, consider the LIS (34) with the transition matrix
\begin{equation}\begin{aligned}
\Gamma_{ij}(k)=A_{ij}-A_{ij}K_jC_j,\ \forall k\geq1,\nonumber
\end{aligned}\end{equation}
where $K_i=\lim_{k\to\infty}K_i(k)$ is the steady-state local gain. Then, similar to the proof of (48), one has
\begin{equation}\begin{aligned}
\|(A-AKC)^{k}\zeta(0)\|\leq\mathrm{Tr}(\Xi(k)),\nonumber
\end{aligned}\end{equation}
where
\begin{equation}\begin{aligned}
\Xi(k)&=\sum^{s}_{i=1}(\mathcal{A}_{i}-\mathcal{K}_{i}C)\Xi(k-1)(\mathcal{A}_{i}-\mathcal{K}_{i}C)^\top\\
&=\mathfrak{L}_{\mathcal{K}}\big(\Xi(k-1)\big).\nonumber
\end{aligned}\end{equation}
Moreover, we have proved in Theorem \ref{TheoremTIPiif} that $\rho(\mathfrak{L}_{\mathcal{K}})<1$ if Condition \ref{AsLUR} and \ref{AsPB} are fulfilled. Thus, $\lim_{k\to\infty}\Xi(k)=0$ for any initial condition. This also implies that $\lim_{k\to\infty}\|(A-AKC)^{k}\zeta(0)\|=0$ for any $\zeta(0)$. Thus, $\rho(A-AKC)=\rho(A-KCA)<1$. In this case, there exists $X^{-1}>I$ such that
\begin{equation}\begin{aligned}
(A-KCA)X^{-1}(A-KCA)^\top=X^{-1}-I.
\end{aligned}\end{equation}
The identity (82) implies that $\|X^{1/2}(A-KCA)X^{-1/2}\|<1$. According to the definition of convergence, one knows that there exist constants $0\leq\alpha<1$ and $N>0$ such that
\begin{equation}\begin{aligned}
\|X^{1/2}(A-K(k)CA)X^{-1/2}\|\leq\alpha,\ \forall\ k\geq N.
\end{aligned}\end{equation}
Then, it can be derived from (23) and (83) that
\begin{equation}\begin{aligned}
X^{1/2}&\hat{e}(k)=X^{1/2}\big(I-K(k)C\big)AX^{1/2}X^{-1/2}\hat{e}(k-1)\\
&+X^{1/2}\big(I-K(k)C\big)w(k-1)-X^{1/2}K(k)v(k),\nonumber
\end{aligned}\end{equation}
which is exponentially stable. Finally, (80) and (81) can be obtained from the fact that $X^{1/2}$ is invertible.

Additionally, it follows from Theorem \ref{TheoremTIConvergence} that $\rho(A-KCA)<1$ if $\vartheta_i=\sqrt{|\mathbb{O}_i|+1}$. Similar to the proof above, we can conclude that (80) and (81) also hold if $\vartheta_i=\sqrt{|\mathbb{O}_i|+1}$. The proof is completed.
\end{proof}

\section{Simulation}
In this section, a multi-area power system \cite{riverso2012hycon2} is used to validate the effectiveness of the proposed distributed estimator. The dynamical process of the power system is given by
\begin{equation}\begin{aligned}
\dot{x}_{i}=A^c_{ii}(t)x_{i}+\sum_{j\in\mathbb{I}_{i}}A^c_{ij}(t)x_{j}+B^c_i(t)u_{i}+w^c_i,\nonumber
\end{aligned}\end{equation}
where $x_i=[\Delta\theta_i;\Delta\omega_i;\Delta P_{m,i};\Delta P_{\upsilon,i}]$ is the subsystem state, $u_i=[\Delta P_{ref,i};\Delta P_{L,i}]$ is the subsystem input and $w_i$ is the process noises. The system matrices is given by
\begin{equation}\begin{aligned}
A^c_{ii}=\begin{bmatrix}
{0} & {1} & {0} & {0}\\
{-\frac{\sum_{j\in\mathbb{I}_i}P_{ij}}{2H_i}} & {-\frac{D_i}{2H_i}} & {\frac{1}{2H_i}} & {0}\\
{0} & {0} & {-\frac{1}{T_{t,i}}} & {\frac{1}{T_{t,i}}}\\
{0} & {-\frac{1}{R_iT_{g,i}}} & {0} & {-\frac{1}{T_{g,i}}}
\end{bmatrix},\nonumber
\end{aligned}\end{equation}
\begin{equation}\begin{aligned}
A^c_{ij}=\begin{bmatrix}
{0} & {0} & {0} & {0}\\
{\frac{P_{ij}}{2H_i}} & {0} & {0} & {0}\\
{0} & {0} & {0} & {0}\\
{0} & {0} & {0} & {0}
\end{bmatrix}\ (i\neq j),\ B^c_{i}=\begin{bmatrix}
{0} & {0}\\
{0} & {-\frac{1}{2H_i}}\\
{0} & {0}\\
{\frac{1}{T_{g,i}}} & {0}
\end{bmatrix},\nonumber
\end{aligned}\end{equation}
where the time index of the parameters are omitted for brevity. The definitions of the model parameters are as follows
\begin{itemize}
\item $H_i$: inertial constant;
\item $R_i$: speed regulation;
\item $T_{t,i}$: prime mover time constant;
\item $T_{g,i}$: governor time constant;
\item $P_{ij}$: the slope of the power angle curve at the initial operating angle between subsystem $i$ and subsystem $j$;
\item $\Delta\theta_i$: deviation of the angular displacement of the rotor with respect to the stationary reference axis of the stator;
\item $\Delta\omega_i$: speed deviation of the rotating mass from nominal value;
\item $\Delta P_{m,i}$: deviation of the mechanical power from the nominal value;
\item $\Delta P_{\omega,i}$: deviation of the steam valve position from nominal value;
\item $\Delta P_{ref,i}$: deviation of the reference set power from nominal value;
\item $\Delta P_{L,i}$: deviation of the non-frequency-sensitive load change from nominal value.
\end{itemize}
By utilizing the discretization method in \cite{FARINA20133411}, one can obtain the discrete-time representation of the power system, that is,
\begin{equation}\begin{aligned}
x_{i}(k+1)=&A_{ii}(k)x_i(k)+\sum_{j\in\mathbb{I}_{i}}A_{ij}(k)x_{j}(k)\\
&+B_i(k)u_{i}(k)+w_i(k).\nonumber
\end{aligned}\end{equation}
Moreover, the measurement equation is given by
\begin{equation}\begin{aligned}
z_i(k)=\begin{bmatrix}
{1} & {0} & {0} & {0}\\
{0} & {1} & {0} & {0}
\end{bmatrix}x_i(k)+v_i(k),\nonumber
\end{aligned}\end{equation}
where $v_i(k)$ is the measurement noise. In this simulation, the process noise and the measurement noise are set to be uniformly distributed with covariance matrix $I$. The number of subsystem is $s=10$, the sampling period is $1s$, the model parameters change every 100 moments and are selected according to the range provided in \cite{riverso2012hycon2}.

To visualize the estimation performance, we define the root mean-square error (RMSE) of an estimator to be
\begin{equation}\begin{aligned}
\text{RMSE}(k)\triangleq\sqrt{\frac{1}{sM}\sum^s_{i=1}\sum^{M}_{j=1}\|x_{i,j}(k)-\hat{x}_{i,j}(k)\|^2},\nonumber
\end{aligned}\end{equation}
where $x_{i,j}(k)$ and $\hat{x}_{i,j}(k)$ mean the real and estimated states of $i$th subsystem at $j$th Monte Carlo trial, respectively. We choose $M=500$ in this paper. Fig. 1 illustrates the RMSE of various estimation methods. We can see from this figure that the centralized estimator performs slightly better than the proposed distributed estimator. However, in a practical LIS, the implementing centralized methods is often infeasible.
Consequently, despite a minor reduction in estimation accuracy, the proposed distributed estimator offers significant advantages in terms of practicality. Meanwhile, Fig. 1 also shows that the proposed distributed estimator outperforms the distributed estimator in \cite{Riverso6760656}, further demonstrating the advantage of the proposed method. Additionally, Fig. 2 presents the estimation performance of the proposed distributed estimator in the absence of noises. These figures indicate that the RMSE of the distributed estimator approaches 0 after 10 moments. Particularly, Theorem 3 claims that the distributed estimator is marginally stable when $\vartheta_i(k)=\mathbb{I}_i(k)$, however, the simulations reveal that asymptotic stability is also attainable.
\begin{figure}
      \centering
      \includegraphics[scale=0.6]{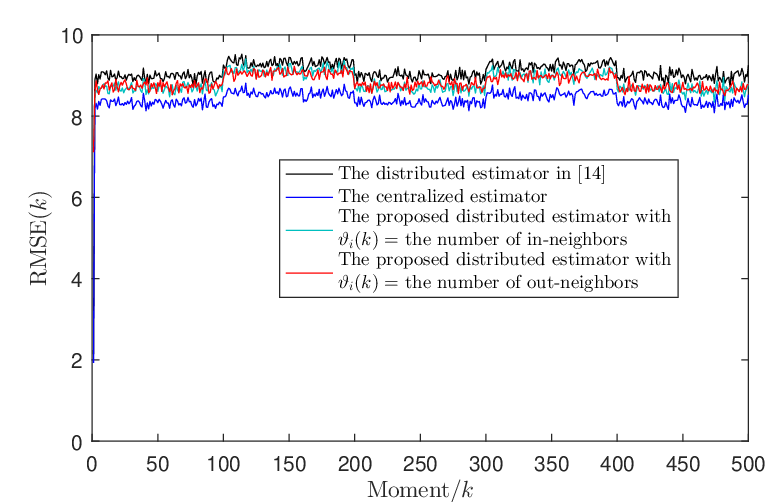}
      \caption{The RMSE of different methods in the time-varying LIS with noises.}
\end{figure}
\begin{figure}
      \centering
      \includegraphics[scale=0.6]{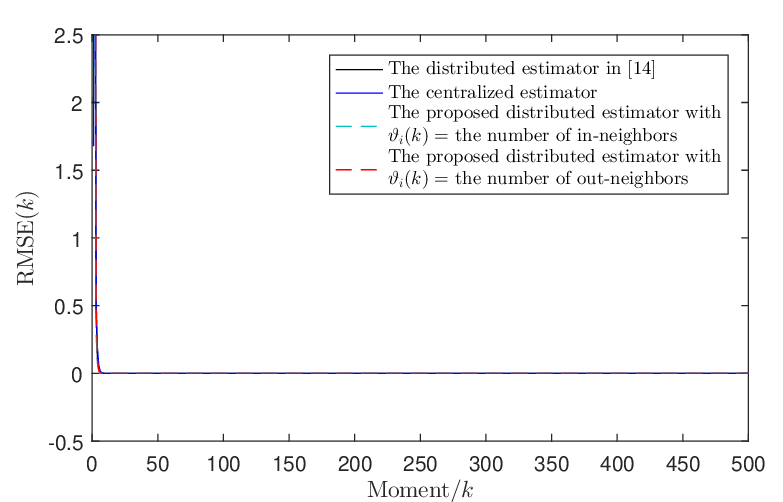}
      \caption{The RMSE of different methods in the time-varying LIS without noises.}
\end{figure}
\begin{figure}
      \centering
      \includegraphics[scale=0.6]{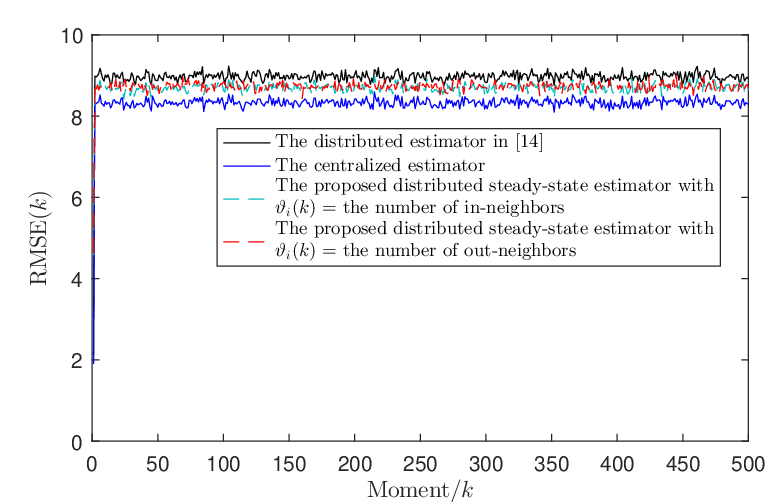}
      \caption{The RMSE of different methods in the time-invariant LIS with noises.}
\end{figure}
\begin{figure}
      \centering
      \includegraphics[scale=0.6]{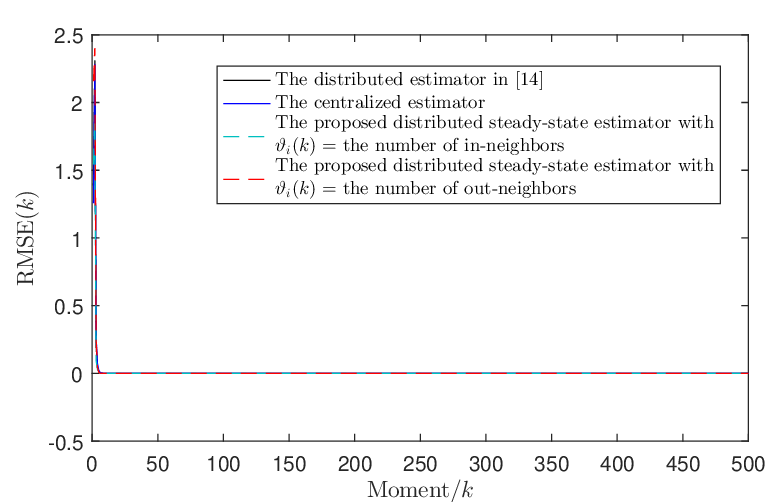}
      \caption{The RMSE of different methods in the time-invariant LIS without noises.}
\end{figure}

We then proceed to evaluate the performance of the proposed distributed estimator in the time-invariant case. Given that the proposed estimator converges in this context, we perform the steady-state version to reduce the real-time communication burden among subsystems. Figs. 3 and 4 show the RMSE of the proposed distributed estimator with and without system noises, respectively. Clearly, the proposed methods still achieve good estimation performance in the time-invariant case. Moreover, the simulation illustrates that the performance of the proposed distributed estimator is minimally affected by the decoupling variable $\vartheta_i$. In fact, it can be verified that $\sum^s_{i=1}\mathbb{I}_i=\sum^s_{i=1}\mathbb{O}_i=|\{(i,j):A_{ij}\neq0,i\neq j\}|$, which indirectly shows that the two different options, $\vartheta_i=\mathbb{I}_i$ or $\vartheta_i=\mathbb{O}_i$, has similar effects on the overall performance of the proposed distributed estimator.

\section{Conclusion}
In this paper, a fully distributed estimator was presented for the LIS, whose local gain was obtained by recursively solving the DMRE with decoupling variables. We developed Lyapunov-based and Markov-based methods for stability analysis to demonstrate that the stability of each subsystem is independent of the global LIS when the decoupling variable coincides either with the number of out-neighbors or in-neighbors. We also showed that any LIS can be equivalently represented as a Markov system. It was proved that the distributed estimator is stable if the DMRE is uniformly bounded from above. Additionally, we established that the local uniform detectability guarantees the stability of the distributed estimator under the condition of weak coupling. When the LIS is time-invariant, we formulated a necessary and sufficient condition for bounding the DMRE using the LMI. Based on the boundedness result, we also demonstrated that the distributed estimator converges to a unique steady state for any initial condition. Finally, the effectiveness of the proposed method was verified using a multi-area power system.

\section*{Appendix}
The appendix shows some lemmas used in this paper. Although these lemmas are trivial generalizations of existing conclusions, we give their proofs for self-contained purposes.

\setcounter{lemma}{0}
\renewcommand{\thelemma}{A\arabic{lemma}}

\begin{lemma}\label{LemmaA1}
Let $\{X(k)\}_{k\geq0}$ and $\{Y(k)\}_{k\geq1}$ be two bounded matrix sequences. The time-varying matrix pair $\big(Y(k)X(k-1),X(k-1)\big)$ is uniformly detectable if the time-varying matrix pair $\big(Y(k),X(k)\big)$ is uniformly detectable.
\end{lemma}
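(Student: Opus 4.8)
The plan is to realize Lemma~\ref{LemmaA1} as a one–step ``time shift'' of Definition~\ref{DeUD}. Write $\tilde X(k)\triangleq X(k-1)$ and $\tilde C(k)\triangleq Y(k)X(k-1)$, so that the pair under consideration is $\big(\tilde C(k),\tilde X(k)\big)$, and let $\mathcal O_N(k)\triangleq\sum_{i=0}^{N}\Phi_X^\top(k+i,k)Y^\top(k+i)Y(k+i)\Phi_X(k+i,k)$ denote the observability Gramian attached to $\big(Y(k),X(k)\big)$. First I would record two elementary identities obtained by counting factors in the definition of $\Phi$ in Definition~\ref{DeUR}: $\Phi_{\tilde X}(k+t,k)=\Phi_X(k+t-1,k-1)$, and $\tilde C(k+i)\Phi_{\tilde X}(k+i,k)=Y(k+i)\Phi_X(k+i,k)X(k-1)$. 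The second identity immediately yields the key relation that the Gramian of $\big(\tilde C,\tilde X\big)$ along $\xi$ at time $k$ equals $\eta^\top\mathcal O_N(k)\,\eta$ with $\eta\triangleq X(k-1)\xi$; equivalently, the output string that $\big(\tilde C,\tilde X\big)$ produces from $\xi$ at time $k$ is exactly the output string that $\big(Y,X\big)$ produces from $\eta$ at time $k$.

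Next, let $\mu,\nu,\gamma,\sigma$ be uniform–detectability parameters of $\big(Y(k),X(k)\big)$ and put $L\triangleq\sup_k\|X(k)\|<\infty$. I would try the horizons $\mu'=\mu+1$, $\nu'=\nu$ for the new pair and, under the test hypothesis $\|\Phi_{\tilde X}(k+\mu+1,k)\xi\|\ge\gamma'\|\xi\|$, argue as follows. By the first shift identity this hypothesis reads $\|\Phi_X(k+\mu,k)\eta\|\ge\gamma'\|\xi\|$, and since $\|\Phi_X(k+\mu,k)\|\le L^{\mu}$ it also forces $\|\eta\|\ge(\gamma'/L^{\mu})\|\xi\|$, so the map $\xi\mapsto X(k-1)\xi$ loses no ``size'' on the set where the test fires. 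If moreover $\|\Phi_X(k+\mu,k)\eta\|\ge\gamma\|\eta\|$, then applying the uniform detectability of $\big(Y(k),X(k)\big)$ at time $k$ to the vector $\eta$ gives $\eta^\top\mathcal O_\nu(k)\eta\ge\sigma\|\eta\|^2$, and combining this with the Gramian relation above produces the desired bound $\sigma'\|\xi\|^2$ with $\sigma'=\sigma(\gamma'/L^{\mu})^2$. The complementary case $\|\Phi_X(k+\mu,k)\eta\|<\gamma\|\eta\|$ forces $\gamma'\|\xi\|\le\|\Phi_X(k+\mu,k)\eta\|<\gamma\|\eta\|\le\gamma L\|\xi\|$, i.e. $\gamma'<\gamma L$; hence it cannot occur once $\gamma'$ is taken in $[\gamma L,1)$.

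The hard part is precisely that last step: the interval $[\gamma L,1)$ is nonempty only when $\gamma L<1$, which is not guaranteed. I would handle this by first rewriting the uniform detectability of $\big(Y(k),X(k)\big)$ in an equivalent ``single–horizon'' form — there exist $N\ge1$, $0\le\alpha<1$ and $\beta>0$ with $\Phi_X^\top(k+N,k)\Phi_X(k+N,k)\le\alpha I+\beta\,\mathcal O_{N-1}(k)$ for all $k$ (this normalization is classical, and boundedness of $\{X(k)\}$ is used here) — and then pushing this inequality through the coordinate change; the congruence by $X(k-1)$ turns $\alpha I$ into $\alpha X^\top(k-1)X(k-1)\le\alpha L^{2}I$, after which iterating over a long enough horizon and re-extracting Definition~\ref{DeUD}-type constants for $\big(\tilde C(k),\tilde X(k)\big)$ finishes the job (boundedness of $\{Y(k)\}$ is used only to bound the Gramian from above). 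A shorter alternative, if one is willing to quote more, is the Riccati characterization of uniform detectability: uniform detectability plus boundedness makes the filter Riccati of $\big(Y(k),X(k)\big)$ with $Q=R=I$ uniformly bounded; a direct substitution shows that the filter Riccati of $\big(\tilde C(k),\tilde X(k)\big)$ is dominated by the Riccati of $\big(Y(k),X(k)\big)$ driven by the bounded process noise $X(k)X^\top(k)$, hence it too is uniformly bounded; and a uniformly bounded filter Riccati with uniformly positive definite process noise is equivalent to uniform detectability. I would present whichever of these two variants is most consistent with the results already used in the paper.
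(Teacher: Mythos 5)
Your argument is correct, but it takes a genuinely different route from the paper. The paper never touches Definition~\ref{DeUD} directly: it uses the characterization of uniform detectability by the existence of a bounded, exponentially stabilizing output injection, observes that the closed loop of the shifted pair, $\big(I+Z(k)Y(k)\big)X(k-1)$, telescopes under multiplication into a product of closed-loop matrices $X(k)\big(I+Z(k)Y(k)\big)$ of the original pair (up to boundedly many edge factors), and then invokes Corollary~3.4 of Anderson--Moore to convert exponential stability of that product back into uniform detectability of $\big(Y(k)X(k-1),X(k-1)\big)$. You instead verify Definition~\ref{DeUD} head-on through the coordinate change $\eta=X(k-1)\xi$, and your bookkeeping there is right: the shift identities, the Gramian relation, and the lower bound $\|\eta\|\ge(\gamma'/L^{\mu})\|\xi\|$ all check out, and you correctly diagnose that the naive two-case split dies when $\gamma L\ge1$. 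Your repair via the single-horizon normalization $\Phi_X^\top(k+N,k)\Phi_X(k+N,k)\le\alpha I+\beta\,\mathcal O_{N-1}(k)$ with $\alpha<1$ also works: congruence by $X(k-1)$ degrades $\alpha$ to $\alpha L^2$, and iterating the inequality over $m$ blocks (which keeps $\beta$ summable since $\alpha<1$) drives the constant down to $\alpha^mL^2<1$, after which Definition~\ref{DeUD}-type constants are recovered exactly as you describe. The one caveat worth stating is that this normalization is where all the real content lives: the implication from the conditional form of Definition~\ref{DeUD} to the single-inequality form with $\alpha<1$ is itself proved via the Anderson--Moore stabilizing-injection theorem, so your proof quietly leans on the same deep result the paper uses explicitly; what your version buys is a self-contained verification at the level of the Gramian definition, with the role of the boundedness of $\{X(k)\}$ and $\{Y(k)\}$ made completely explicit, at the cost of being longer. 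I would present the single-horizon variant and drop the Riccati alternative, which as sketched leaves the domination claim for the shifted pair's Riccati unverified.
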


\begin{proof}
Consider the closed-loop system
\begin{equation}\begin{aligned}
\bar{\zeta}(k)=\bar{X}(k-1)\bar{\zeta}(k-1),
\end{aligned}\end{equation}
where
\begin{equation}\begin{aligned}
\bar{X}(k-1)=X(k-1)+Z(k)Y(k)X(k-1).\nonumber
\end{aligned}\end{equation}
Then, one can obtain
\begin{equation}\begin{aligned}
\bar{\zeta}(k+1)=&\big(I+Z(k)Y(k)\big)\hat{X}(k-1)\\
&\times\cdots\times\hat{X}(1)X(0)\bar{\zeta}(0),
\end{aligned}\end{equation}
where
\begin{equation}\begin{aligned}
\hat{X}(k)=X(k)+X(k)Z(k)Y(k)\big).\nonumber
\end{aligned}\end{equation}
According to (85) and the Kalman filtering theory \cite{Optimalfiltering}, there exists a bounded matrix sequence $\{Z(k)\}_{k\geq1}$ such that the closed-loop system (84) is exponentially stable if $\big(Y(k),X(k)\big)$ is uniformly detectable. Then, it follows from the corollary 3.4 in \cite{Anderson0319002} that $\big(Y(k)X(k-1),X(k-1)\big)$ is uniformly detectable. The proof is completed.
\end{proof}

\begin{lemma}\label{LemmaA2}
$\|X_N\times\cdots\times X_1\|\leq\alpha_1$ implies that $\|(X_N\otimes X_N)\times\cdots\times (X_1\otimes X_1)\|\leq \sqrt{n}\alpha_1^2$, where $X_1,\cdots,X_N\in\mathbb{R}^{n\times n}$.
\end{lemma}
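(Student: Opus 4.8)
The plan is to reduce the long product of Kronecker products to a single Kronecker power and then estimate its norm. First I would repeatedly invoke the mixed-product property $(A\otimes B)(C\otimes D)=(AC)\otimes(BD)$, which by a trivial induction on $N$ gives
\[
(X_N\otimes X_N)\cdots(X_1\otimes X_1)=M\otimes M,\quad M\triangleq X_N\cdots X_1.
\]
This is the only structural step; everything afterwards is a norm estimate on $M\otimes M$, for which the hypothesis already supplies $\|M\|\leq\alpha_1$.

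Next I would bound $\|M\otimes M\|$ by $\|M\|^2$. The cleanest route is to use that the singular values of $M\otimes M$ are exactly the pairwise products $\sigma_i(M)\sigma_j(M)$ of the singular values of $M$, so that $\|M\otimes M\|=\sigma_{\max}(M)^2=\|M\|^2$. If one prefers to avoid quoting the singular-value structure of a Kronecker product, an equally short substitute is: for any $v$ reshaped column-wise into $V\in\mathbb{R}^{n\times n}$ one has $(M\otimes M)v=\mathrm{vec}(MVM^\top)$, hence $\|(M\otimes M)v\|=\|MVM^\top\|_F\leq\|M\|^2\|V\|_F=\|M\|^2\|v\|$, giving $\|M\otimes M\|\leq\|M\|^2$. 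Either way, $\|M\otimes M\|\leq\|M\|^2\leq\alpha_1^2$, and since $n\geq1$ this is $\leq\sqrt{n}\,\alpha_1^2$, which is the claimed bound.

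I do not expect a real obstacle here; the lemma is elementary, and the factor $\sqrt{n}$ in the statement is simply slack (even a cruder comparison, e.g. through $\|\cdot\|\leq\|\cdot\|_F$ together with $\|A\otimes B\|_F=\|A\|_F\|B\|_F$, stays within it). The only points needing minor care are making the first reduction rigorous for an $N$-fold product (the induction using the mixed-product identity) and, if the singular-value fact is to be avoided, inserting the $\mathrm{vec}$-based estimate above in its place.
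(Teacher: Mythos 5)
Your proposal is correct, and it is in fact sharper than what the paper proves. The paper's own argument also rests on the vectorization identity $\mathrm{vec}[MYM^\top]=(M\otimes M)\mathrm{vec}[Y]$, but it picks a unit vector $\mathrm{vec}[Y]$ achieving the operator norm of $M\otimes M$ and then converts back to the spectral norm via $\|\mathrm{vec}[X]\|=\|X\|_F\leq\sqrt{n}\|X\|$; that last conversion is exactly where the factor $\sqrt{n}$ enters. You instead keep the Frobenius norm on both sides of the estimate, $\|(M\otimes M)v\|=\|MVM^\top\|_F\leq\|M\|^2\|V\|_F=\|M\|^2\|v\|$, which yields the tight bound $\|M\otimes M\|=\|M\|^2\leq\alpha_1^2$ with no dimensional factor; the $\sqrt{n}$ in the statement is then pure slack, as you note. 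Your preliminary reduction via the mixed-product property to $M\otimes M$ with $M=X_N\cdots X_1$ is also a cleaner way to organize the argument than the paper's direct manipulation of the $N$-fold product, though it is mathematically equivalent. Either version suffices for the paper's use of the lemma (establishing geometric decay of $\Phi_\Psi$), but your estimate is the one to prefer if the constant ever mattered.
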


\begin{proof}
It is obvious that
\begin{equation}\begin{aligned}
\|X_N\times\cdots\times X_1Y(X_N\times\cdots\times X_1)^\top\|\leq\alpha_1^2
\end{aligned}\end{equation}
for any $Y$ satisfying $\|Y\|=1$. Then, utilizing the vectorization of matrices yields
\begin{equation}\begin{aligned}
&\mathrm{vec}[X_N\times\cdots\times X_1Y(X_N\times\cdots\times X_1)^\top]\\
&\ \ \ \ \ \ \ \ \ \ =\big((X_N\otimes X_N)\times\cdots\times (X_1\otimes X_1)\big)\mathrm{vec}[Y].
\end{aligned}\end{equation}
Recall the definition of the matrix norm, one knows that there exists $Y$ such that
\begin{equation}\begin{aligned}
&\|\big((X_N\otimes X_N)\times\cdots\times (X_1\otimes X_1)\big)\mathrm{vec}[Y]\|\\
&\ \ \ \ \ \ \ \ \ \ =\|\big((X_N\otimes X_N)\times\cdots\times (X_1\otimes X_1)\big)\|.
\end{aligned}\end{equation}
Moreover, it is easy to verify $\|\mathrm{vec}[X]\|\leq \sqrt{n}\|X\|$ for any $X\in\mathbb{R}^{n\times n}$. With this inequality, the lemma can be obtained from (86)-(88). The proof is completed.
\end{proof}

\begin{lemma}\label{LemmaA3}
Let $\mathfrak{T}(X)=\sum^s_{i=1}A_iXA_i^\top$, then the adjoint operator of $\mathfrak{T}$ is given by $\mathfrak{T}^*(X)=\sum^s_{i=1}A_i^\top XA_i$.
\end{lemma}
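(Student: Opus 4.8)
The plan is to verify the defining property of the adjoint directly from the inner product $\langle X,Y\rangle\triangleq\mathrm{Tr}(Y^\top X)$ fixed in the notation section. Concretely, I would show that the operator $\mathfrak{T}^*(X)\triangleq\sum_{i=1}^s A_i^\top X A_i$ satisfies $\langle\mathfrak{T}(X),Y\rangle=\langle X,\mathfrak{T}^*(Y)\rangle$ for every pair $X,Y\in\mathbb{S}$; since the adjoint of a bounded linear operator on a Hilbert space is unique, this identity pins down $\mathfrak{T}^*$ as the claimed expression.

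First I would expand the left-hand side using linearity and the definition of the inner product,
\[
\langle\mathfrak{T}(X),Y\rangle=\mathrm{Tr}\Big(Y^\top\sum_{i=1}^s A_iXA_i^\top\Big)=\sum_{i=1}^s\mathrm{Tr}\big(Y^\top A_iXA_i^\top\big).
\]
Then, applying the cyclic invariance of the trace to each summand, $\mathrm{Tr}(Y^\top A_iXA_i^\top)=\mathrm{Tr}(A_i^\top Y^\top A_iX)$. Recognizing that $A_i^\top Y^\top A_i=(A_i^\top Y A_i)^\top$, each term becomes $\mathrm{Tr}\big((A_i^\top Y A_i)^\top X\big)=\langle X,A_i^\top Y A_i\rangle$. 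Summing over $i$ and using bilinearity of $\langle\cdot,\cdot\rangle$ gives $\langle X,\sum_{i=1}^s A_i^\top Y A_i\rangle=\langle X,\mathfrak{T}^*(Y)\rangle$, which is exactly the required relation.

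There is essentially no hard step here: the argument uses only the cyclic property of the trace, the reversal identity $(ABC)^\top=C^\top B^\top A^\top$, and bilinearity of the inner product. The one point worth stating carefully is that $\mathfrak{T}$, and hence $\mathfrak{T}^*$, maps the relevant Hilbert space $\mathbb{S}$ into itself and is bounded, so that invoking uniqueness of the adjoint is legitimate; in the applications in this paper $\mathbb{S}$ is the finite-dimensional space of (symmetric) matrices, where boundedness is automatic and the computation above already displays the adjoint explicitly, so uniqueness is not even strictly needed.
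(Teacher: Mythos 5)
Your proposal is correct and follows essentially the same route as the paper: both verify the adjoint identity $\langle\mathfrak{T}(X),Y\rangle=\langle X,\mathfrak{T}^*(Y)\rangle$ by a direct trace computation using cyclic invariance and the transpose reversal identity. The only cosmetic difference is that the paper first rewrites $\mathrm{Tr}(Y^\top\mathfrak{T}(X))$ as $\mathrm{Tr}\big((\mathfrak{T}(X))^\top Y\big)$ before cycling, whereas you cycle directly and then transpose inside the trace; the substance is identical.
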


\begin{proof}
It is trivial that
\begin{equation}\begin{aligned}
&\langle\mathfrak{T}(X),Y\rangle=\mathrm{Tr}(Y^\top\mathfrak{T}(X))=\mathrm{Tr}\big((\mathfrak{T}(X))^\top Y\big)\\
=&\sum^s_{i=1}\mathrm{Tr}(A_iX^\top A_i^\top Y)=\sum^s_{i=1}\mathrm{Tr}(X^\top A_i^\top YA_i)\\
=&\mathrm{Tr}(X^\top\sum^s_{i=1}(A_i^\top YA_i))=\mathrm{Tr}(\sum^s_{i=1}(A_i^\top YA_i)X^\top)\\
=&\langle X,\mathfrak{T}^*(Y)\rangle.\nonumber
\end{aligned}\end{equation}
The proof is completed.
\end{proof}

\section*{Reference}


\begin{thebibliography}{10}

\bibitem{Feddema1068004}
J.~Feddema, C.~Lewis, and D.~Schoenwald, ``Decentralized control of cooperative
  robotic vehicles: theory and application,'' {\em IEEE Transactions on
  Robotics and Automation}, vol.~18, no.~5, pp.~852--864, 2002.

\bibitem{riverso2012hycon2}
S.~Riverso and G.~Ferrari-Trecate, ``Hycon2 benchmark: Power network system,''
  {\em arXiv preprint arXiv:1207.2000}, 2012.

\bibitem{FARINA201838}
M.~Farina, X.~Zhang, and R.~Scattolini, ``A hierarchical multi-rate mpc scheme
  for interconnected systems,'' {\em Automatica}, vol.~90, pp.~38--46, 2018.

\bibitem{Boem8305620}
F.~Boem, S.~Riverso, G.~Ferrari-Trecate, and T.~Parisini, ``Plug-and-play fault
  detection and isolation for large-scale nonlinear systems with stochastic
  uncertainties,'' {\em IEEE Transactions on Automatic Control}, vol.~64,
  no.~1, pp.~4--19, 2019.

\bibitem{Khan4547458}
U.~A. Khan and J.~M.~F. Moura, ``Distributing the kalman filter for large-scale
  systems,'' {\em IEEE Transactions on Signal Processing}, vol.~56, no.~10,
  pp.~4919--4935, 2008.

\bibitem{Haber6553105}
A.~Haber and M.~Verhaegen, ``Moving horizon estimation for large-scale
  interconnected systems,'' {\em IEEE Transactions on Automatic Control},
  vol.~58, no.~11, pp.~2834--2847, 2013.

\bibitem{CHEN2019228}
B.~Chen, G.~Hu, D.~W. Ho, and L.~Yu, ``Distributed kalman filtering for
  time-varying discrete sequential systems,'' {\em Automatica}, vol.~99,
  pp.~228--236, 2019.

\bibitem{Chen9416784}
B.~Chen, G.~Hu, D.~W. Ho, and L.~Yu, ``Distributed estimation and control for
  discrete time-varying interconnected systems,'' {\em IEEE Transactions on
  Automatic Control}, vol.~67, no.~5, pp.~2192--2207, 2022.

\bibitem{ZHANG2023111144}
Y.~Zhang, B.~Chen, and L.~Yu, ``Distributed zonotopic estimation for
  interconnected systems: A fusing overlapping states strategy,'' {\em
  Automatica}, vol.~155, p.~111144, 2023.

\bibitem{Wang9385997}
Y.~Wang, J.~Xiong, and D.~W.~C. Ho, ``Distributed lmmse estimation for
  large-scale systems based on local information,'' {\em IEEE Transactions on
  Cybernetics}, vol.~52, no.~8, pp.~8528--8536, 2022.

\bibitem{Mu10364032}
Y.~Mu, H.~Su, D.~Yang, and H.~Zhang, ``Distributed state estimation design for
  discrete-time interconnected singular systems: A lmi method,'' in {\em 2023
  International Annual Conference on Complex Systems and Intelligent Science
  (CSIS-IAC)}, pp.~726--731, 2023.

\bibitem{Zhang10480463}
Y.~Zhang, B.~Chen, L.~Yu, and D.~W. Ho, ``Distributed estimation for
  interconnected systems with arbitrary coupling structures,'' {\em IEEE
  Transactions on Network Science and Engineering}, vol.~11, no.~4,
  pp.~3667--3677, 2024.

\bibitem{Stefano1040844}
S.~Riverso, D.~Rubini, and G.~Ferrari-Trecate, ``Distributed bounded-error
  state estimation based on practical robust positive invariance,'' {\em
  International Journal of Control}, vol.~88, no.~11, pp.~2277--2290, 2015.

\bibitem{Riverso6760656}
S.~Riverso, M.~Farina, R.~Scattolini, and G.~Ferrari-Trecate, ``Plug-and-play
  distributed state estimation for linear systems,'' in {\em 52nd IEEE
  Conference on Decision and Control}, pp.~4889--4894, 2013.

\bibitem{FARINA2010910}
M.~Farina, G.~Ferrari-Trecate, and R.~Scattolini, ``Moving-horizon
  partition-based state estimation of large-scale systems,'' {\em Automatica},
  vol.~46, no.~5, pp.~910--918, 2010.

\bibitem{Roshany6075282}
S.~Roshany-Yamchi, M.~Cychowski, R.~R. Negenborn, B.~De~Schutter, K.~Delaney,
  and J.~Connell, ``Kalman filter-based distributed predictive control of
  large-scale multi-rate systems: Application to power networks,'' {\em IEEE
  Transactions on Control Systems Technology}, vol.~21, no.~1, pp.~27--39,
  2013.

\bibitem{UHLMANN2003201}
J.~K. Uhlmann, ``Covariance consistency methods for fault-tolerant distributed
  data fusion,'' {\em Information Fusion}, vol.~4, no.~3, pp.~201--215, 2003.

\bibitem{Farina7762729}
M.~Farina and R.~Carli, ``Partition-based distributed kalman filter with plug
  and play features,'' {\em IEEE Transactions on Control of Network Systems},
  vol.~5, no.~1, pp.~560--570, 2018.

\bibitem{FARINA2023100820}
M.~Farina and M.~Rocca, ``A novel distributed algorithm for estimation and
  control of large-scale systems,'' {\em European Journal of Control}, vol.~72,
  p.~100820, 2023.

\bibitem{Siljak1978LargeScaleDS}
D.~D. Siljak, ``Large-scale dynamic systems: Stability and structure,'' 1978.

\bibitem{Optimalfiltering}
B.~D. Anderson and J.~B. Moore, {\em Optimal filtering}.
\newblock Courier Corporation, 2012.

\bibitem{rugh1996linear}
W.~J. Rugh, {\em Linear system theory}.
\newblock Prentice-Hall, Inc., 1996.

\bibitem{ZHANG201784}
Q.~Zhang, ``On stability of the kalman filter for discrete time output error
  systems,'' {\em Systems \& Control Letters}, vol.~107, pp.~84--91, 2017.

\bibitem{Anderson0319002}
B.~D.~O. Anderson and J.~B. Moore, ``Detectability and stabilizability of
  time-varying discrete-time linear systems,'' {\em SIAM Journal on Control and
  Optimization}, vol.~19, no.~1, pp.~20--32, 1981.

\bibitem{moore1980coping}
J.~Moore and B.~D. Anderson, ``Coping with singular transition matrices in
  estimation and control stability theory,'' {\em International Journal of
  Control}, vol.~31, no.~3, pp.~571--586, 1980.

\bibitem{boyd1994linear}
S.~Boyd, L.~El~Ghaoui, E.~Feron, and V.~Balakrishnan, {\em Linear matrix
  inequalities in system and control theory}.
\newblock SIAM, 1994.

\bibitem{Sinopoli1333199}
B.~Sinopoli, L.~Schenato, M.~Franceschetti, K.~Poolla, M.~Jordan, and
  S.~Sastry, ``Kalman filtering with intermittent observations,'' {\em IEEE
  Transactions on Automatic Control}, vol.~49, no.~9, pp.~1453--1464, 2004.

\bibitem{WANG1992139}
Y.~Wang, L.~Xie, and C.~E. {de Souza}, ``Robust control of a class of uncertain
  nonlinear systems,'' {\em Systems \& Control Letters}, vol.~19, no.~2,
  pp.~139--149, 1992.

\bibitem{vidyasagar2002nonlinear}
M.~Vidyasagar, {\em Nonlinear systems analysis}.
\newblock SIAM, 2002.

\bibitem{Fan992124}
F.~Wang and V.~Balakrishnan, ``Robust kalman filters for linear time-varying
  systems with stochastic parametric uncertainties,'' {\em IEEE Transactions on
  Signal Processing}, vol.~50, no.~4, pp.~803--813, 2002.

\bibitem{costa2006discrete}
O.~L.~V. Costa, M.~D. Fragoso, and R.~P. Marques, {\em Discrete-time Markov
  jump linear systems}.
\newblock Springer Science \& Business Media, 2006.

\bibitem{FARINA20133411}
M.~Farina, P.~Colaneri, and R.~Scattolini, ``Block-wise discretization
  accounting for structural constraints,'' {\em Automatica}, vol.~49, no.~11,
  pp.~3411--3417, 2013.
\end{thebibliography}
\end{document}